\newcommand{\citet}{\cite}
\newcommand{\citep}{\cite}
\newcommand{\citealp}{\cite}
\renewcommand{\emph}{\textit}
\newcommand{\empht}{\textit}
\newcommand{\lca}{\ensuremath{\operatorname{lca}}}
\newcommand{\rt}[1]{\ensuremath{\mathsf{#1}}}     %font for rooted triples
\newcommand{\mc}[1]{\ensuremath{\mathcal{#1}}}
\newcommand{\CoT}{\ensuremath{\operatorname{CoT}}}
\newcommand{\cl}{\ensuremath{\operatorname{cl}}}
\newcommand{\Gen}{\ensuremath{\mathfrak{G}}}
\newcommand{\Spe}{\ensuremath{\mathfrak{S}}}
\newcommand{\symdiff}{\vartriangle}
\newcommand{\al}{\alpha}
\newcommand{\be}{\beta}
\newcommand{\ga}{\gamma}
\newcommand{\de}{\delta}
\renewcommand{\S}{\mathcal{L}}
\newcommand{\ALF}{\texttt{ALF}}
\newenvironment{cthm}[1]
  {\innercustomthm}
  {\endinnercustomthm}
\newenvironment{cpro}[1]
  {\innercustompro}
  {\endinnercustompro}
\newcounter{ILPcounter}
\newenvironment{ILP}
{% 
 % \vspace{-1.4em}%
  \refstepcounter{ILPcounter}
  \start@align\@ne\st@rredtrue\m@ne
  \tag{\textbf{\texttt{\small ILP\ {\theILPcounter}}}}
%  \scriptstyle 
 % \vspace{-1.5em}%
}{%
  \endalign
}
\providecommand{\keywords}[1]{\textbf{\textit{Keywords: }} #1}
\newtheorem{thm}{Theorem}%[section]
\newtheorem{lem}{Lemma}
\newtheorem{proposition}{Proposition}
\newtheorem{defi}{Definition}
\newtheorem{rem}{Remark}
\title[Phylogenomics with Paralogs]{Phylogenomics with Paralogs}
\author[Hellmuth et.\ al]{Marc Hellmuth \and Nicolas Wieseke \and Marcus Lechner  \and Hans-Peter Lenhof \and Martin Middendorf \and Peter F. Stadler}
\thanks{  We thank Jiong Guo, Leo van Iersel, Daniel St\"ockel, and Jakob
  L. Andersen for helpful comments 
  on the cograph 
	editing problem and the
  ILP formulation. 
  This work was funded by the German Research Foundation
  (DFG) (Proj.\ No.\ MI439/14-1). \\
All addresses and author information at \url{http://www.pnas.org/content/112/7/2058}
}
\subjclass{}
\keywords{Orthology, Paralogy, Gene Tree, Species Tree, 
           Triples, Cograph}
\date{\today}
\begin{document}

\maketitle
\begin{abstract}
  Phylogenomics heavily relies on well-curated sequence data sets that
  consist, for each gene, exclusively of 1:1-orthologous. Paralogs are
  treated as a dangerous nuisance that has to be detected and removed. We
  show here that this severe restriction of the data sets is not
  necessary. Building upon recent advances in mathematical phylogenetics we
  demonstrate that gene duplications convey meaningful phylogenetic
  information and allow the inference of plausible phylogenetic trees,
  provided orthologs and paralogs can be distinguished with a degree of
  certainty.  Starting from tree-free estimates of orthology, cograph
  editing can sufficiently reduce the noise in order to find correct
  event-annotated gene trees. The information of gene trees can then
  directly be translated into constraints on the species trees. While the
  resolution is very poor for individual gene families, we show that
  genome-wide data sets are sufficient to generate fully resolved
  phylogenetic trees, even in the presence of horizontal gene transfer.

We demonstrate that the distribution of paralogs in large
  gene families contains in itself sufficient phylogenetic signal to infer
  fully resolved species phylogenies. This source of phylogenetic
  information is independent of information contained in orthologous
  sequences and is resilient against horizontal gene transfer. An important
  consequence is that phylogenomics data sets need not be restricted to 1:1
  orthologs.
\end{abstract}

%\title{Characterization of Gene Trees containing HGT and reconcilable Species Trees}

%\author[1]{Manuela Gei{\ss}}
%\author[2,3]{Marc Hellmuth}
%\author[2]{Yangjing Long}
%\author[1,5,6,7]{Peter F.\ Stadler}

%\affil[1]{Bioinformatics Group, Department of Computer Science; and
%		    Interdisciplinary Center of Bioinformatics, University of Leipzig, \\
%			 H{\"a}rtelstra{\ss}e 16-18, D-04107 Leipzig}
%\affil[2]{\footnotesize Dpt.\ of Mathematics and Computer Science, University of Greifswald, Walther-
%  Rathenau-Strasse 47, D-17487 Greifswald, Germany \\
%	\texttt{mhellmuth@mailbox.org} }
%\affil[3]{Saarland University, Center for Bioinformatics, Building E 2.1, P.O.\ Box 151150, D-66041 Saarbr{\"u}cken, Germany }
%\affil[4]{Department of Mathematics and Computer Science,
%		    University of Southern Denmark, Denmark }
%\affil[5]{Max-Planck-Institute for Mathematics in the Sciences, \\
%  Inselstra{\ss}e 22, D-04103 Leipzig}
%\affil[6]{Inst.\ f.\ Theoretical Chemistry, University of Vienna, \\
%  W{\"a}hringerstra{\ss}e 17, A-1090 Wien, Austria}
%\affil[7]{Santa Fe Institute, 1399 Hyde Park Rd., Santa Fe, USA} 
%\affil[8]{Parallel Computing and Complex Systems Group \\
%  Department of Computer Science, 
%  Leipzig University \\
%  Augustusplatz 10, 04109, Leipzig, Germany}
%\affil[8]{Parallel Computing and Complex Systems Group \\
%  Department of Computer Science, 
%  Leipzig University \\
%  Augustusplatz 10, 04109, Leipzig, Germany}

\sloppy

\section{Introduction}
Molecular phylogenetics is primarily concerned with the
reconstruc\-tion of evolutionary relationships between species based on
sequence information.  To this end, alignments of protein or DNA sequences
are employed, whose evolutionary history is believed to be congruent to that
of the respective species.  This property can be ensured most easily in the
absence of gene duplications and horizontal gene transfer. 
Phylogenetic studies judiciously select
families of genes that rarely exhibit duplications (such as rRNAs, most
ribosomal proteins, and many of the housekeeping enzymes). In
phylogenomics, elaborate automatic pipelines such as \texttt{HaMStR}
\cite{Ebersberger:09}, are used to filter genome-wide data sets to at least
deplete sequences with detectable paralogs (homologs in the same species).

\begin{figure}[t]
\includegraphics[bb= 33 628 523 776, scale=1]{./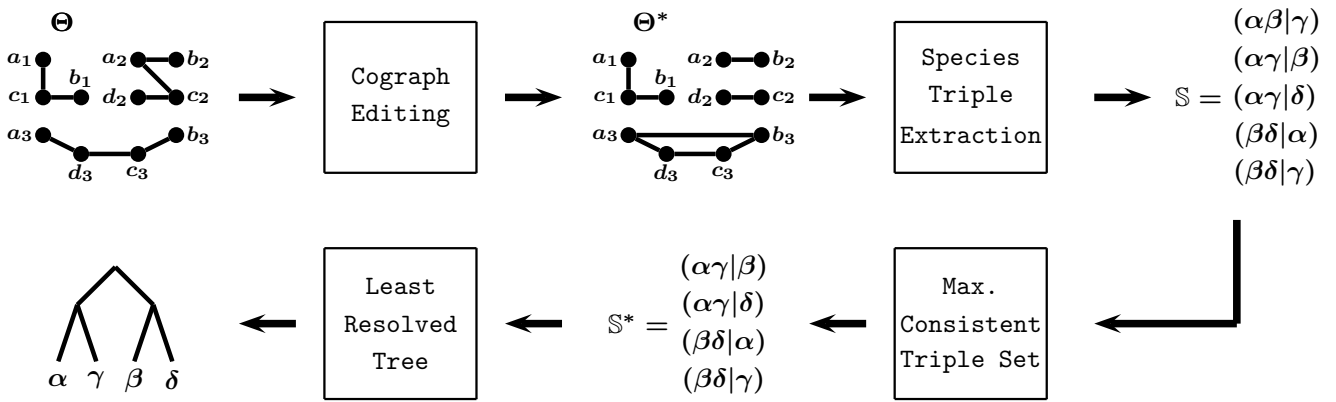}
    \caption{Outline of the computational framework.
	Starting from an estimated
    orthology relation $\Theta$, its graph representation $G_{\Theta}$ is
    edited to obtain the closest cograph $G_{\Theta^*}$, which in turn is
    equivalent to a (not necessarily fully resolved) gene tree $T$ and an
    event labeling $t$.  From $(T,t)$ we extract the set $\mathbb{S}$ of
    all relevant species triples. As the triple set $\mathbb{S}$ need not
    to be consistent, we compute the maximal consistent subset
    $\mathbb{S^*}$ of $\mathbb{S}$. Finally, we construct a least resolved
    species tree from  $\mathbb{S^*}$.} 
  \label{fig:workflow}
\end{figure}

In the presence of gene duplications, however, it becomes necessary to
distinguish between the evolutionary history of genes (\emph{gene trees})
and the evolutionary history of the species (\emph{species trees}) in which
these genes reside. Leaves of a gene tree represent genes. Their inner
nodes represent two kinds of evolutionary events, namely the duplication of
genes within a genome -- giving rise to paralogs -- and speciations, in
which the ancestral gene complement is transmitted to two daughter
lineages. Two genes are (co-)orthologous if their last common ancestor in
the gene tree represents a speciation event, while they are paralogous if
their last common ancestor is a duplication event, see \citet{Fitch2000}
and \citet{GK13} for a more recent discussion on orthology and paralogy
relationships.  Speciation events, in turn, define the inner vertices of a
species tree. However, they depend on both, the gene and the species
phylogeny, as well as the reconciliation between the two. The latter
identifies speciation vertices in the gene tree with a particular
speciation event in the species tree and places the gene duplication events
on the edges of the species tree.  Intriguingly, it is nevertheless
possible in practice to distinguish orthologs with acceptable
accuracy without constructing either gene or species trees
\cite{Altenhoff:09}. Many tools of this type have become available over the
last decade, see \citet{KWMK:11, DAAGD2013} for a recent review. The output
of such methods is an estimate $\Theta$ of the true orthology relation
$\Theta^*$, which can be interpreted as a graph $G_\Theta$ whose vertices
are genes and whose edges connect estimated (co-)orthologs.

Recent advances in mathematical phylogenetics suggest that the
estimated orthology relation $\Theta$ contains information on the structure
of the species tree. To make this connection, we combine here three
abstract mathematical results that are made precise in
\emph{Materials and Methods} below.

(1) Building upon the theory of symbolic ultrametrics \cite{Boeckner:98} we
showed that \emph{in the absence of horizontal gene transfer, the orthology
  relation of each gene family is a cograph} \cite{Hellmuth:13d}. Cographs
can be generated from the single-vertex graph $K_1$ by complementation and
disjoint union \cite{Corneil:81}. This special structure of cographs
imposes very strong constraints that can be used to reduce the noise and
inaccuracies of empirical estimates of orthology from pairwise sequence
comparison. To this end, the initial estimate of $G_{\Theta}$ is modified
to the closest correct orthology relation $G_{\Theta^*}$ in such a way that
a minimal number of edges (i.e., orthology assignments) are introduced or
removed. This amounts to solving the cograph-editing problem
\cite{Liu:11,Liu:12}.

(2) It is well known that \emph{each cograph is equivalently represented by
  its cotree} \cite{Corneil:81}. The cotree is easily computed for a given
cograph.  In our context, the cotree of $G_{\Theta^*}$ is an incompletely
resolved event-labeled gene-tree. That is, in addition to the tree
topology, we know for each internal branch point whether it corresponds to
a speciation or a duplication event.  Even though, adjacent speciations or
adjacent duplications cannot be resolved, the tree faithfully encodes the
relative order of any pair of duplication and speciation
\cite{Hellmuth:13d}. In the presence of horizontal gene transfer $G_\Theta$
may deviate from the structural requirements of a cograph. Still, the
situation can be described in terms of edge-colored graphs whose subgraphs
are cographs \cite{Boeckner:98,Hellmuth:13d}, so that the cograph structure
remains an acceptable approximation.

(3) \emph{Every triple (rooted binary tree on three leaves) in the cotree
  that has leaves from three species and is rooted in a speciation event
  also appears in the underlying species tree}
\cite{hernandez2012event}. Thus, the estimated orthology relation, after
editing to a cograph and conversion to the equivalent event-labeled gene
tree, provide many information on the species tree.  This result allows us
to collect from the cotrees for each gene family partial information on the
underlying species tree. Interestingly, only gene families that harbor
duplications, and thus have a non-trivial cotree, are informative. If no
paralogs exist, then the orthology relation $G_\Theta$ is a clique (i.e.,
every family member is orthologous to every other family member) and the
corresponding cotree is completely unresolved, and hence contains no
triple. On the other hand, full resolution of the species tree is
guaranteed if at least one duplication event between any two adjacent
speciations is observable. The achievable resolution therefore depends on
the frequency of gene duplications and the number of gene families.

Despite the variance reduction due to cograph editing, noise in the data,
as well as the occasional introduction of contradictory triples as a
consequence of horizontal gene transfer is unavoidable. The species triples
collected from the individual gene families thus will not always be
congruent.  A conceptually elegant way to deal with such potentially
conflicting information is provided by the theory of supertrees in the form
of the largest set of consistent triples \cite{Jansson:05,GM-13}. The data
will not always contain a sufficient set of duplication events to achieve
full resolution. To this end we consider trees with the property that
  the contraction of any edge leads to the loss of an input triple. There
  may be exponentially many alternative trees of this type. They can be
  listed efficiently using Semple's algorithms \cite{sem:03}. To reduce the
  solution space further we search for a \emph{least resolved tree} in the
  sense of \cite{Jansson:12}, i.e., a tree that has the minimum number of inner
  vertices. It constitutes one of the best estimates of the phylogeny
  without pretending a higher resolution than actually supported by the
  data. In the Supplemental Material we discuss alternative choices.

The mathematical reasoning summarized above, outlined in \emph{Materials
  and Methods}, and presented in full detail in the Supplemental Material,
directly translates into a computational workflow, Fig.\ 1. %\ref{fig:workflow}.
It entails three NP-hard combinatorial optimization problems: cograph
editing \cite{Liu:12}, maximal consistent triple set \cite{Bryant97,
  Wu2004, Jansson2001} and least resolved supertree \cite{Jansson:12}. We
show here that they are nevertheless tractable in practice by formulating
them as Integer Linear Programs (ILP) that can be solved for both,
artificial benchmark data sets and real-life data sets, comprising
genome-scale protein sets for dozens of species, even in the presence of
horizontal gene transfer.

%\begin{materials}

\section{Preliminaries} %\vspace{-1.em}

Here, we summarize the definitions and notations required to outline the
mathematical framework, presented in Section \emph{Theory} and \emph{ILP
  Formulation}

\empht{Phylogenetic Trees:} We consider a set $\Gen$ of at least three genes
from a non-empty set $\Spe$ of species. We denote genes by lowercase Roman
and species by lowercase Greek letters. We assume that for each gene its
species of origin is known. This is encoded by the surjective map
$\sigma:\Gen\to \Spe$ with $a \mapsto \sigma(a)$.  A \textit{phylogenetic
  tree (on $L$)} is a rooted tree $T=(V,E)$ with leaf set $L\subseteq V$
such that no inner vertex $v\in V^0:= V\setminus L$ has outdegree one and
whose root $\rho_T\in V$ has indegree zero.  A phylogenetic tree $T$ is
called \emph{binary} if each inner vertex has outdegree two.  A
phylogenetic tree on $\Gen$, resp., on $\Spe$, is called \emph{gene tree},
resp., \emph{species tree}. A (inner) vertex $y$ is an ancestor of $x\in
V$, in symbols $x\prec_T y$ if $y\ne x$ lies on the unique path connecting
$x$ with $\rho_T$.  The \emph{most recent common ancestor} $\lca_T(L')$ of
a subset $L'\subseteq L$ is the unique vertex in $T$ that is the least
upper bound of $L'$ under the partial order $\preceq_T$. We write $L(v):=\{
y\in L| y\preceq_T v\}$ for the set of leaves in the subtree $T(v)$ of $T$
rooted in $v$. Thus, $L(\rho_T)=L$ and $T(\rho_T)=T$.  
\smallskip

\empht{Rooted Triples:} 
Rooted triples \cite{Dress:book}, i.e., rooted binary trees on three leaves,
 are a key concept in the theory of
supertrees \cite{sem-ste-03a,Bininda:book}.  A rooted triple
$r={\rt{(xy|z)}}$ with leaf set $L_r=\{x,y,z\}$ is \emph{displayed} by a
phylogenetic tree $T$ on $L$ if (i) $L_r\subseteq L$ and (ii) the path from
$x$ to $y$ does not intersect the path from $z$ to the root $\rho_T$. Thus
$\lca_T(x,y)\prec_T \lca_T(x,y,z)$. A set $R$ of triples is
\emph{(strictly) dense} on a given leaf set $L$ if for each set of three
distinct leaves there is (exactly) one triple $r\in R$. We denote by
$\mathfrak{R}(T)$ the set of all triples that are displayed by the
phylogenetic tree $T$. A set $R$ of triples is \emph{consistent} if there
is a phylogenetic tree $T$ on $L_R:=\cup_{r\in R} L_r$ such that
$R\subseteq\mathfrak{R}(T)$, i.e., $T$ displays (all triples of) $R$.
If no such tree exists, $R$ is said to be \emph{inconsistent}.\\
Given a triple set $R$, the
polynomial-time algorithm \texttt{BUILD} \cite{Aho:81} either constructs a
phylogenetic tree $T$ displaying $R$ or recognizes that $R$ is
inconsistent. The problem of finding a phylogenetic tree with the smallest
possible number of vertices that is consistent with every rooted triple in
$R$, i.e., a \emph{least resolved} tree, is an NP-hard problem
\citep{Jansson:12}. If $R$ is inconsistent, the problem of determining a
maximum consistent subset of an inconsistent set of triples is
NP-hard and also APX-hard, see \cite{Byrka:10a,vanIersel:09}. Polynomial-time
approximation algorithms for this problem and further theoretical results
are reviewed by \cite{Byrka:10}.

\empht{Triple Closure Operations and Inference Rules:}
If $R$ is consistent it is often possible to infer additional consistent
triples. Denote by $\langle R \rangle$ the set of all phylogenetic trees on
$L_R$ that display $R$. The \emph{closure} of a consistent set of triples
$R$ is $\displaystyle \cl(R) = \cap_{T\in \langle R \rangle}
\mathfrak{R}(T)$, see \cite{BS:95,GSS:07,Bryant97,huber2005recovering,BBDS:00}. 
We say $R$ is \emph{closed} if $R=\cl(R)$ and write $R\vdash \rt{(xy|z)}$ iff
$\rt{(xy|z)}\in \cl(R)$. The closure of a given consistent set $R$ can be
computed in in $O(|R|^5)$ time \cite{BS:95}. 
Extending earlier work of Dekker
\citet{Dekker86}, Bryant and Steel \citet{BS:95} derived conditions under
which $R\vdash \rt{(xy|z)} \implies R'\vdash \rt{(xy|z)}$ for some
$R'\subseteq R$. Of particular importance are the following so-called
\emph{2-order} inference rules:\\[0.1cm]
\hspace*{1cm}
$ \{\rt{(ab|c)}, \rt{(ad|c)}\}\vdash \rt{(bd|c)}$\hfill(i)\\
\hspace*{1cm}
$\{\rt{(ab|c)}, \rt{(ad|b)}\}\vdash \rt{(bd|c)},\rt{(ad|c)}$\hfill(ii)\\
\hspace*{1cm} $ \{\rt{(ab|c)}, \rt{(cd|b)}\}\vdash
\rt{(ab|d)},\rt{(cd|a)}$.
\hfill(iii)\\[0.1cm]
Inference rules based on pairs of triples $r_1, r_2 \in R$ can imply new
triples only if $|L_{r_1}\cap L_{r_2}| = 2$. Hence, in a strictly dense
triple set only the three rules above may lead to new triples.

\empht{Cograph:}
Cographs have a simple characterization as $P_4$-free
graphs, that is, no four vertices induce a simple path, although there
are a number of equivalent characterizations, see \citet{Brandstaedt:99}.
 Cographs can be recognized in linear time \citep{Corneil:85,habib2005simple}. 

\empht{Orthology Relation:}
An empirical orthology relation $\Theta \subset \Gen\times\Gen$ is a
symmetric, irreflexive relation that contains all pairs $(x,y)$ of
orthologous genes. Here, we assume that $x,y\in\Gen$ are \emph{paralogs} if
and only if $x\ne y$ and $(x,y)\notin\Theta$. This amounts to ignoring
horizontal gene transfer.  Orthology detection tools often report some
weight or confidence value $w(x,y)$ for $x$ and $y$ to be orthologs from
which $\Theta$ is estimated using a suitable cutoff.  Importantly, $\Theta$
is symmetric, but not transitive, i.e., it does in general not represent a
partition of $\Gen$.

\empht{Event-Labeled Gene Tree:}
Given $\Theta$ we aim to find a gene tree $T$ with an ``event labeling''
$t:V^0\to\{\bullet,\square\}$ at the inner vertices so that, for any two
distinct genes $x,y\in L$, $t(\lca_{T}(x,y))=\bullet$ if $\lca_{T}(x,y)$
corresponds to a speciation and hence $(x,y)\in\Theta$ and
$t(\lca_{T}(x,y))=\square$ if $\lca_{T}(x,y)$ is a duplication vertex and
hence $(x,y)\notin\Theta$. If such a tree $T$ with event-labeling $t$
exists for $\Theta$, we call the pair $(T,t)$ a \emph{symbolic
  representation} of $\Theta$. We write $(T,t;\sigma)$ if in addition the
species assignment map $\sigma$ is given. A detailed and more general
introduction to the theory of symbolic representations is given in the
Supplemental Material.

\empht{Reconciliation Map:}
A phylogenetic tree $S=(W,F)$ on $\Spe$ is a species tree for a gene tree
$T=(V,E)$ on $\Gen$ if there is a reconciliation map $\mu:V\to W\cup F$
that maps genes $a\in \Gen$ to species $\sigma(a)=\al \in \Spe$ such that
the ancestor relation $\preceq_S$ is implied by the ancestor relation
$\preceq_T$. A more formal definition is given in the Supplemental
Material. Inner vertices of $T$ that map to inner vertices of $S$ are
speciations, while vertices of $T$ that map to edges of $S$ are
duplications.

\section{Theory} %\vspace{-1em}

In this section, we summarize the main ideas and concepts behind our new 
methods that are 
based on our results established in \cite{hernandez2012event, Hellmuth:13d}.
We consider the following problem. 
Given an empirical orthology relation $\Theta$ we want to compute
a species tree. To this end, four independent problems as explained
below have to be solved.

\empht{From Estimated Orthologs to Cographs:}
Empirical estimates of the orthology relation $\Theta$ will in general
contain errors in the form of false-positive orthology assignments, as well
as false negatives, e.g., due to insufficient sequence
similarity. Horizontal gene transfer adds to this noise. Hence an empirical
relation $\Theta$ will in general not have a symbolic representation. In
fact, $\Theta$ has a \emph{symbolic representation} $(T,t)$ if and only if
$G_\Theta$ is a cograph \cite{Hellmuth:13d}, from which $(T,t)$ can be
derived in linear time, see also Theorem \ref{A:thm:ortho-cograph} in the
Supplemental Material.  However, the \emph{cograph editing
  problem}, which aims to convert a given graph $G(V,E)$ into a cograph
$G^*=(V,E^*)$ with the minimal number $|E\symdiff E^*|$ of inserted or
deleted edges, is an NP-hard problem \citep{Liu:11, Liu:12}. Here, the
 symbol $\symdiff$ denotes the symmetric difference of two sets.  
In our setting the problem is considerably simplified  by the structure of
the input data. The gene set of every living organism consists of hundreds
or even thousands of non-homologous gene families. Thus, the initial
estimate of $G_{\Theta}$ already partitions into a large number of connected
components.  As shown in Lemma \ref{A:lem:disconnected} in the Supplemental
Material, it suffices to solve the cograph editing for each connected
component separately.

\empht{Extraction of All Species Triples:}
From this edited cograph $G_{\Theta^*}$, we obtain a unique cotree that, in particular, 
is congruent to an incompletely resolved event-labeled gene-tree $(T,t;\sigma)$.
In \citealp{hernandez2012event}, we investigated the conditions for the
existence of a reconciliation map $\mu$ from the gene tree $T$ to 
the species tree $S$. Given
$(T,t;\sigma)$, consider the triple set $\mathbb{G}$ consisting of all
triples $r=\rt{(ab|c)}\in\mathfrak{R}(T)$ so that (i) all genes
$a,b,c$ belong to different species, and (ii) the event at
the most recent common ancestor of $a,b,c$ is a speciation event,
$t(\lca_T(a,b,c))=\bullet$. From $\mathbb{G}$ and $\sigma$, one can
construct the following set of species triples:
\begin{equation*}
 \mathbb{S}= \left\{ \rt{(\al\be|\ga)}   |\, 
   \exists \rt{(ab|c)}\in\mathbb{G} 
   \textrm{\ with\ } 
   \sigma(a)=\al, \sigma(b)=\be,\sigma(c)=\ga 
   \right\}
\end{equation*}
The main result of \citet{hernandez2012event} establishes that there is a
species tree on $\sigma(\Gen)$ for $(T,t,\sigma)$ if and only if the triple
set $\mathbb{S}$ is consistent. In this case, a reconciliation map can
be found in polynomial time. No reconciliation map exists if $\mathbb{S}$
is inconsistent. 

\empht{Maximal Consistent Triple Set:}
In practice, we cannot expect that the set $\mathbb{S}$ will be consistent. 
Therefore, we have to solve an NP-hard problem, namely, 
computing a maximum consistent subset of triples
$\mathbb{S}^* \subset \mathbb{S}$ \cite{Jansson:12}. 
The following result (see \cite{GM-13} and Supplemental Material) plays a
key role for the ILP formulation of triple consistency.
\begin{thm}\small
  A strictly dense triple set $R$ on $L$ with $|L|\geq 3$ is consistent if
  and only if $\cl(R')\subseteq R$ holds for all $R'\subseteq R$ with
  $|R'|= 2$.
  \label{thm:consistIFFpairwise}  
\end{thm}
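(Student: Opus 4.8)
The plan is to prove both implications, dispatching the easy direction quickly and concentrating the work on the converse.

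For the \emph{forward} direction, suppose $R$ is consistent, so $\langle R\rangle\neq\emptyset$. I first claim $\cl(R)=R$. The inclusion $R\subseteq\cl(R)$ is immediate. For the reverse, fix any $T\in\langle R\rangle$ and any $r=\rt{(ab|c)}\in\cl(R)$; then $T$ displays $r$. By strict density $R$ contains exactly one triple on $\{a,b,c\}$, which $T$ also displays, and since no tree displays two distinct triples on the same three leaves, that triple must equal $r$, whence $r\in R$. Now for any $R'\subseteq R$ we have $\langle R\rangle\subseteq\langle R'\rangle$, so $\cl(R')=\bigcap_{T\in\langle R'\rangle}\mathfrak{R}(T)\subseteq\bigcap_{T\in\langle R\rangle}\mathfrak{R}(T)=\cl(R)=R$; in particular this holds for every $R'$ with $|R'|=2$.

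For the \emph{converse}, assume $\cl(R')\subseteq R$ for every two-element $R'\subseteq R$. As noted in the discussion of the inference rules above, for a strictly dense $R$ the only pairs that can force new triples are those sharing exactly two leaves, and the forced triples are exactly the conclusions of rules (i)--(iii); hence the hypothesis says precisely that $R$ is closed under (i)--(iii). I will then prove consistency by running \texttt{BUILD} \cite{Aho:81} and showing it never fails, arguing by induction on $|L|$. The cases $|L|\le 3$ are trivial. For the inductive step it suffices to show that the auxiliary graph $[R,L]$ of \texttt{BUILD} (vertex set $L$, an edge $\{x,y\}$ whenever $\rt{(xy|z)}\in R$ for some $z$) is disconnected: the triple set induced on each connected component is again strictly dense and closed under (i)--(iii) (both properties are inherited, as the rules introduce no new leaves), so by induction each component yields a tree, and joining these under a common root displays all of $R$ — indeed every $\rt{(xy|z)}\in R$ keeps its close pair $x,y$ in one component, so the triple is displayed either within that component or across the root.

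The crux, and the main obstacle, is therefore to prove $[R,L]$ disconnected whenever $|L|\ge 3$. Fix a leaf $z_0$ and define $x\sim_{z_0}y \iff \rt{(xy|z_0)}\in R$. Rule (i) together with strict density makes $\sim_{z_0}$ an equivalence relation on $L\setminus\{z_0\}$, each class being a clique in $[R,L]$. For two distinct classes with representatives $x,y$, strict density forces the triple on $\{x,y,z_0\}$ to group $z_0$ with one side; using rules (ii)--(iii) and density one checks that this choice is independent of the representatives, defining a tournament on the classes, and that this tournament is transitive, hence a linear order. Let $C$ be its minimum class (if there is a single class, then $z_0$ is isolated and we are done). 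By minimality $z_0$ is never grouped with $C$ against another class, so $C$ has no edge to $z_0$; and a short application of rule (ii) shows that any hypothetical edge from $C$ to a different class would force a triple $\rt{(xz_0|w)}\in R$ with $x\in C$, i.e.\ an edge from $C$ to $z_0$ — a contradiction. Thus $C$ is a union of connected components of $[R,L]$ avoiding $z_0$, and the graph is disconnected.

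I expect the delicate points to be precisely the two invocations of strict density inside the inductive step: first in establishing that the ``side'' tournament is well defined and transitive, where the wrong alternative is excluded only because density forbids a second triple on the same three leaves; and second in the no-cross-edge argument for the minimal class. The degenerate configurations (a single $\sim_{z_0}$-class, or components of size one or two) must be checked separately but are routine; note also that strict density automatically forces at most two components, so the join step never confronts three leaves split across three subtrees.
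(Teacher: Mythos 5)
Your forward direction is essentially the paper's: strict density forces $\cl(R)=R$ (a tree displays at most one triple per three-leaf set), and monotonicity of the closure does the rest. For the converse you take a genuinely different route. The paper inducts on $|L|$ by \emph{deleting a leaf}: it first settles $|L|=4$ by a case analysis on the three possible shapes of a connected Aho graph on four vertices, then shows that re-inserting the deleted leaf $x$ into the (inductively consistent) restricted set leaves every Aho graph $[R,\S\cup\{x\}]$ disconnected, via a three-case analysis built on the two-component structure of $[R_{|L'},\S]$. You instead follow the \texttt{BUILD} recursion itself: you prove directly that the top-level Aho graph $[R,L]$ is disconnected by partitioning $L\setminus\{z_0\}$ into $\sim_{z_0}$-classes (rule (i) gives transitivity of $\sim_{z_0}$), showing via rules (ii)--(iii) that the ``which side does $z_0$ join'' tournament on classes is representative-independent and transitive, and exhibiting the extremal class as a clique with no edges to $z_0$ or to other classes; you then recurse on components, which inherit strict density and pairwise closedness. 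I checked the steps you flagged as delicate (representative-independence, transitivity of the order, and the no-cross-edge argument, where rule (ii) applied to $\rt{(cy|w)}$ and $\rt{(yz_0|c)}$ yields $\rt{(cz_0|w)}\in R$ and hence a forbidden edge) and they all go through, as does the inheritance of the hypothesis $\cl(R')\subseteq R$ by components. Your version buys a cleaner induction --- no separate $|L|=4$ lemma and no leaf re-insertion bookkeeping --- and a more structural explanation of why disconnection occurs; the paper's version establishes along the way the stronger fact that every Aho graph encountered has exactly two components, which it reuses elsewhere. One small caveat: your parenthetical claim that the hypothesis is ``precisely'' closure under rules (i)--(iii) rests on the Dekker/Bryant--Steel classification of two-triple closures, but your proof only ever uses the trivial direction (conclusions of the rules lie in the closure of the pair, hence in $R$), so nothing is at stake.
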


\empht{Least Resolved Species Tree:} In order to compute an estimate for
the species tree in practice, we finally compute from $\mathbb{S}^*$ a
  least resolved tree $S$ that minimizes the number of inner vertices.
Hence, we have to solve another NP-hard problem
\cite{Byrka:10a,vanIersel:09}.  However, some instances can be solved in
polynomial time, which can be checked efficiently by utilizing the next
result (see Supplemental Material).
\begin{proposition} \small
  If the tree $T$ inferred from the triple set $R$ by means of 
  \texttt{BUILD} is binary, then the closure $\cl(R)$ is strictly dense.
  Moreover,  $T$ is unique and hence, a least resolved tree
  for $R$.
  \label{pro:BinaryClDense} 
\end{proposition}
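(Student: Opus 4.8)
The plan is to prove the three assertions by first establishing the middle one---that $T$ is the \emph{only} phylogenetic tree displaying $R$---and then reading off strict density and least resolvedness as consequences. The engine is the recursive structure of \texttt{BUILD} together with the observation that in any tree displaying $R$ the ``close pair'' of every input triple cannot be separated at the root. I first record one elementary fact: since $T$ is binary, for every three-element subset $\{x,y,z\}\subseteq L$ the restriction $T|_{\{x,y,z\}}$ is a resolved triple, so exactly one of $\rt{(xy|z)},\rt{(xz|y)},\rt{(yz|x)}$ lies in $\mathfrak{R}(T)$; hence $\mathfrak{R}(T)$ is strictly dense on $L$.

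The heart of the argument is the claim, proved by induction on $|L|$, that \emph{if \texttt{BUILD} returns a binary tree $T$ on $(L,R)$, then every phylogenetic tree $T'$ on $L$ displaying $R$ equals $T$.} Because $T$ is binary, at the top level the auxiliary (Aho) graph---vertex set $L$, with an edge $xy$ for each $\rt{(xy|z)}\in R$---has exactly two connected components $A,B$, and $T$ is obtained by joining the recursively built, again binary, trees $T_A,T_B$ on $A,B$ under a common root. Now let $T'$ display $R$ and let $\mc{P}$ be the partition of $L$ induced by the child-subtrees of $\rho_{T'}$. For each $\rt{(ab|c)}\in R$ we have $\lca_{T'}(a,b)\prec_{T'}\lca_{T'}(a,b,c)\preceq_{T'}\rho_{T'}$, so $a,b$ lie in a common block of $\mc{P}$; thus every Aho edge stays inside a block, i.e.\ each of $A,B$ is contained in one block, and every block of $\mc{P}$ is a union of Aho components. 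Since the root of a phylogenetic tree has outdegree at least two, $\mc{P}$ has at least two blocks, and the only coarsening of $\{A,B\}$ with at least two blocks is $\{A,B\}$ itself. Hence $\rho_{T'}$ splits $L$ exactly as $A\mid B$.

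Restricting to the two sides, $T'|_A$ and $T'|_B$ display the induced sets $R_A=\{r\in R : L_r\subseteq A\}$ and $R_B$ that \texttt{BUILD} uses in its recursion (these are exactly the triples of $R$ with all leaves on one side). By the induction hypothesis $T_A$ is the unique tree on $A$ displaying $R_A$ and $T_B$ the unique tree on $B$ displaying $R_B$, so $T'|_A=T_A$ and $T'|_B=T_B$, giving $T'=T$ and proving the claim. Granting it, $\langle R\rangle=\{T\}$, whence $\cl(R)=\bigcap_{T'\in\langle R\rangle}\mathfrak{R}(T')=\mathfrak{R}(T)$, which is strictly dense by the first paragraph; this is the first assertion. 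The claim is exactly the uniqueness of $T$, and since $T$ is then the \emph{only} tree consistent with $R$, it trivially minimizes the number of inner vertices, i.e.\ it is a least resolved tree for $R$.

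The step I expect to be the real obstacle---and the one to write most carefully---is pinning down the root split of an \emph{arbitrary} display tree $T'$: everything hinges on showing that ``exactly two Aho components'' forces $\mc{P}=\{A,B\}$, after which the induction runs smoothly. A minor point to handle cleanly is that a leaf of $A$ may occur in no triple of $R_A$; this is harmless, since \texttt{BUILD} recurses on the fixed leaf set $A$ and such a leaf simply appears as an isolated vertex (its own component) in the next round, so the recursion and the induction hypothesis still apply verbatim.
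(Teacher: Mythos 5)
Your proof is correct, but it takes a genuinely different route from the paper's. The paper establishes strict density of $\cl(R)$ first and directly: for any three leaves $x,y,z$ it locates the stage of \texttt{BUILD} at which they are first separated, i.e.\ a set $\S\ni x,y,z$ whose Aho graph $[R,\S]$ has exactly two components with, say, $x,y$ on one side and $z$ on the other; then $R\cup\{\rt{(xy|z)}\}$ is consistent (the added edge $\{x,y\}$ cannot connect any of the relevant Aho graphs), while $R\cup\{\rt{(xz|y)}\}$ and $R\cup\{\rt{(yz|x)}\}$ are not (the added edge merges the two components of $[R,\S]$). Hence exactly one resolution per three-set lies in $\cl(R)$, and uniqueness then falls out because every display tree $T'$ satisfies $\cl(R)\subseteq\mathfrak{R}(T')$ while displaying at most one triple per three-set, forcing $\mathfrak{R}(T')=\cl(R)$. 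You reverse the logical order: your key lemma is uniqueness of the display tree, proved by induction on $|L|$ by showing that the root split of an arbitrary display tree must coincide with the two-component Aho partition, after which $\cl(R)=\mathfrak{R}(T)$ and strict density follow from binarity of $T$. Both arguments rest on the same underlying fact --- a binary \texttt{BUILD} output forces exactly two Aho components at every level of the recursion --- but the paper's local, triple-by-triple consistency test avoids the induction and the root-split analysis, whereas your structural induction yields the stronger intermediate statement $\langle R\rangle=\{T\}$ explicitly and makes least-resolvedness entirely trivial. Your root-split argument (each block of the root partition is a nonempty union of Aho components, and the only such partition with at least two blocks is $\{A,B\}$ itself) is sound, as is your closing remark about leaves that are isolated in the recursive Aho graphs.
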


\section{ILP Formulation} %\vspace{-1em}

Since we have to solve three intertwined NP-complete optimization problems
we cannot realistically hope for an efficient exact algorithm. We therefore
resort to ILP as the method of choice for solving the problem of computing
a least resolved species tree $S$ from an empirical estimate of the
orthology relation $G_\Theta$. We will use binary variables throughout.
Table 1 %\ref{tab:notation} 
summarizes the definition of the ILP variables and
provides a key to the notation used in this section. In the following we
summarize the ILP formulation.  A detailed description  and proofs for  the
correctness and completeness of the constraints can be
found in the Supplemental Material.

%\begin{table}[t]
%\caption{The notation used in the ILP formulation.}
%\label{tab:notation}
	\begin{table}[t]
\caption{The notation used in the ILP formulation.}
 \begin{tabular}{ll}
   \hline 
    \textbf{Sets \& Constants}   & \textbf{Definition} \\ 
    \hline
    $\Gen$ & Set of genes \\
    $\Spe$ & Set of species \\
    $\Theta_{ab}$ & Genes $a,b\in \Gen$ are estimated orthologs: \\
                  & $\Theta_{ab}=1$ iff  $(a,b)\in \Theta$.\\ 
    \hline
    \textbf{Binary Variables}   & \textbf{Definition}\\ 
    \hline
    $E_{xy}$    & Edge set of the cograph $G_{\Theta^*}=(\Gen,E_{\Theta^*})$ \\
                & of the closest relation $\Theta^*$ to $\Theta$: \\
                & $E_{xy}=1$ iff $\{x,y\}\in E_{\Theta^*}$ 
                (thus, iff $(x,y)\in \Theta^*)$.\\[0.05cm]
    $T_{\rt{(\al\be|\ga)}}$ & Rooted (species) triples in obtained set $\mathbb S$: \\ 
                &              $T_{\rt{(\al\be|\ga)}}=1$ iff 
                 $\rt{(\al\be|\ga)}\in \mathbb S$.\\[0.05cm]			
     $T'_{\rt{(\al\be|\ga)}}$, 
     $T^*_{\rt{(\al\be|\ga)}}$ & Rooted (species) triples 
					in auxiliary strict dense \\ & set $\mathbb S'$, resp., maximal 
                    consistent species   
                   triple \\ & set $\mathbb S^*$: 
                 $T^\bullet_{\rt{(\al\be|\ga)}}=1$
	         iff $\rt{(\al\be|\ga)}\in \mathbb S^\bullet$, $\bullet\in \{\prime,\ast\} $.\\[0.05cm]
    $M_{\al p}$ & Set of clusters: $M_{\al p}=1$ iff $\al\in \Spe$ is 
                  contained \\ 
                & in cluster $p\in\{1,\dots, |\Spe|-2\}$. \\[0.05cm]
    $N_{\al\be,p}$ & Cluster $p$ contains both species $\al$ and $\be$: \\
                &   $N_{\al\be, p}=1$ iff 
                    $M_{\al p}=1$ and $M_{\be p}=1$ \\[0.05cm]
    $C_{p,q,\Gamma\Lambda}$ & Compatibility: $C_{p,q,\Gamma\Lambda}=1$ 
                              iff cluster $p$ and $q$ \\ 
                & have gamete $\Gamma\Lambda\in \{01,10,11\}$.
                  \\[0.05cm]
    $Y_p$       & Non-trivial clusters: $Y_p$=1 iff  cluster 
                  $p\neq \emptyset$. \\ 
    \hline 
\end{tabular}
\label{tab:notation}
\end{table}

%\end{table}

%\caption{The notation used in the ILP formulation.}
%\end{table}

\empht{From Estimated Orthologs to Cographs:}  
Our first task is to compute a cograph $G_{\Theta^*}$ that is as similar as
possible to $G_\Theta$ (Eq.\ \eqref{ilp:minDiff} and \eqref{ilp:cog}) with
the additional constraint that no pair of genes within the same species is
connected by an edge, since no pair of orthologs can be found in the same
species (Eq.\ \eqref{ilp:forbid_E}).  Binary variables $E_{xy}$ express
(non)edges in $G_{\Theta^*}$ and binary constants $\Theta_{ab}$ (non)pairs
of the input relation $\Theta$. This ILP formulation 
requires $O(|\Gen|^2)$ binary variables and
$O(|\Gen|^4)$ constraints. In practice, the effort is not dominated by the
number of vertices, since the connected components of $G_{\Theta}$ can be
treated independently.

\begin{ILP}  \label{ilp:minDiff}
\min & \sum_{(x,y)\in\Gen \times \Gen} (1-\Theta_{xy}) E_{xy} + 
       \sum_{(x,y)\in \Gen \times \Gen } \Theta_{xy} (1-E_{xy}) 
\end{ILP}
\begin{ILP}\label{ilp:forbid_E}
  E_{xy}=0 \text{ for all } x,y\in \Gen
	\text{ with }  \sigma(x)=\sigma(y)
\end{ILP}
\begin{ILP}  \label{ilp:cog}
  &E_{wx} + E_{xy}+ E_{yz} - E_{xz} - E_{wy} - E_{wz} \leq 2  \\
  &\forall \text{\ ordered tuples\ } (w,x,y,z) \text{\ of distinct\ } 
	w,x,y,z\in\Gen
\end{ILP}

\empht{Extraction of All Species Triples:}
The construction of the species tree $S$ is based upon the set $\mathbb{S}$
of species triples that can be derived from the set of gene triples
$\mathbb{G}$, as explained in the previous section.  Although the problem
of determining such triples is not NP-hard, we give in the Supplemental
Material an ILP formulation due to the sake of completeness.  However, as
any other approach can be used to determine the species triples we omit
here the ILP formulation, but state that it requires $O(|\Spe|^3)$
variables and $O(|\Gen|^3+|\Spe|^4)$ constraints.

\empht{Maximal Consistent Triple Set:}
An ILP approach to find maximal consistent triple sets was proposed in
\cite{chang2011ilp}. It explicitly builds up a binary tree as a way of
checking consistency. Their approach, however, requires $O(|\Spe|^4)$ ILP
variables, which limits the applicability in practice. By Theorem
\ref{thm:consistIFFpairwise}, strictly a dense triple set $R$ is consistent,
if for all two-element subsets $R'\subseteq R$ the closure $\cl(R')$ is
contained in $R$. This observation allows us to avoid the explicit tree
construction and makes is much easier to find a maximal consistent subset
$\mathbb{S}^*\subseteq \mathbb{S}$. Of course, neither $\mathbb{S}^*$ nor
$\mathbb{S}$ need to be strictly dense. However, since $\mathbb{S}^*$ is
consistent, Lemma \ref{A:lem:binstrictdense} (Supplemental Material)
guarantees that there is a strictly dense triple set $\mathbb{S}'$ containing
$\mathbb{S}^*$. Thus we have $\mathbb{S}^* = \mathbb{S}'\cap \mathbb{S}$,
where $\mathbb{S}'$ must be chosen to maximize $|\mathbb{S}'\cap
\mathbb{S}|$. We define binary variables $T'_{\rt{(\al\be|\ga)}}$,
$T^*_{\rt{(\al\be|\ga)}}$, resp., binary constants $T_{\rt{(\al\be|\ga)}}$
to indicate whether $\rt{(\al\be|\ga)}$ is contained in $\mathbb{S}'$, 
$\mathbb{S}^*$, resp., $\mathbb{S}$.
The ILP formulation that uses  $O(|\Spe|^3)$ variables and $O(|\Spe|^4)$ constraints
 is as follows. 
\begin{ILP} 
 \max \sum_{\rt{(\al\be|\ga)}\in \mathbb S} T'_{\rt{(\al\be|\ga)}} 
 \label{ilp:maxdense}
\end{ILP}
\begin{ILP} 
  \label{ilp:sd}
  &T'_{\rt{(\al\be|\ga)}} + T'_{\rt{(\al\ga|\be)}} + 
   T'_{\rt{(\be\ga|\al)}} = 1 
\end{ILP}
\begin{ILP}
  2 T'_{\rt{(\al\be|\ga)}} + 2&T'_{\rt{(\al\de|\be)}} - 
  T'_{\rt{(\be\de|\ga)}} - T'_{\rt{\rt{(\al\de|\ga)}}} \leq 2  
  \label{ilp:eq:infRule2}
\end{ILP}
\begin{ILP}
  0 \leq T'_{\rt{(\al\be|\ga)}} + T_{\rt{(\al\be|\ga)}} - 
  2T^*_{\rt{(\al\be|\ga)}} \leq 1
  \label{eq:tstar}
\end{ILP}
This ILP formulation can easily be adapted to solve a \emph{``weighted''
 maximum consistent subset} problem: Denote by $w\rt{(\al\be|\ga)}$ the
number of connected components in $G_{\Theta^*}$ that contain three
vertices $a,b,c\in \Gen$ with $\rt{(ab|c)}\in \mathbb G$ and
$\sigma(a)=\al,\sigma(b)=\be, \sigma(c)=\ga$.  These weights can simply be
inserted into the objective function Eq.\ \eqref{ilp:maxdense}

\begin{ILP} 
  \max \sum_{\rt{(\al\be|\ga)}\in \mathbb{S}}
  T'_{\rt{(\al\be|\ga)}}*w\rt{(\al\be|\ga)}
  \label{ilp:wmax}
\end{ILP}
to increase the relative importance of species triples in $\mathbb{S}$, if
they are observed in multiple gene families.

\empht{Least Resolved Species Tree:}
We finally have to find a least resolved species tree from the set $\mathbb
S^*$ computed in the previous step. Thus the variables
$T^*_{\rt{(\al\be|\ga)}}$ become the input constants. For the explicit
construction of the tree we use some of the ideas of \citet{chang2011ilp}.
To build an arbitrary tree for the consistent triple set $\mathbb S^*$, one
can use one of the fast implementations of \texttt{BUILD}
\cite{sem-ste-03a}. If this tree is binary, then Proposition
\ref{pro:BinaryClDense} implies that the closure $\cl(\mathbb S^*)$ is
strictly dense and that this tree is a unique and least resolved tree for
$\mathbb S^*$.  Hence, as a preprocessing step \texttt{BUILD} is used in
advance, to test whether the tree for $\mathbb S^*$ is already binary.  If
not, we proceed with the following ILP approach that 
uses $O(|\Spe|^3)$ variables and constraints.
\begin{ILP}	\label{ilp:minY}
 \min\ &\sum_p Y_p 
\end{ILP}
\begin{ILP}
  0 \leq Y_p|\Spe|  - \sum_{\al\in\Spe} M_{\al p}\leq |\Spe|-1 \label{ilp:yp}
\end{ILP}
\begin{ILP}\label{ilp:Nclus}
  0\leq & M_{\al p} + M_{\be p} - 2 N_{\al\be, p} \leq 1
\end{ILP}
\begin{ILP}
  1 - |\Spe|(1- T^*_{\rt{(\al\be|\ga)}}) \leq 
	\sum_p N_{\al\be,p} - \frac{1}{2}  
        N_{\al\ga,p} -\frac{1}{2} N_{\be\ga,p}
  \label{ilp:rep}
\end{ILP}
\begin{ILP}\label{ilp:CM}
C_{p,q,01}\geq &-M_{\al p}+M_{\al q}\\[-0.1cm]
C_{p,q,10}\geq &\ \ \ \ M_{\al p}-M_{\al q}  \notag\\[-0.1cm]
C_{p,q,11}\geq &\ \ \ \ M_{\al p}+M_{\al q}-1 \notag
\end{ILP}
\begin{ILP}
C_{p,q,01} + C_{p,q,10} + C_{p,q,11} \leq 2 \ \forall p,q \label{ilp:comp}
\end{ILP}
Since a phylogenetic tree $S$ is equivalently specified by its \emph{hierarchy}
$\mathcal{C} = \{L(v)\mid v\in V(S)\}$ whose elements are called \emph{clusters}
(see Supplemental Material or \cite{sem-ste-03a}), 
we construct the clusters induced by
all triples of $\mathbb{S}^*$ and check whether they form a hierarchy on
$\Spe$. Following \citep{chang2011ilp}, we define the binary
$|\Spe|\times(|\Spe|-2)$ matrix $M$, whose entries $M_{\al p}=1$ indicates
that species $\al$ is contained in cluster $p$, see Supplemental
Material. The entries $M_{\al p}$ serve as ILP variables.  In contrast to
the work of \citet{chang2011ilp}, we allow \emph{trivial} columns in $M$ in
which all entries are $0$.  Minimizing the number of \emph{non-trivial}
columns then yields a least resolved tree.

For any two distinct species $\al,\be$ and all clusters $p$ we introduce
binary variables $N_{\al\be, p}$ that indicate whether two species
$\al,\be$ are both contained in the same cluster $p$ or not
(Eq.\ \eqref{ilp:Nclus}).  To determine whether a triple $\rt{(\al\be|\ga)}$
is contained in $\mathbb{S}^* \subseteq \mathbb{S}$ and displayed by a
tree, we need the constraint Eq.\ \eqref{ilp:rep}. Following, the ideas of
Chang et al.\ we use the ``three-gamete condition''. Eq.\ \eqref{ilp:CM} and
\eqref{ilp:comp} ensures that $M$ defines a ``partial'' hierarchy (any two
clusters satisfy $p\cap q\in \{p,q, \emptyset\}$) of compatible clusters.
A detailed discussion how these conditions establish that $M$ encodes a
``partial'' hierarchy can be found in the Supplemental Material.

Our aim is to find a least resolved tree that displays all triples of
$\mathbb{S}^*$. We use the $|\Spe|-2$ binary variables $Y_p=1$ to indicate
whether there are non-zero entries in column $p$ (Eq.\ \eqref{ilp:yp}).
Finally, Eq.\ \eqref{ilp:minY} captures that the number of non-trivial
columns in $M$, and thus the number of inner vertices in the respective
tree, is minimized.  In the Supplemental Material we also discuss an
  ILP formulation to find a tree that displays the minimum number of
  additional triples not contained in $\mathbb S^*$ as an alternative to
  minimizing number of interior vertices.

\section{Implementation and Data Sets} %\vspace{-1em}

Details on implementation and test data sets can be found in the
Supplemental Material. Simulated data were computed with and without
  horizontal gene transfer using both, the method described in \cite{HHW+14}
  and the \texttt{Artificial Life Framework} (\ALF) \cite{Dalquen:12}. As
real-life data sets we used the complete protein complements of 11
\emph{Aquificales} and 19 \emph{Enterobacteriales} species. The initial
orthology relations are estimated with \texttt{Proteinortho}
\cite{Lechner:11a}. The ILP formulation of Fig.\ 1 %\ref{fig:workflow} 
is implemented in the Software \texttt{ParaPhylo} using \texttt{IBM ILOG
  CPLEX{\texttrademark}} Optimizer 12.6. \texttt{ParaPhylo} is
freely available from \\
\texttt{http://pacosy.informatik.uni-leipzig.de/paraphylo}.
%\end{materials}

\section{Results and Discussion}

\begin{figure*}[t]
\begin{center} 
\includegraphics[bb=150 280 470 520, scale=1.3]{./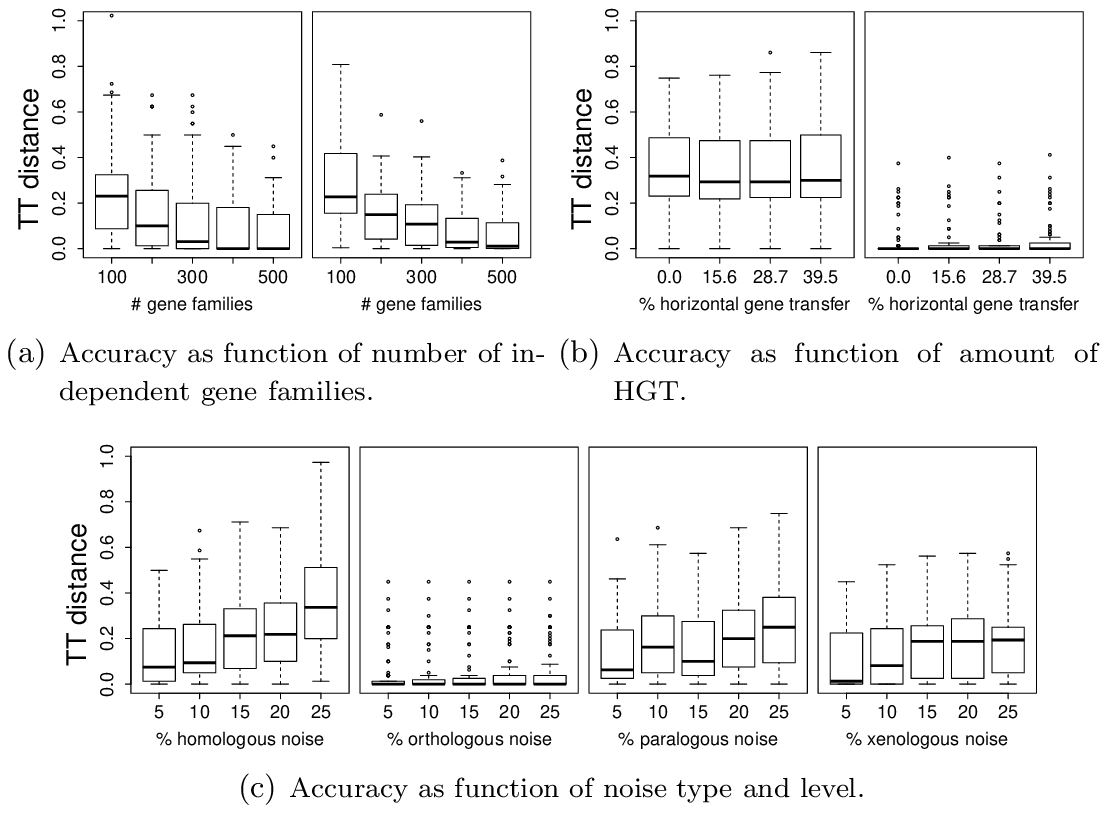} \hspace{1cm} 
\end{center}
\caption{
  Accuracy of reconstructed species trees in simulated data sets.
  \emph{(a)} Dependence on the number of gene families:
  10 (left), and 20 (right) species and 100 to 500 gene families are
  generated using \ALF\ with duplication/loss rate 0.005 and horizontal
  gene transfer rate $0.0$.
  \emph{(b)} Dependence on the intensity of  horizontal gene transfer:
  Orthology estimated with \texttt{Proteinortho} (left), and assuming
  perfect paralogy knowledge (right); $10$ species and $1000$ gene
  families are generated using \ALF\ with duplication/loss rate $0.005$
  and horizontal gene transfer rate ranging from $0.0$ to $0.0075$.
  \emph{(c)} Dependence on the type and intensity ($p=5-25\%$) of noise
  in the raw orthology data $\Theta$:
  $10$ species and 1000 gene families are generated using \ALF\ with
  duplication/loss rate $0.005$ and horizontal gene transfer rate $0.0$.
    Tree distances are measured by the triple metric (TT); all box plots
    summarize 100 independent data sets.
}%
\label{fig:plot}
\end{figure*}

We have shown rigorously that orthology information alone is sufficient to
reconstruct the species tree provided that (i) the orthology is known
without error and unperturbed by horizontal gene transfer, and (ii) 
the input data contains a sufficient number of duplication events. While
this species tree can be inferred in polynomial time for noise-free data,
in a realistic setting three NP-hard optimization problems need to be
solved.

To this end, we use here an exact ILP formulation implementing the workflow
of Fig.\ 1 %\ref{fig:workflow} 
to compute species trees from empirically
estimated orthology assignments.  We first use simulated data to
demonstrate that it is indeed feasible in practice to obtain correct gene
trees directly from empirical estimates of orthology.  For 5, 10, 15, and
20 species we obtained prefect, fully resolved reconstructions of 80\%,
56\%, 24\%, and 11\% of the species trees using 500 gene families.  This
comes with no surprise, given the low amount of paralogs in the simulations
(7.5\% to 11.2\%), and the high amount of extremely short branches in the
generated species trees -- on 11.3\% to 17.9\% of the branches, less then
one duplication is expected to occur. Nevertheless, the average TT distance
\emph{was always smaller than 0.09} for more than 300 gene families,
independent from the number of species, Fig.\ 2(a). % \ref{fig:plot}(a). 
Similar
results for other tree distance measures are compiled in the Supplemental
Material. Thus, deviations from perfect reconstructions are nearly
exclusively explained by a lack of perfect resolution.

In order to evaluate the robustness of the species trees in response to
noise in the input data we used simulated gene families with different
noise models and levels: (i) insertion and deletion of edges in the
orthology graph (homologous noise), (ii) insertion of edges (orthologous
noise), (iii) deletion of edges (paralogous noise), and (iv) modification
of gene/species assignments (xenologous noise). We observe a substantial
dependence of the accuracy of the reconstructed species trees on the noise
model. The results are most resilient against overprediction of orthology
(noise model ii), while missing edges in $\Theta$ have a larger impact, see
Fig.\ 2(c) %\ref{fig:plot}(c) 
for TT distance, and Supplemental Material for
the other distances. This behavior can be explained by the observation that
many false orthologs (overpredicting orthology) lead to an orthology graph,
whose components are more clique-like and hence, yield few informative
triples. Incorrect species triples thus are reduced, while missing species
triples often can be supplemented through other gene families. On the other
hand, if there are many false paralogs (underpredicting orthology) more
false species triples are introduced, resulting in inaccurate trees.
Xenologous noise (model iv), simulated by changing gene/species
associations with probability $p$ while retaining the original gene tree,
amounts to an extreme model for horizontal transfer. Our model, in
particular in the weighted version, is quite robust for small amounts of
HGT of 5\% to 10\%. Although some incorrect triples are introduced in the
wake of horizontal transfer, they are usually dominated by correct
alternatives observed from multiple gene families, and thus, excluded during
computation of the maximal consistent triple set. Only large scale
concerted horizontal transfer, which may occur in long-term endosymbiotic
associations \cite{Keeling:08}, thus pose a serious problem.

Simulations with \ALF \cite{Dalquen:12} show that our method is resilient
against errors resulting from mis-predicting xenology as orthology, see
Fig.\ 2(b) %\ref{fig:plot}(b) 
right, even at horizontal gene transfer rates
of $39.5\%$. Assuming perfect paralogy knowledge, i.e., assuming that all
xenologs are mis-predicted as orthologs, the correct trees are
reconstructed essentially independently from the amount of HGT for
$69.75\%$ of the data sets, and the triple distance to the correct tree
remain minute in the remaining cases. This is consistent with noise model
(ii), i.e., a bias towards overpredicting orthology. Tree reconstruction
based directly on the estimated orthology relation computed with
\texttt{Proteinortho} are of course more inaccurate, Fig.\ 
2(b) %\ref{fig:plot}(b) 
left. Even extreme rates of HGT, however, have no
discernible effect on the quality of the inferred species trees. Our
approach is therefore limited only by quality of initial orthology
prediction tools. 

The fraction $s$ of all triples obtained from the orthology relations
  that are retained in the final tree estimates serves as a quality measure
  similar in flavor e.g.\ to the retention index of
  cladistics. Bootstrapping support values for individual nodes are readily
  computed by resampling either at the level of gene families or at the
  level of triples (see Supplemental Material).

\begin{figure}[t]
\centerline{
	\includegraphics[width=1.\columnwidth]{./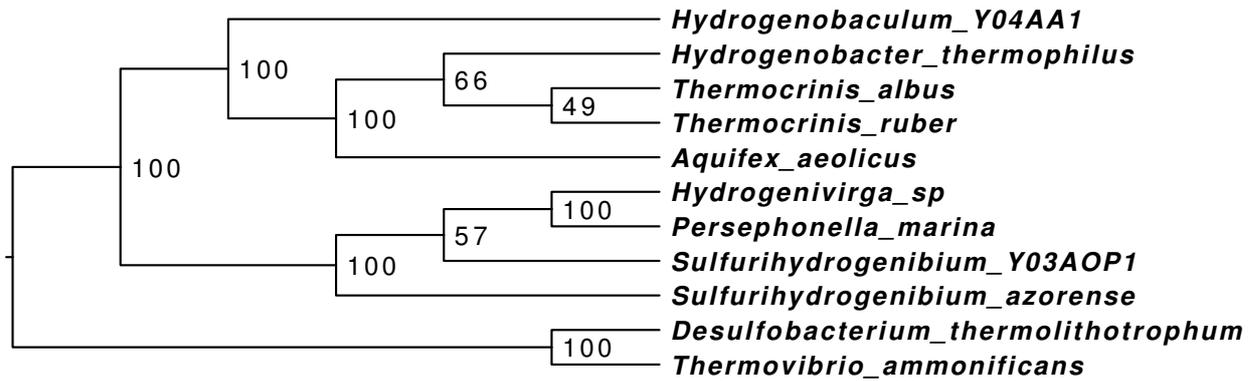}
}
\caption{
  Phylogenetic tree of eleven \emph{Aquificales} species inferred
  from paralogy. Internal node labels indicate triple-based bootstrap
  support.}
\label{fig:simTree}
\end{figure}

For the \emph{Aquificales} data set \texttt{Proteinortho} predicts 2856
gene families, from which 850 contain duplications.  The reconstructed
species tree (see Fig.\ 3%\ref{fig:simTree}
, support $s=0.61$) is almost
identical to the tree presented in \cite{Lechner:14b}.  All species are
clustered correctly according to their taxonomic families.  A slight
difference refers to the two \emph{Sulfurihydrogenibium} species not being
directly clustered.  These two species are very closely related. With only
a few duplicates exclusively found in one of the species, the data was not
sufficient for the approach to resolve this subtree correctly.
Additionally, \emph{Hydrogenivirga sp.} is misplaced next to
\emph{Persephonella marina}.  This does not come as a surprise: Lechner
\emph{et al.} \citet{Lechner:14b} already suspected that the data from this
species was contaminated with material from \emph{Hydrogenothermaceae}.

The second data set comprises the genomes of 19 \emph{Enterobacteriales}
with 8218 gene families of which 15 consists of more than 50 genes and 1342
containing duplications.  Our orthology-based tree shows the expected
groupings of \emph{Escherichia} and \emph{Shigella} species and identifies
the monophyletic groups comprising \emph{Salmonella}, \emph{Klebsiella},
and \emph{Yersinia} species. The topology of the deeper nodes agrees only
in part with the reference tree from \texttt{PATRIC} database
\cite{Wattam:13}, see Supplemental Material for additional information. The
resulting tree has a support of $0.53$, reflecting that a few of the deeper
nodes are poorly supported.

Data sets of around 20 species with a few thousand gene families, each
having up to 50 genes, can be processed in reasonable time,
see Table \ref{tab:runtimeExtended}. 
However, depending on the
amount of noise in the data, the runtime for cograph editing can increase
dramatically even for families with less than 50 genes.

\section{Conclusion}

We have shown here both theoretically and in a practical implementation
that it is possible to access the phylogenetic information implicitly
contained in gene duplications and thus to reconstruct a species phylogeny
from information of paralogy only.  This source of information is strictly
complementary to the sources of information employed in phylogenomics
studies, which are always based on alignments of orthologous sequences. In
fact, 1:1 orthologs -- the preferred data in sequence-based phylogenetics
-- correspond to cographs that are complete and hence have a star as their
cotree and therefore do not contribute \emph{at all} to the phylogenetic
reconstruction in our approach. Access to the phylogenetic information
implicit in (co-)orthology data requires the solution of three NP-complete
combinatorial optimization problems. This is generally the case in
  phylogenetics, however: both the multiple sequence alignment problem and
  the extraction of maximum parsimony, maximum likelihood, or optimal
  Bayesian trees is NP-complete as well. Here we solve the
   computational tasks exactly for moderate-size problems by means of an
ILP formulation.  Using phylogenomic data for \emph{Aquificales} and
\emph{Enterobacteriales} we demonstrated that non-trivial phylogenies can
indeed be re-constructed from tree-free orthology estimates
alone. Just as sequence-based approaches in molecular phylogeny
  crucially depend on the quality of multiple sequence alignments, our
  approach is sensitive to the initial estimate $\Theta$ of the orthology
  relation. Horizontal gene transfer, furthermore, is currently not
  included in the model but rather treated as noise that disturbs the
  phylogenetic signal.  Simulated data indicate that the method is rather
robust and can tolerate surprisingly large levels of noise in the form
  of both, mis-predicted orthology and horizontal gene transfer, provided
 a sufficient number of independent gene families is available as
  input data. Importantly, horizontal gene-transfer can introduce a
bias only when many gene families are simultaneously affected by horizontal
transfer.  Lack of duplications, on the other hand, limits our resolution
at very short time scales, a regime in which sequence-based approaches work
very accurately.

We have used here an exact implementation as ILP to demonstrate the
  potential of the approach without confounding it with computational
  approximations.  Thus, the current implementation does not easily scale
to very large data sets.
Paralleling
the developments in sequence-based phylogenetics, where the
  NP-complete problems of finding a good input alignment and of
  constructing tree(s) maximizing the parsimony score, likelihood, or
  Bayesian posterior probability also cannot be solved exactly for large
  data sets, it will be necessary in practice to settle for heuristic
solutions. In sequence-based phylogenetics, these have improved over
  decades to the point where they are no longer a limiting factor in
  phylogenetic reconstruction.  Several polynomial time heuristics and
approximation algorithms have been devised \emph{already} for the triple
consistency problem \cite{Gasieniec:99,Maemura:07,Byrka:10a,Tazehkand:13}.
The cograph editing problem and the least resolved tree problem, in
contrast, have received comparably little attention so far, but constitute
the most obvious avenues for boosting computational efficiency.  Empirical
observations such as the resilience of our approach against overprediction
of orthologs in the input will certainly be helpful in designing efficient
heuristics.

In the long run, we envision that the species tree $S$, and the symbolic
representation of the event-annotated gene tree $(T,t)$ may serve as
constraints for a refinement of the initial estimate of $\Theta$, solely
making use only of (nearly) unambiguously identified branchings and event
assignments. A series of iterative improvements of estimates for $\Theta$,
$(T,t)$, and $S$, and, more importantly, methods that allow to
  accurately detect paralogs, may not only lead to more accurate trees and
orthology assignments, but could also turn out to be computationally more
efficient.

\section*{APPENDIX}

\section{Theory} 

In this section we give an expanded and more technical account of the
mathematical theory underlying the relationships between orthology
relations, triple sets, and the reconciliation of gene and triple sets. In
particular, we include here the proofs of the key novel results outline in
the main text. The notation in the main text is a subset of the one used
here. Theorems, remarks, and ILP formulations have the same numbers as in
the main text. As a consequence, the numberings in this supplement may not
always be in ascending order.

\subsection{Notation}

For an arbitrary set $X$ we denote with $\binom{X}{n}$ the set of
$n$-elementary subsets of X.  In the remainder of this paper, $L$ will
always denote a finite set of size at least three.  Furthermore, we will
denote with $\Gen$ a set of genes and with $\Spe$ a set of species and
assume that $|\Gen|\ge 3$ and $|\Spe|\ge 1$. Genes contained in $\Gen$ are
denoted by lowercase Roman letters $a,b,c,\ldots$ and species in $\Spe$ by
lower case Greek letters $\al,\be,\ga\ldots$. Furthermore, let
$\sigma:\Gen\to \Spe$ with $x \mapsto \sigma(x)$ be a mapping that assigns
to each gene $x\in \Gen$ its corresponding species $\sigma(x)=\chi\in
\Spe$. With $\sigma(\Gen)$ we denote the image of $\sigma$. W.l.o.g. we can
assume that the map $\sigma$ is surjective, and thus,
$\sigma(\Gen)=\Spe$. We assume that the reader is familiar with graphs and
its terminology, and refer to \citet{Diestel12} as a standard reference.

\subsection{Phylogenetic Trees} 

A tree $T=(V,E)$ is a connected cycle-free graph with vertex set $V(T)=V$
and edge set $E(T)=E$. A vertex of $T$ of degree one is called a
\emph{leaf} of $T$ and all other vertices of $T$ are called \emph{inner}
vertices. An edge of $T$ is an \emph{inner} edge if both of its end
vertices are inner vertices. The sets of inner vertices of $T$ is denoted
by $V^0$.  A tree $T$ is called \emph{binary} if each inner vertex has
outdegree two. A \emph{rooted tree} $T=(V,E)$ is a tree that contains a
distinguished vertex $\rho_T\in V$ called the \emph{root}.

A \emph{phylogenetic tree $T$ (on $L$)} is a rooted tree $T=(V,E)$
with leaf set $L\subseteq V$ such that no inner vertex has in- and
outdegree one and whose root $\rho_T\in V$ has indegree zero.
A phylogenetic tree  on $\Gen$, resp., on $\Spe$, 
is called \emph{gene tree}, resp., \emph{species tree}.

Let $T=(V,E)$ be a phylogenetic tree on $L$ with root $\rho_T$. The
ancestor relation $\preceq_T$ on $V$ is the partial order defined, for
all $x,y\in V$, by $x \preceq_T y$ whenever $y$ lies on the (unique)
path from $x$ to the root. Furthermore, we write $x \prec_T y$ if $x
\preceq_T y$ and $x\ne y$. For a non-empty subset of leaves
$L'\subseteq L$, we define $\lca_T(L')$, or the \emph{most recent
common ancestor of $L'$}, to be the unique vertex in $T$ that is the
least upper bound of $L'$ under the partial order $\preceq_T$. In
case $L'=\{x,y \}$, we put $\lca_T(x,y):=\lca_T(\{x,y\})$ and if
$L'=\{x,y,z \}$, we put $\lca_T(x,y,z):=\lca_T(\{x,y,z\})$. If there
is no danger of ambiguity, we will write $\lca(L')$ rather then	
$\lca_T(L')$. 

For $v\in V$, we denote with $L(v):=\{ y\in L| y\preceq_T v\}$ the
set of leaves in the subtree $T(v)$ of $T$ rooted in $v$. Thus,
$L(\rho_T)=L$ and $T(\rho_T)=T$. 

It is well-known that there is a one-to-one correspondence between
(isomorphism classes of) phylogenetic trees on $L$ and so-called
hierarchies on $L$.  For a finite set $L$, a \emph{hierarchy on $L$} is a
subset $\mathcal C$ of the power set $\mathbb P(L)$ such that 

\begin{enumerate}
\item[(i)] $L\in \mathcal{C}$
\item[(ii)]  $\{x\}\in \mathcal{C}$ for all $x\in L$ and
\item[(iii)] $p\cap q\in \{p, q, \emptyset\}$ for all $p, q\in
  \mathcal{C}$.
\end{enumerate}

The elements of $\mathcal{C}$ are called clusters. 
\ \\ \begin{thm}[\citealp{sem-ste-03a}]
  Let $\mc C$ be a collection of non-empty subsets of $L$.  Then, there is
  a phylogenetic tree $T$ on $L$ with $\mc C = \{L(v)\mid v\in V(T)\}$ if
  and only if $\mc C$ is a hierarchy on $L$.
  \label{A:thm:hierarchy}
\end{thm}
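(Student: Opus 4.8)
The plan is to prove both implications of this classical correspondence, treating the two directions separately.

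First, the forward direction. Suppose $T$ is a phylogenetic tree on $L$ and set $\mc{C} = \{L(v) \mid v \in V(T)\}$. I would verify the three hierarchy axioms directly. Axiom (i) holds since $L = L(\rho_T) \in \mc C$. Axiom (ii) holds because each leaf $x$ satisfies $L(x) = \{x\}$. For axiom (iii), take $u, v \in V(T)$ and compare $L(u)$ and $L(v)$: if $u$ and $v$ are $\preceq_T$-comparable, say $u \preceq_T v$, then $L(u) \subseteq L(v)$ and the intersection is the smaller set; if they are incomparable, the subtrees $T(u)$ and $T(v)$ are vertex-disjoint, so $L(u) \cap L(v) = \emptyset$. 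Hence $L(u) \cap L(v) \in \{L(u), L(v), \emptyset\}$, and $\mc C$ is a hierarchy.

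Second, the backward direction, which carries the real content. Given a hierarchy $\mc C$, I would build $T$ by taking its vertex set to be $\mc C$ itself, ordering the clusters by inclusion, and joining $q$ to $p$ by an edge whenever $p \subsetneq q$ is a cover, i.e.\ no cluster lies strictly between them. The root is declared to be $L \in \mc C$. The crucial structural fact, and the one place axiom (iii) is essential, is that for every cluster $p$ the clusters containing $p$ form a chain under inclusion: if $q_1, q_2 \supseteq p$ then $q_1 \cap q_2 \supseteq p \neq \emptyset$, so (iii) forces $q_1 \subseteq q_2$ or $q_2 \subseteq q_1$. Consequently every $p \neq L$ has a unique $\subseteq$-minimal strict supercluster, namely its parent, so each non-root vertex has in-degree one and the covering graph is a tree rooted at $L$ (connectedness follows since every cluster lies below $L$), with the $T$-ancestor relation coinciding with set inclusion.

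It then remains to check the three features that make $T$ a phylogenetic tree realizing $\mc C$. The leaves are exactly the singletons: by axiom (ii) these all lie in $\mc C$, and a singleton has no nonempty proper subcluster, so identifying $\{x\}$ with $x$ gives leaf set $L$. Because the ancestor relation equals inclusion, $L(v) = \{x \in L \mid \{x\} \subseteq v\} = v$ for each cluster $v$, whence $\{L(v) \mid v \in V(T)\} = \mc C$ as required. The main obstacle, verifying that no inner vertex has out-degree one, is where I expect to spend the most care: if a non-singleton cluster $q$ had a unique child $p \subsetneq q$, I would pick $x \in q \setminus p$ and follow the chain from $\{x\}$ upward to $q$; the child of $q$ on this path contains $x$ and therefore differs from $p$, contradicting uniqueness. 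The same argument applied at the root (using $|L| \geq 3$) shows $\rho_T = L$ has out-degree at least two, completing the proof.
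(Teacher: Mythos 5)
Your proof is correct and complete: both directions follow the standard argument (the clusters of a tree are pairwise compatible; conversely, the Hasse diagram of a hierarchy under set inclusion, rooted at $L$, is a phylogenetic tree whose clusters recover $\mc C$), and you include the often-skipped verification that no inner vertex of the constructed tree has out-degree one. The paper itself offers no proof of this statement---it is quoted from Semple and Steel---so there is nothing to compare against beyond noting that yours is the textbook construction the citation points to.
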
\ \\

The following result appears to be well known. We include a simple proof
since we were unable to find a reference for it.
\ \\ \begin{lem}
	The number of clusters $|\mc C|$ in a hierarchy $\mc C$ on $L$
	determined by a phylogenetic tree $T=(V,E)$ on $L$ is bounded
	by $2 |L|-1$. 
	\label{A:lem:nrC}
\end{lem}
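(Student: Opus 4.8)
The plan is to identify the clusters with the vertices of $T$ and then bound the number of vertices by an elementary edge count. First I would note that, by definition, $\mc C=\{L(v)\mid v\in V\}$ is the image of the map $v\mapsto L(v)$, so immediately $|\mc C|\le |V|$. It therefore suffices to prove $|V|\le 2|L|-1$. No injectivity of $v\mapsto L(v)$ is needed for the upper bound, although it does in fact hold: every inner vertex has outdegree at least two, so the subtrees rooted at its children partition its leaf set into nonempty pieces, forcing $L(u)\subsetneq L(v)$ whenever $u\prec_T v$.

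Writing $m:=|V^0|$ for the number of inner vertices, we have $|V|=|L|+m$, so the claim reduces to $m\le |L|-1$. The key step is the standard identity $\sum_{v\in V}\mathrm{outdeg}(v)=|E|=|V|-1$, valid because $T$ is a tree. The $|L|$ leaves contribute $0$ to the left-hand sum, while each inner vertex contributes at least $2$ by the defining property of a phylogenetic tree (the root is an inner vertex of outdegree at least two once $|L|\ge 2$). Hence $|V|-1\ge 2m$, that is $|L|+m-1\ge 2m$, which rearranges to $m\le |L|-1$ and thus $|V|\le 2|L|-1$. Chaining this with $|\mc C|\le|V|$ yields the claimed bound.

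I do not expect a genuine obstacle here; the argument is routine. The only points deserving a moment's care are that passing from $\mc C$ to $V$ costs nothing for an upper bound, so injectivity of $v\mapsto L(v)$ can be sidestepped entirely, and that the outdegree-at-least-two property really does apply to every non-leaf vertex, including the root, which is guaranteed by the phylogenetic-tree axioms as soon as $|L|\ge 3$.
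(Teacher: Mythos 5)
Your proof is correct, and it takes a slightly but genuinely different route from the paper's. The paper first notes $|\mc C|=|V|$ (which requires the injectivity of $v\mapsto L(v)$ that you observe can be sidestepped), then computes the binary case exactly via $|E|=2(|V|-|L|)=|V|-1$, and finally dispatches the general case with an induction on $|L|$ that is asserted rather than carried out. You instead run a single uniform degree-sum argument, $\sum_{v}\mathrm{outdeg}(v)=|E|=|V|-1\ge 2|V^0|$, which handles binary and non-binary trees at once, needs no induction and no case split, and only uses the inequality $|\mc C|\le|V|$. Your version is the more economical and self-contained one — it actually supplies the elementary count the paper's induction gestures at — while the paper's version has the minor side benefit of exhibiting the exact value $|V|=2|L|-1$ (hence $|L|-1$ inner vertices) in the binary case, a fact it reuses later when sizing the cluster matrix $M$. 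Your parenthetical care about the root's outdegree under the standing assumption $|L|\ge 3$ is exactly the right point to check and is handled correctly.
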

\begin{proof} 
  Clearly, the number of clusters $|\mathcal{C}|$ is determined by the number 
  of vertices $|V|$, since each leaf $v\in L$, 
  determines the singleton cluster $\{v\}\in \mathcal{C}$ 
  and each inner node $v$ has at least two children and thus, 
  gives rise to a new cluster $L(v) \in \mathcal{C}$. Hence, 
  $|\mathcal{C}| = |V|$.

  First, consider a binary phylogenetic tree $T=(V,E)$ on $|L|$
  leaves. Then there are $|V|-|L|$ inner vertices, all of out-degree
  two. Hence, $|E| = 2(|V|-|L|) =|V|-1$ and thus $|V|=2|L|-1$. Hence, $T$
  determines $|\mc C| =2|L|-1$ clusters and has in particular $|L|-1$ inner
  vertices.

  Now, its easy to verify by induction on the number of leaves $|L|$ that
  an arbitrary phylogenetic tree $T'=(V',E')$ has $n_0\leq |L|-1$ inner
  vertices and thus, $ |\mathcal{C}'|=|V'| = n_0 +|L| \leq 2|L|-1$
  clusters.  
\end{proof}

\subsection{Rooted Triples}

\subsubsection{Consistent Triple Sets} 

Rooted triples, sometimes also called rooted triplets \citep{Dress:book},
constitute an important concept in the context of supertree reconstruction
\citep{sem-ste-03a,Bininda:book} and will also play a major role here. A
rooted triple $r={\rt{(xy|z)}}$ is \emph{displayed} by a phylogenetic tree
$T$ on $L$ if $x,y,z\in L$ pairwise distinct, and the path from $x$ to $y$ does not intersect
the path from $z$ to the root $\rho_T$ and thus, having $\lca_T(x,y)\prec_T
\lca_T(x,y,z)$.  We denote with $L_r$ the set of the three leaves
$\{x,y,z\}$ contained in the triple $r=\rt{(xy|z)}$, and with
$L_R:=\cup_{r\in R} L_r$ the union of the leaf set of each $r\in R$. For a
given leaf set $L$, a triple set $R$ is said to be \emph{(strict) dense}
if for each $x,y,z\in L$ there is (exactly) one triple $r\in R$ with
$L_r=\{x,y,z\}$.  For a phylogenetic tree $T$, we denote by
$\mathfrak{R}(T)$ the set of all triples that are displayed by $T$. A set
$R$ of triples is \emph{consistent} if there is a phylogenetic tree $T$ on
$L_R$ such that $R\subseteq\mathfrak{R}(T)$, i.e., $T$ displays all triples
$r\in R$.

Not all sets of triples are consistent, of course. Given a triple set $R$
there is a polynomial-time algorithm, referred to in \citep{sem-ste-03a} as
\texttt{BUILD}, that either constructs a phylogenetic tree $T$ displaying
$R$ or recognizes that $R$ is not consistent or \emph{inconsistent}
\citep{Aho:81}. Various practical implementations have been described
starting with \citep{Aho:81}, improved variants are discussed in
\citep{Henzinger:99,Jansson:05}.  The problem of determining a maximum
consistent subset of an inconsistent set of triples, however, is NP-hard
and also APX-hard, see \citep{Byrka:10a,vanIersel:09} and the references
therein. We refer to \citep{Byrka:10} for an overview on the available
practical approaches and further theoretical results.

 For a given consistent triple set $R$, a rooted phylogenetic tree, in
  which the contraction of any edge leads to the loss of an input triple is
  called a \emph{least resolved} tree (for $R$).  Following the idea of
  Janson et al.\ \cite{Jansson:12}, we are mainly concerned with the
  construction of least resolved trees, that have in addition as few inner
  vertices as possible and cover the largest subset of compatible triples
  contained in $R$.  Finding a tree with a minimal number of inner nodes
for a given consistent set of rooted triples is also an NP-hard problem,
see \cite{Jansson:12}. Unless stated explicitly, we use the term
  \emph{least resolved tree} to refer to a tree with a minimum number of
  interior vertices, i.e., the least resolved trees in the sense of
  \cite{Jansson:12}. Alternative choices include the trees that display the
  minimum number of additional triples not contained in $R$.

\subsubsection{Graph Representation of Triples}

There is a quite useful representation of a set of triples $R$ as a graph
also known as \emph{Aho graph}, see \citep{Aho:81, huson2010phylogenetic,
  BS:95}. For given a triple set $R$ and an arbitrary subset $\S\subseteq
L_R$, the graph $[R,\S]$ has vertex set $\S$ and two vertices $x,y\in \S$
are linked by an edge, if there is a triple $\rt{(xy|z)} \in R$ with $z\in
\S$.  Based on connectedness properties of the graph $[R,\S]$ for
particular subsets $\S\subseteq L_R$, the algorithm \texttt{BUILD}
recognizes if $R$ is consistent or not. In particular, this algorithm makes
use of the following well-known theorem.
\bigskip

\begin{thm}[\citealp{Aho:81,BS:95}]
A set of rooted triples $R$ is consistent if and only if for each
subset $\S\subseteq L_R$, $|\S|>1$ the graph $[R,\S]$ is disconnected. 
\label{A:thm:ahograph}
\end{thm}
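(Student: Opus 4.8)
The plan is to prove both implications directly, mirroring the recursive logic of \texttt{BUILD}. For the forward direction I would assume $R$ is consistent, fix a phylogenetic tree $T$ on $L_R$ with $R\subseteq\mathfrak{R}(T)$, and show that every $[R,\S]$ with $|\S|>1$ is disconnected. The key tool is the restriction $T|_{\S}$ of $T$ to $\S$ (the minimal subtree spanning $\S$ with degree-two vertices suppressed), which displays exactly those triples of $\mathfrak{R}(T)$ whose three leaves lie in $\S$. Since $|\S|>1$, the root of $T|_{\S}$ has at least two children, whose leaf sets partition $\S$ into blocks $\S_1,\dots,\S_k$ with $k\ge 2$. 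I would then argue that $[R,\S]$ has no edge between distinct blocks: an edge $xy$ with $x\in\S_i$, $y\in\S_j$ and $i\ne j$ would require a triple $\rt{(xy|z)}\in R$ with $z\in\S$, but for leaves in different blocks $\lca_{T|_{\S}}(x,y)$ is the root of $T|_{\S}$, whence $\lca_{T|_{\S}}(x,y,z)$ is also the root for every $z\in\S$; thus $T|_{\S}$ cannot display $\rt{(xy|z)}$, contradicting $\rt{(xy|z)}\in R\subseteq\mathfrak{R}(T)$. Hence the blocks are unions of connected components and $[R,\S]$ is disconnected.

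For the converse I would assume that $[R,\S]$ is disconnected for every $\S\subseteq L_R$ with $|\S|>1$ and construct a displaying tree by induction on $|L_R|$. If $R=\emptyset$ any star works; otherwise $|L_R|\ge 3$. Taking $\S=L_R$, the hypothesis gives connected components $\S_1,\dots,\S_k$ with $k\ge 2$, so each $\S_i$ is a proper subset of $L_R$. I would set $R_i=\{r\in R : L_r\subseteq\S_i\}$, build $T_i$ recursively on $\S_i$, and form $T$ by joining the $T_i$ under a common new root. To invoke the induction hypothesis I need that for every $\S'\subseteq\S_i$ with $|\S'|>1$ the graph $[R_i,\S']$ is disconnected; this follows because $[R_i,\S']$ and $[R,\S']$ coincide (an edge of $[R,\S']$ comes from a triple $\rt{(xy|z)}$ with $x,y,z\in\S'\subseteq\S_i$, so it already lies in $R_i$), and $[R,\S']$ is disconnected by assumption.

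It then remains to verify that the assembled tree $T$ displays all of $R$. Take any $\rt{(xy|z)}\in R$. Since $z\in L_R$, the pair $x,y$ is joined by an edge of $[R,L_R]$, so $x$ and $y$ lie in a common component $\S_i$. If $z\in\S_i$ as well, then $\rt{(xy|z)}\in R_i$ and the induction hypothesis guarantees that $T_i$, hence $T$, displays it. If instead $z\in\S_j$ with $j\ne i$, then in $T$ we have $\lca_T(x,y)\preceq_T\rho_{T_i}\prec_T\rho_T=\lca_T(x,y,z)$, so $\rt{(xy|z)}$ is again displayed. Either way $R\subseteq\mathfrak{R}(T)$, completing the induction.

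The main obstacle, and the point I would be most careful about, is the converse construction: the temptation is to demand that each triple lie entirely within one component, which is false in general. The correct and subtle observation is that only the ingroup pair $x,y$ of $\rt{(xy|z)}$ must be co-located in a component — the outgroup leaf $z$ is allowed to fall into another component, where it is correctly separated by the new root. Getting the bookkeeping right so that the recursion's local Aho graphs $[R_i,\S']$ agree with the global ones $[R,\S']$ (and thus that the inductive hypothesis genuinely applies) is the technical heart of the argument; the forward direction, by contrast, reduces cleanly to the behaviour of displayed triples under restriction to $\S$.
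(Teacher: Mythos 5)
The paper does not prove this statement at all: it is quoted verbatim from the literature (Aho et al.\ and Bryant--Steel) as a known fact underlying the correctness of \texttt{BUILD}, so there is no in-paper proof to compare against. Your argument is a correct, self-contained proof and is essentially the standard correctness argument for the Aho algorithm: the forward direction via the restriction $T|_{\S}$ and the observation that leaves in distinct blocks below $\lca(\S)$ can never form the ingroup pair of a displayed triple with outgroup in $\S$; the converse by recursing on the components of $[R,L_R]$ and checking that only the ingroup pair of each triple need be co-located in a component. One small point you should make explicit: as written, the induction is on a triple set $R_i$ together with a leaf set $\S_i$ that may strictly contain $L_{R_i}$ (a component can contain leaves touched by no triple of $R_i$), so the statement you actually induct on is the mild strengthening ``for any $R'$ and any $L'\supseteq L_{R'}$ with $[R',\S']$ disconnected for all $\S'\subseteq L'$, $|\S'|>1$, there is a tree on $L'$ displaying $R'$''; this costs nothing (attach the untouched leaves to the root) but is needed for the recursion to be well-formed. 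With that caveat recorded, the identification $[R_i,\S']=[R,\S']$ for $\S'\subseteq\S_i$ and the final case split on where $z$ lands are exactly right, and the proof goes through.
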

\bigskip

\begin{lem}[\citealp{huson2010phylogenetic}]
  Let $R$ be a dense set of rooted triples on $L$.  Then for each
  $\S\subseteq L$, the number of connected components of the Aho graph
  $[R,\S]$ is at most two.
\label{A:lem:dense-binary}
\end{lem}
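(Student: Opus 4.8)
The plan is to reduce the statement to an elementary fact about the \emph{independence number} of the Aho graph $[R,\S]$, exploiting density in the only place it is needed. First I would record that $\S\subseteq L=L_R$, so that $[R,\S]$ is well defined: since $R$ is dense on $L$ and $|L|\ge 3$, every element of $L$ lies in some $3$-subset of $L$ and hence occurs in at least one triple of $R$, giving $L\subseteq L_R$; the reverse inclusion is immediate. I would then dispose of the trivial case $|\S|\le 2$, where $[R,\S]$ has at most two vertices and thus at most two components. (For $|\S|=2$ there is in fact no edge at all, since an edge $\{x,y\}$ would require a triple $\rt{(xy|z)}\in R$ with $z\in\S\setminus\{x,y\}=\emptyset$.)

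For the substantive case $|\S|\ge 3$, the observation I would establish is that $[R,\S]$ contains no independent set of three vertices. Take any three pairwise distinct $x,y,z\in\S$. Because $R$ is dense and $\{x,y,z\}$ is a $3$-subset of $L$, there is a triple $r\in R$ with $L_r=\{x,y,z\}$; without loss of generality $r=\rt{(xy|z)}$, the other two orientations being symmetric. Since the third leaf $z$ belongs to $\S$, the defining condition of the Aho graph places an edge between $x$ and $y$ in $[R,\S]$. Thus among every three vertices of $\S$ at least one pair is adjacent, i.e.\ the independence number satisfies $\alpha([R,\S])\le 2$.

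Finally I would derive the component bound from this. Suppose for contradiction that $[R,\S]$ had at least three connected components, and pick one vertex $v_i$ from each of three distinct components. Vertices lying in different components are never joined by an edge, so $\{v_1,v_2,v_3\}$ is an independent set of size three, contradicting $\alpha([R,\S])\le 2$. Hence $[R,\S]$ has at most two connected components, as claimed. I do not expect a genuine obstacle in this argument: it is a one-line consequence of density once the problem is rephrased through the independence number, and the only point demanding minor care is the bookkeeping $\S\subseteq L_R$ that makes the Aho graph meaningful for \emph{every} $\S\subseteq L$.
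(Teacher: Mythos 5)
Your proof is correct. Note that the paper does not actually prove Lemma~\ref{A:lem:dense-binary}: it is imported by citation from the phylogenetic networks literature, so there is no in-paper argument to compare against. Your reduction to the independence number --- density forces an edge inside every $3$-subset of $\S$, hence $\alpha([R,\S])\le 2$, hence at most two components since a transversal of three components would be an independent triple --- is a clean, self-contained, and elementary derivation of exactly the statement the paper uses, and the preliminary bookkeeping ($L=L_R$ under density, the trivial cases $|\S|\le 2$) is handled correctly.
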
\bigskip

The tree computed with \texttt{BUILD} based on the Aho graph for a
  consistent set of rooted triples $R$ is denoted by $\mathrm{Aho}(R)$. 
Lemma~\ref{A:lem:dense-binary} implies that $\mathrm{Aho}(R)$ must be
binary for a consistent dense set of rooted triples.  We will use the Aho
graph and its key properties as a frequent tool in upcoming proofs.

For later reference, we recall \smallskip

\begin{lem}[\citealp{BS:95}]
  If $R'$ is a subset of the triple set $R$ and $L$ is
  a leaf set, then $[R',L]$ is a subgraph of $[R,L]$. 
  \label{A:lem:subgraph}
\end{lem}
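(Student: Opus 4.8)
The plan is to unwind the definition of the Aho graph and check the two conditions for being a subgraph: equality (or inclusion) of vertex sets, and inclusion of edge sets. First I would note that both $[R',L]$ and $[R,L]$ are defined on the \emph{same} prescribed vertex set $L$; by the definition of the graph $[\,\cdot\,,\S]$ recalled above, the first argument influences only which pairs of vertices are joined by an edge, not the vertex set itself. Hence the claim ``$[R',L]$ is a subgraph of $[R,L]$'' reduces entirely to showing that every edge of $[R',L]$ is also an edge of $[R,L]$.

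For the edge inclusion I would argue directly from the existential condition in the definition. Take an arbitrary edge $\{x,y\}$ of $[R',L]$. By definition this edge is present precisely because there exists a triple $\rt{(xy|z)}\in R'$ whose third leaf satisfies $z\in L$. Since $R'\subseteq R$ by hypothesis, that same triple satisfies $\rt{(xy|z)}\in R$, and its third leaf $z$ still lies in $L$. Thus the defining condition for an edge of $[R,L]$ is met by the very same witness, so $\{x,y\}$ is an edge of $[R,L]$ as well.

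Combining the two observations, $[R',L]$ and $[R,L]$ share the vertex set $L$ and $E([R',L])\subseteq E([R,L])$, which is exactly the assertion that $[R',L]$ is a subgraph of $[R,L]$. I do not expect a genuine obstacle here: the statement is essentially a restatement of monotonicity of the Aho-graph construction in its triple-set argument, and the only point that requires any care is confirming that the existential witness $\rt{(xy|z)}$ for an edge of the smaller graph transfers verbatim to the larger graph, which is immediate from $R'\subseteq R$.
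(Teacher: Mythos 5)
Your proof is correct: the Aho graph $[R,L]$ has vertex set $L$ regardless of the triple set, and edge membership is witnessed existentially by a triple, so monotonicity in the triple-set argument is immediate from $R'\subseteq R$. The paper itself gives no proof of this lemma --- it is simply recalled with a citation to Bryant and Steel --- and your direct unwinding of the definition is exactly the argument one would supply.
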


\subsubsection{Closure Operations and Inference Rules}

The requirement that a set $R$ of triples is consistent, and thus, that
there is a tree displaying all triples, allows to infer new triples from
the set of all trees displaying all triples of $R$ and to define a
\emph{closure operation} for $R$, which has been extensively studied in the
last decades, see \citep{BS:95, GSS:07,
  Bryant97,huber2005recovering,BBDS:00}. Let $\langle R \rangle$ be the set
of all phylogenetic trees on $L_R$ that display all the triples of $R$.
The closure of a consistent set of rooted triples $R$ is defined as
$$\cl(R) = \bigcap_{T\in \langle R \rangle} \mathfrak{R}(T).$$ 
This operation satisfies the usual three properties of a closure operator,
namely: $R \subseteq \cl(R)$; $\cl(\cl(R))=\cl(R)$ and if $R' \subseteq R$,
then $\cl(R')\subseteq \cl(R)$. We say $R$ is \emph{closed} if $R=\cl(R)$.
Clearly, for any tree $T$ it holds that $\mathfrak{R}(T)$ is closed.  The
brute force computation of the closure of a given consistent set $R$ runs
in $O(|R|^5)$ time \citep{BS:95}: For any three leaves $x,y,z\in L_R$ test
whether exactly one of the sets $R\cup\{\rt{(xy|z)}\}$,
$R\cup\{\rt{(xz|y)}\}$, $R\cup\{\rt{(zy|x)}\}$ is consistent, and if so,
add the respective triple to the closure $\cl(R)$ of $R$.

For a consistent set $R$ of rooted triples we write $R\vdash \rt{(xy|z)}$
if any phylogenetic tree that displays all triples of $R$ also displays
$\rt{(xy|z)}$. In other words, $R\vdash \rt{(xy|z)}$ iff $\rt{(xy|z)}\in
\cl(R)$. In a work of Bryant and Steel \citep{BS:95}, in which the authors
extend and generalize the work of Dekker \citep{Dekker86}, it was shown
under which conditions it is possible to infer triples by using only
subsets $R'\subseteq R$, i.e., under which conditions $R\vdash \rt{(xy|z)}
\implies R'\vdash \rt{(xy|z)}$ for some $R'\subseteq R$. In particular, we
will make frequent use of the following inference rules:
\renewcommand{\theequation}{\roman{equation}}
\begin{align}
  \{\rt{(ab|c)}, \rt{(ad|c)}\} &\vdash \rt{(bd|c)} \label{eq:infRule1} \\
  \{\rt{(ab|c)}, \rt{(ad|b)}\} & \vdash \rt{(bd|c)},\rt{(ad|c)} \label{eq:infRule2} \\
  \{\rt{(ab|c)}, \rt{(cd|b)}\} &\vdash \rt{(ab|d)},\rt{(cd|a)}.\label{eq:infRule3}
\end{align}
\renewcommand{\theequation}{\arabic{equation}}

\begin{rem} 
  It is an easy task to verify, 
  that such inference rules based on two triples $r_1, r_2
  \in R$ can lead only to new triples, whenever 
  $|L_{r_1}\cap L_{r_2}| = 2$. Hence, the latter three
  stated rules are the only ones that lead to new triples for a given
  pair of triples in a strictly dense triple set. 
  \label{A:rem:only}
\end{rem}\bigskip

For later reference and the ILP formulation, we give the following lemma.
\bigskip

\begin{lem}
	\label{lem:suffRule}
	Let $R$ be a strictly dense set of rooted triples. 
	For all  $L'=\{a,b,c,d\} \subseteq L_R$ we have the following statements: 
	
	All triples inferred by rule \eqref{eq:infRule2} applied on triples $r\in R$ with $L_r\subset L'$ are contained in $R$ 
	if and only if all triples inferred by rule \eqref{eq:infRule3} applied on triples $r\in R$ with $L_r\subset L'$
	are contained in $R$.  
	
	 Moreover, if all triples inferred by rule \eqref{eq:infRule2} applied on triples $r\in R$ with $L_r\subset L'$ are contained in $R$  
	 then 	all triples inferred by rule \eqref{eq:infRule1} applied on triples $r\in R$ with $L_r\subset L'$ are contained in $R$.  
\end{lem}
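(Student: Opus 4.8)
The plan is to exploit strict density: on the fixed four-element set $L'=\{a,b,c,d\}$ the set $R$ contains exactly one triple on each of the four $3$-element subsets $\{a,b,c\}$, $\{a,b,d\}$, $\{a,c,d\}$, $\{b,c,d\}$, and by Remark~\ref{A:rem:only} every application of \eqref{eq:infRule1}, \eqref{eq:infRule2} or \eqref{eq:infRule3} with $L_r\subset L'$ uses two of these four triples as premises (any two of them share exactly two leaves, since $L_r$ is a $3$-subset of the $4$-set $L'$) and has its conclusion again among these same four triples. Thus each assertion of the lemma reduces to a finite check on the four triples, and ``contained in $R$'' means equal to the unique triple that $R$ assigns to the relevant $3$-subset. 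The only bookkeeping subtlety, which is where care is needed, is to recognise when an unordered pair of triples actually matches the premise pattern of a rule: a pair is an instance of \eqref{eq:infRule2} exactly when the two shared leaves form the cherry of one premise, whereas it is an instance of \eqref{eq:infRule3} exactly when the two shared leaves play \emph{crossed} roles, each lying in the cherry of one premise and on the outside of the other. Matching the generic labels $a,b,c,d$ of the rules to the actual elements, using the symmetry $\rt{(xy|z)}=\rt{(yx|z)}$, is the main (purely mechanical) obstacle throughout.

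For the second assertion I would fix an instance of \eqref{eq:infRule1}, i.e.\ assume $\rt{(ab|c)},\rt{(ad|c)}\in R$, and determine the forced value of the triple that $R$ assigns to $\{b,c,d\}$ by a case distinction on the triple $t$ that $R$ assigns to $\{a,b,d\}$. In each of the three cases $t\in\{\rt{(ad|b)},\rt{(ab|d)},\rt{(bd|a)}\}$, one of the pairs $\{\rt{(ab|c)},t\}$ or $\{\rt{(ad|c)},t\}$ is, after the relabelling above, an instance of \eqref{eq:infRule2} whose conclusion contains $\rt{(bd|c)}$; since $R$ is closed under \eqref{eq:infRule2}, this forces $\rt{(bd|c)}\in R$, which is exactly the triple that \eqref{eq:infRule1} infers. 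Hence closure under \eqref{eq:infRule2} on $L'$ implies closure under \eqref{eq:infRule1} on $L'$.

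For the equivalence I would prove the two implications separately by the same technique. Assuming closure under \eqref{eq:infRule2}, I take an instance $\rt{(ab|c)},\rt{(cd|b)}$ of \eqref{eq:infRule3} and show, by excluding the two ``wrong'' possibilities for the triples on $\{a,b,d\}$ and on $\{a,c,d\}$, that they must equal $\rt{(ab|d)}$ and $\rt{(cd|a)}$; each exclusion is obtained by exhibiting a concrete instance of \eqref{eq:infRule2} whose conclusion would otherwise contradict the (unique, by strict density) triple already fixed on $\{a,b,c\}$ or on $\{b,c,d\}$. The converse is symmetric: assuming closure under \eqref{eq:infRule3}, I take an instance $\rt{(ab|c)},\rt{(ad|b)}$ of \eqref{eq:infRule2}, first pin the triple on $\{b,c,d\}$ down to $\rt{(bd|c)}$ by ruling out $\rt{(cd|b)}$ and $\rt{(bc|d)}$ via the \eqref{eq:infRule3}-instances $\{\rt{(ab|c)},\rt{(cd|b)}\}$ and $\{\rt{(ad|b)},\rt{(bc|d)}\}$, and then pin the triple on $\{a,c,d\}$ down to $\rt{(ad|c)}$ by ruling out $\rt{(cd|a)}$ and $\rt{(ac|d)}$ via the \eqref{eq:infRule3}-instances $\{\rt{(ab|c)},\rt{(cd|a)}\}$ and $\{\rt{(bd|c)},\rt{(ac|d)}\}$. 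In every case the contradiction comes from strict density, the inferred triple lying on a $3$-subset that already carries a different fixed triple. The crux is again only the correct identification of the crossed cherry/outside roles so that each cited pair really is an instance of the intended rule; once that matching is set up, the contradictions are immediate and the case analysis closes.
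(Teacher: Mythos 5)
Your proposal is correct, and every rule instance you name does match the premise pattern you claim (I checked the cherry/outside role assignments; e.g.\ $\{\rt{(ad|c)},\rt{(ab|d)}\}$ and $\{\rt{(ab|c)},\rt{(bd|a)}\}$ are indeed instances of \eqref{eq:infRule2} concluding $\rt{(bd|c)}$, and $\{\rt{(ad|b)},\rt{(bc|d)}\}$, $\{\rt{(bd|c)},\rt{(ac|d)}\}$ are instances of \eqref{eq:infRule3} yielding the required contradictions). The underlying technique -- exhausting, via strict density, the three possible values of the unique triple on an auxiliary $3$-subset of $L'$ and killing the wrong ones with a rule application -- is exactly the paper's technique. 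The differences are organizational. For the first statement the paper simply cites Lemma~2 of \cite{GM-13}, whereas you prove both implications from scratch; your argument is a legitimate self-contained replacement, at the cost of the longest case analysis in the proof. For the second statement the paper argues by contradiction, casing on the triple carried by $\{b,c,d\}$ and invoking closure under \emph{both} \eqref{eq:infRule2} and \eqref{eq:infRule3} (hence silently relying on the first statement), while you case on the triple carried by $\{a,b,d\}$ and use closure under \eqref{eq:infRule2} alone; your version is marginally cleaner in that the second statement no longer depends on the first. The only thing to tighten in a final write-up is to actually display the label matchings for each cited pair, since, as you note yourself, that mechanical identification of roles is the entire content of the proof.
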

\begin{proof}
	The first statement was established in \citep[Lemma 2]{GM-13}. 
	
	For the second statement assume that for all pairwise distinct
	 $L'=\{a,b,c,d\} \subseteq L_R$ it holds that all triples inferred by rule 
	\eqref{eq:infRule2}, or equivalently, by rule \eqref{eq:infRule3}
	applied on triples $r\in R$ with $L_r\subset L'$ are contained in $R$.
	Assume for contradiction that there are triples 
	$\rt{(ab|c)}, \rt{(ad|c)} \in R$, but $\rt{(bd|c)}\not\in R$.
	Since $R$ is strictly dense, we have either $\rt{(bc|d)}\in R$
	or $\rt{(cd|b)}\in R$. In the first case and since $\rt{(ab|c)}\in R$,
	 rule \eqref{eq:infRule2} implies that $\rt{(ac|d)}\in R$, a contradiction. 
	In the second case and since $\rt{(ab|c)}\in R$,
	rule \eqref{eq:infRule3} implies that $\rt{(cd|a)}\in R$, a contradiction. 
\end{proof}

We are now in the position to prove the following important and helpful
lemmas and theorem. The final theorem basically states that consistent strict
dense triple sets can be characterized by the closure of any two element
subset of $R$. Note, an analogous result was established by \cite{GM-13}.
However, we give here an additional direct and transparent proof.
\bigskip 

\begin{lem}
  Let $R$ be a strictly dense set of triples on $L$ such that for all
  $R'\subseteq R$ with $|R'|= 2$ it holds $\cl(R')\subseteq R$. Let
  $x\in L$ and $L'=L\setminus \{x\}$. Moreover, let $R_{|L'}\subset R$
  denote the subset of all triples $r\in R$ with $L_r\subseteq L'$. Then
  $R_{|L'}$ is strictly dense and for all $R'\subseteq R_{|L'}$ with
  $|R'|= 2$ it holds $\cl(R')\subseteq R_{|L'}$. 
  \label{A:lem:rest}
\end{lem}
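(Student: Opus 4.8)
The plan is to verify the two assertions separately, both of which follow almost directly from the definitions once one notes that the closure of a two-element set cannot introduce leaves lying outside the leaf sets of its two triples.

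First I would establish that $R_{|L'}$ is strictly dense on $L'$. Take any three distinct leaves $\{u,v,w\}\subseteq L'$. Since $L'\subseteq L$ and $R$ is strictly dense on $L$, there is exactly one triple $r\in R$ with $L_r=\{u,v,w\}$. Because $L_r\subseteq L'$, this $r$ lies in $R_{|L'}$ by definition, and it is the only triple in $R_{|L'}$ on $\{u,v,w\}$ since it is already the unique such triple in the larger set $R$. Hence $R_{|L'}$ contains precisely one triple per three-element subset of $L'$, i.e.\ it is strictly dense on $L'$.

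For the closure condition, I would fix an arbitrary $R'=\{r_1,r_2\}\subseteq R_{|L'}$. The key observation is that $L_{r_1},L_{r_2}\subseteq L'$, so $L_{R'}=L_{r_1}\cup L_{r_2}\subseteq L'$. By definition $\cl(R')=\bigcap_{T\in\langle R'\rangle}\mathfrak{R}(T)$, where every $T\in\langle R'\rangle$ is a phylogenetic tree on the leaf set $L_{R'}$; consequently every triple displayed by such a $T$ has all three of its leaves in $L_{R'}$. Therefore each $s\in\cl(R')$ satisfies $L_s\subseteq L_{R'}\subseteq L'$. On the other hand, $R'\subseteq R_{|L'}\subseteq R$ is a two-element subset of $R$, so the hypothesis on $R$ yields $\cl(R')\subseteq R$. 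Combining the two facts, every $s\in\cl(R')$ lies in $R$ and satisfies $L_s\subseteq L'$, which is exactly the defining condition for membership in $R_{|L'}$. Hence $\cl(R')\subseteq R_{|L'}$, as required.

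The argument is short and I do not expect a serious obstacle; the only point that warrants care is the remark that $\cl(R')$ introduces no new leaves beyond $L_{R'}$. This is immediate from the chosen definition of the closure via trees on $L_{R'}$, but it is the hinge of the whole proof: it is what lets one \emph{transfer} the ambient hypothesis $\cl(R')\subseteq R$ down to the restriction $R_{|L'}$ without having to re-examine the inference rules \eqref{eq:infRule1}--\eqref{eq:infRule3} directly. One may also note in passing that, since $R$ is strictly dense and satisfies the pairwise closure condition, Theorem~\ref{thm:consistIFFpairwise} guarantees that $R$ is consistent, so every $R'\subseteq R$ is consistent and $\cl(R')$ is well defined.
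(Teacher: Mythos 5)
Your proof is correct and takes essentially the same route as the paper: both arguments hinge on the observation that the closure of a two-element subset of $R_{|L'}$ cannot produce a triple containing the removed leaf $x$, so the ambient hypothesis $\cl(R')\subseteq R$ descends to $R_{|L'}$ (the paper phrases this as a contradiction, you argue directly). One small caution: your closing remark invoking Theorem~\ref{thm:consistIFFpairwise} to justify that $\cl(R')$ is well defined is circular, since that theorem is proved by induction \emph{using} this lemma --- but the remark is inessential, as the lemma's hypothesis already presupposes that $\cl(R')$ is defined for every two-element subset.
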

\begin{proof}
  Clearly, since $R$ is strictly dense and since $R_{|L'}$ contains all
  triples except the ones containing $x$ it still holds that for all
  $a,b,c\in L'$ there is exactly one triple $r\in R_{|L'}$ with
  $a,b,c\in L_r$. Hence, $R_{|L'}$ is strictly dense. 
  
  Assume for contradiction, that there are triples	 
  $r_1, r_2\in R_{|L'}\subset R$ with $\cl(r_1,r_2)\not\subseteq R_{|L'}$. 
  By construction of  $R_{|L'}$, no triples $r_1, r_2\in R_{|L'}$ 
  can infer a new triple $r_3$ with $x\in L_{r_3}$.
  This immediately implies that $\cl(r_1,r_2)\not\subseteq R$, 
  a contradiction.  
\end{proof}

\begin{lem}
  Let $R$ be a strictly dense set of triples on $L$ with $|L|=4$. If
  for all $R'\subseteq R$ with $|R'|= 2$ holds $\cl(R')\subseteq R$
  then $R$ is consistent. 
  \label{A:lem:L=4}
\end{lem}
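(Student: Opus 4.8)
The plan is to argue directly: assuming the pairwise-closure hypothesis, I will determine $R$ completely and then exhibit a tree that displays it. Since $|L|=4$, write $L=\{a,b,c,d\}$; strict density means $R$ contains exactly one triple on each of the four $3$-element subsets. Because both the statement and the hypothesis are invariant under relabeling of the leaves, I may assume without loss of generality that the triple of $R$ on $\{a,b,c\}$ is $\rt{(ab|c)}$. I then case-split on the triple $t_2$ that $R$ carries on $\{a,b,d\}$, which by strict density is one of $\rt{(ab|d)}$, $\rt{(ad|b)}$, or $\rt{(bd|a)}$.

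In each case a known pair of triples overlaps in exactly two leaves, so by Remark~\ref{A:rem:only} one of the inference rules \eqref{eq:infRule1}--\eqref{eq:infRule3} applies, and the closure hypothesis $\cl(R')\subseteq R$ forces the inferred triples to lie in $R$. Concretely, if $t_2=\rt{(ad|b)}$ then rule~\eqref{eq:infRule2} applied to $\{\rt{(ab|c)},\rt{(ad|b)}\}$ forces $\rt{(ad|c)}$ and $\rt{(bd|c)}$ into $R$, so the triples on $\{a,c,d\}$ and $\{b,c,d\}$ are determined; the case $t_2=\rt{(bd|a)}$ is symmetric under $a\leftrightarrow b$. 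If instead $t_2=\rt{(ab|d)}$, I split further on the triple of $R$ on $\{a,c,d\}$: for $\rt{(ad|c)}$, rule~\eqref{eq:infRule1} applied to $\{\rt{(ab|c)},\rt{(ad|c)}\}$ forces $\rt{(bd|c)}$; for $\rt{(ac|d)}$, rule~\eqref{eq:infRule1} applied to $\{\rt{(ab|d)},\rt{(ac|d)}\}$ forces $\rt{(bc|d)}$; and for $\rt{(cd|a)}$, a relabeled instance of rule~\eqref{eq:infRule3} applied to $\{\rt{(ab|c)},\rt{(cd|a)}\}$ forces $\rt{(cd|b)}$. In every branch the one remaining triple is then uniquely fixed by strict density, so $R$ is pinned down to one of five explicit triple sets.

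It remains to observe that each of these five sets equals $\mathfrak{R}(T)$ for one of the five binary trees $T$ obtained by inserting the leaf $d$ into an edge of the three-leaf tree on $\{a,b,c\}$ that displays $\rt{(ab|c)}$ (the balanced quartet together with the four caterpillars), and hence $R$ is consistent. Equivalently, one may check via Theorem~\ref{A:thm:ahograph} that the Aho graph $[R,L]$ is disconnected in each of the five cases, the Aho graphs on the proper subsets being automatically disconnected by strict density. The main point to get right is the bookkeeping of the case split together with the correct relabeling of the schematic leaves $a,b,c,d$ in rules \eqref{eq:infRule1}--\eqref{eq:infRule3} when matching them against a concrete overlapping pair; here Lemma~\ref{lem:suffRule} is a convenient safeguard, since it guarantees that closure under rule~\eqref{eq:infRule2} already entails closure under rules~\eqref{eq:infRule3} and \eqref{eq:infRule1}, so no genuinely new type of inference can be overlooked. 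I expect no further difficulty, as each branch closes after a single rule application.
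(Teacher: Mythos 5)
Your proof is correct, but it runs in the opposite direction from the paper's. The paper argues by contraposition: if $R$ is inconsistent then, since strict density forces every Aho graph $[R,\S]$ with $|\S|\le 3$ to be disconnected, $[R,L]$ itself must be connected (Theorem~\ref{A:thm:ahograph}); with $|R|=4$ this leaves only three graphs up to isomorphism (a $P_4$, a chordless square, and a $P_4$ plus a chord), and in each case the paper exhibits a two-element subset $R'$ with $\cl(R')\not\subseteq R$. You instead argue forward: fixing $\rt{(ab|c)}$ by relabeling and case-splitting on the triple over $\{a,b,d\}$ (and, in the $\rt{(ab|d)}$ branch, further on $\{a,c,d\}$), the closure hypothesis together with rules \eqref{eq:infRule1}--\eqref{eq:infRule3} pins $R$ down to one of exactly five strictly dense sets, each of which is $\mathfrak{R}(T)$ for one of the five binary $4$-leaf trees displaying $\rt{(ab|c)}$. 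I checked all five branches and the rule instantiations (including the relabeled instance of \eqref{eq:infRule3} in the $\rt{(cd|a)}$ sub-case); they are right. The two proofs cost about the same amount of case analysis, but they buy slightly different things: the paper's version isolates exactly which pair of triples witnesses the failure of closure when $R$ is inconsistent, which matches the contradiction-hunting style used later in the induction step of Theorem~\ref{thm:consistIFFpairwise}; your version is constructive and proves marginally more, namely that a strictly dense, pairwise-closed triple set on four leaves is precisely the displayed-triple set of a binary tree, so it is not merely consistent but closed and identifies a unique binary tree. Your appeal to Lemma~\ref{lem:suffRule} is indeed only a safeguard and not logically needed, since the direct argument requires only soundness of the inference rules, not their completeness.
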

\begin{proof}
  By contraposition, assume that $R$ is not consistent. Thus, the Aho graph
  $[R,\S]$ is connected for some $\S\subseteq L$. Since $R$ is strict
  dense, for any $\S\subseteq L$ with $|\S|=2$ or $|\S|=3$ the Aho graph
  $[R,\S]$ is always disconnected.  Hence, $[R,\S]$ for $\S=L$ must be
  connected.  The graph $[R,L]$ has four vertices, say $a,\ b,\ c$ and $d$.
  The fact that $R$ is strictly dense and $|L|=4$ implies that $|R|=4$ and in
  particular, that $[R,L]$ has three or four edges. Hence, the graph
  $[R,L]$ is isomorphic to one of the following graphs $G_0$, $G_1$ or $G_2$.

  The graph $G_0$ is isomorphic to a path $x_1-x_2-x_3-x_4$ on four
  vertices; $G_1$ is isomorphic to a chordless square; and $G_2$ is
  isomorphic to a path $x_1-x_2-x_3-x_4$ on four vertices where the edge
  $\{x_1,x_3\}$ or $\{x_2,x_4\}$ is added. W.l.o.g. assume that for the
  first case $[R,L]\simeq G_0$ has edges $\{a,b\}$, $\{b,c\}$, $\{c,d\}$;
  for the second case $[R,L]\simeq G_1$ has edges $\{a,b\}$, $\{a,c\}$,
  $\{c,d\}$ and $\{b,d\}$ and for the third case assume that $[R,L]\simeq
  G_2$ has edges $\{a,b\}$, $\{a,c\}$, $\{c,d\}$ and $\{a,d\}$.
  
  Let $[R,L]\simeq G_0$. Then there are triples of the form $\rt{(ab|*)}$,
  $\rt{(bc|*)}$, $\rt{(cd|*)}$, where one kind of triple must occur twice,
  since otherwise, $[R,L]$ would have four edges. Assume that this is
  $\rt{(ab|*)}$.  Hence, the triples $\rt{(ab|c)}, \rt{(ab|d)}\in R$ since
  $|R|=4$.  Since $R$ is strictly dense, $\rt{(bc|*)}=\rt{(bc|d)}\in R$,
  which implies that $\rt{(cd|*)} = \rt{(cd|a)}\in R$. Now,
  $R'=\{\rt{(ab|c)},\rt{(bc|d)} \} \vdash \rt{(ac|d)}$.  However, since $R$
  is strictly dense and $\rt{(cd|a)}\in R$ we can conclude that
  $\rt{(ac|d)}\not\in R$, and therefore $ \cl(R')\not\subseteq R$.  The
  case with triples $\rt{(cd|*)}$ occurring twice is treated analogously.
  If triples $\rt{(bc|*)}$ occur twice, we can argue the same way to obtain
  obtain $\rt{(bc|a)}, \rt{(bc|d)}\in R$, $\rt{(ab|*)} = \rt{(ab|d)}$, and
  $\rt{(cd|*)} = \rt{(cd|a)}$. However,
  $R'=\{\rt{(bc|a)},\rt{(cd|a)}\}\vdash \rt{(bd|a)} \notin R$, and thus
  $\cl(R')\not\subseteq R$.
  	
  Let $[R,L]\simeq G_1$. Then there must be triples of the form
  $\rt{(ab|*)}$, $\rt{(ac|*)}$, $\rt{(cd|*)}$, $\rt{(bd|*)}$. Clearly,
  $\rt{(ab|*)}\in \{\rt{(ab|c)}, \rt{(ab|d)}\}$. Note that not both
  $\rt{(ab|c)}$ and $\rt{(ab|d)}$ can be contained in $R$, since then
  $[R,L]\simeq G_0$.  If $\rt{(ab|*)}=\rt{(ab|c)}$ and since $R$ is strict
  dense, $\rt{(ac|*)} = \rt{(ac|d)}$. Again, since $R$ is strictly dense,
  $\rt{(cd|*)} = \rt{(cd|b)}$ and this implies that $\rt{(bd|*)} =
  \rt{(bd|a)}$. However, $R'=\{\rt{(ab|c)},\rt{(ac|d)}\}\vdash \rt{(ab|d)}
  \notin R$, since $R$ is strictly dense and $\rt{(bd|a)}\in R$. Thus,
  $\cl(R')\not\subseteq R$.  If $\rt{(ab|*)}=\rt{(ab|d)}$ and since $R$ is
  strictly dense, we can argue analogously, and obtain, $\rt{(bd|*)} =
  \rt{(bd|c)}$, $\rt{(cd|*)} = \rt{(cd|a)}$ and $\rt{(ac|*)} =
  \rt{(ac|b)}$. However, $R'=\{\rt{(ab|d)},\rt{(bd|c)}\}\vdash \rt{(ad|c)}
  \notin R$, and thus $\cl(R')\not\subseteq R$.

  Let $[R,L]\simeq G_2$. Then there must be triples of the form
  $\rt{(ab|*)}$, $\rt{(ac|*)}$, $\rt{(cd|*)}$, $\rt{(ad|*)}$. Again,
  $\rt{(ab|*)}\in \{\rt{(ab|c)}, \rt{(ab|d)}\}$. By similar arguments as in
  the latter two cases, if $\rt{(ab|*)} = \rt{(ab|c)}$ then we obtain,
  $\rt{(ac|*)} = \rt{(ac|d)}$, $\rt{(ad|*)} = \rt{(ad|b)}$ and $\rt{(cd|*)}
  = \rt{(cd|b)}$. Since $R'=\{\rt{(ab|c)},\rt{(ac|d)}\}\vdash \rt{(bc|d)}
  \notin R$, we can conclude that $\cl(R')\not\subseteq R$.  If
  $\rt{(ab|*)}=\rt{(ab|d)}$ we obtain analogously, $\rt{(ad|*)} =
  \rt{(ad|c)}$, $\rt{(cd|*)} = \rt{(cd|b)}$ and $\rt{(ac|*)} =
  \rt{(ac|b)}$. However, $R'=\{\rt{(ab|d)},\rt{(ad|c)}\}\vdash \rt{(bd|c)}
  \notin R$, and thus $\cl(R')\not\subseteq R$. 
\end{proof}

\begin{cthm}{\ref{thm:consistIFFpairwise}}
  Let $R$ be a strictly dense triple set on $L$ with $|L|\geq 3$. The
  set $R$ is consistent if and only if $\cl(R')\subseteq R$ holds
  for all $R'\subseteq R$ with
  $|R'|= 2$.
\end{cthm}
\begin{proof}
  $\Rightarrow:$
  If $R$ is strictly dense and consistent, then for any triple triple
  $\rt{(ab|c)} \notin R$ holds $R\cup \rt{(ab|c)}$ is inconsistent as either
  $\rt{(ac|b)}$ or $\rt{(bc|a)}$ is already contained in $R$. Hence, for each
  $a,b,c\in L$ exactly one $R\cup\{\rt{(ab|c)}\}$, $R\cup\{\rt{(ac|b)}\}$,
  $R\cup\{\rt{(bc|a)}\}$ is consistent, and this triple is already
  contained in $R$. Hence, $R$ is closed. Therefore, for any subset
  $R'\subseteq R$ holds $\cl(R')\subseteq \cl(R)=R$. In particular, this
  holds for all $R'\subseteq R$ with $|R'|= 2$. 
  
  $\Leftarrow:$ \emph{(Induction on $|L|$.)}\\
  If $|L|=3$ and since $R$ is strictly dense, it holds $|R|=1$ and thus,
  $R$ is always consistent. If $|L|=4$, then Lemma \ref{A:lem:L=4}
  implies that if for any two-element subset
  $R'\subseteq R$ holds that $\cl(R')\subseteq R$, then $R$ is
  consistent. Assume therefore, the assumption is true for 
  all strictly dense triple sets $R$ on $L$ with $|L|=n$.
  
  Let $R$ be a strictly dense triple set on $L$ with $|L|=n+1$ such that
  for each $R'\subseteq R$ with $|R'|= 2$ it holds $\cl(R')\subseteq
  R$. Moreover, let $L'=L\setminus \{x\}$ for some $x\in L$ and
  $R_{|L'}\subset R$ denote the subset of all triples $r\in R$ with
  $L_r\subset L'$. Lemma \ref{A:lem:rest} implies that $R_{|L'}$ is
  strictly dense and for each $R'\subseteq R_{|L'}$ with $|R'|= 2$ we
  have $\cl(R')\subseteq R_{|L'}$. Hence, the induction hypothesis can
  be applied for any such $R_{|L'}$ implying that $R_{|L'}$ is
  consistent. Moreover, since $R_{|L'}$ is strictly dense and
  consistent, for any triple $\rt{(xy|z)}\notin R_{|L'}$ holds that
  $R_{|L'} \cup \rt{(xy|z)}$ is inconsistent. But this implies that
  $R_{|L'}$ is closed, i.e., $\cl(R_{|L'})=R_{|L'}$. Lemma
  \ref{A:lem:dense-binary} implies that the Aho graph $[R_{|L'},\S]$ has
  exactly two connected components $C_1$ and $C_2$ for each
  $\S\subseteq L'$ with $|\S|>1$. In the following we denote with
  $\S_i=V(C_i)$, $i=1,2$ the set of vertices of the connected
  component $C_i$ in $[R_{|L'},\S]$. Clearly, $\S=\S_1 \dot\cup \S_2$.
  It is easy to see that $[R,\S] \simeq [R_{|L'},\S]$ 
  for any $\S\subseteq L'$, since none of the graphs contain
  vertex $x$. 	Hence, $[R,\S]$ is always disconnected 
  for any $\S\subseteq L'$.	
  Therefore, it remains to show that, for all $\S\cup\{x\}$ with
  $\S\subseteq L'$ holds: if for any $R'\subseteq R$ with $|R'|= 2$ holds
  $\cl(R')\subseteq R$, then $[R, \S\cup \{x\}]$ is disconnected and
  hence, $R$ is consistent. 
  
  To proof this statement we consider the different possibilities for $\S$
  separately. We will frequently use that $[R_{|L'},\S]$ is a subgraph of
  $[R,\S]$ for every $\S\subseteq L$ (Lemma \ref{A:lem:subgraph}).
  
  \emph{Case 1.} If $|\S|=1$, then $\S\cup \{x\}$ implies that $[R, \S\cup
  \{x\}]$ has exactly two vertices and clearly, no edge.  Thus, $[R, \S\cup
  \{x\}]$ is disconnected.
  
  \emph{Case 2.} Let $|\S|=2$ with $\S_1=\{a\}$ and $\S_2=\{b\}$.  Since
  $R$ is strictly dense, exactly one of the triples $\rt{(ab|x)}$,
  $\rt{(ax|b)}$, or $\rt{(xb|a)}$ is contained in $R$.  Hence, $[R, \S\cup
  \{x\}]$ has exactly three vertices where two of them are linked by an
  edge.  Thus, $[R, \S\cup \{x\}]$ is disconnected.
  
  \emph{Case 3.}  Let $|\S|\geq 3$ with $\S_1=\{a_1,\ldots, a_n\}$ and
  $\S_2=\{b_1,\ldots, b_m\}$.  Since $R_{|L'}$ is consistent and strict
  dense and by construction of $\S_1$ and $\S_2$ it holds $\forall a_i,a_j
  \in \S_1, b_k \in \S_2, i \neq j : \rt{(a_ia_j|b_k)} \in R_{|L'}
  \subseteq R$ and $\forall a_i \in \S_1, b_k, b_l \in \S_2, k \neq l :
  \rt{(b_kb_l|a_i)} \in R_{|L'} \subseteq R$. Therefore, since $R$ is
  strictly dense, there cannot be any triple of the form $\rt{(a_ib_k|a_j)}$
  or $\rt{(a_ib_k|b_l)}$ with $a_i,a_j \in \S_1, b_k,b_l \in \S_2$ that is
  contained $R$.  It remains to show that $R$ is consistent. The following
  three subcases can occur.
  \begin{itemize}
  \item[3.a)] The connected components $C_1$ and $C_2$ of $[R_{|L'},\S]$
    are connected in $[R,\S\cup \{x\}]$. Hence, there must be a triple
    $\rt{(ab|x)}\in R$ with $a\in \S_1$ and $b\in \S_2$.  Hence, in order
    to prove that $R$ is consistent, we need to show that there is no
    triple $\rt{(cx|d)}$ contained $R$ for all $c,d \in \S$, which would
    imply that $[R,\S\cup \{x\}]$ stays disconnected.
  \item[3.b)] The connected component $C_1$ of $[R_{|L'},\S]$ is connected
    to $x$ in $[R,\S\cup \{x\}]$. Hence, there must be a triple
    $\rt{(ax|c)}\in R$ with $a\in \S_1$, $c\in \S$.  Hence, in order to
    prove that $R$ is consistent, we need to show that there are no triples
    $\rt{(b_kx|a_i)}$ and $\rt{(b_kx|b_l)}$ for all $a_i \in \S_1$, $b_k,
    b_l \in \S_2$, which would imply that $[R,\S\cup \{x\}]$ stays
    disconnected.
  \item[3.c)] As in Case $3.b)$, the connected component $C_2$ of
    $[R_{|L'},\S]$ might be connected to $x$ in $[R,\S\cup \{x\}]$ and we
    need to show that there are no triples $\rt{(a_ix|b_k)}$ and
    \rt{(a_ix|a_j)}$\rt{(a_ix|a_j)}$ for all $a_i,a_j \in \S_1$, $b_k \in
    \S_2$ in order to prove that $R$ is consistent.
  \end{itemize} 
  
  \emph{Case 3.a)} Let $\rt{(ab|x)} \in R$, $a \in \S_1$, $b \in
  \S_2$. First we show that for all $a_i \in \S_1$ holds $\rt{(a_ib|x)} \in
  R$. Clearly, if $\S_1=\{a\}$ the statement is trivially true. If
  $|\S_1|>1$ then $\{\rt{(ab|x)}, \rt{(a_ia|b)}\}\vdash \rt{(a_ib|x)}$ for
  all $a_i\in \S_1$. Since the closure of all two element subsets of $R$ is
  contained in $R$ and $\rt{(ab|x)},\rt{(a_ia|b)} \in R$ we can conclude
  that $\rt{(a_ib|x)}\in R$.  Analogously one shows that for all $b_k \in
  \S_2$ holds $\rt{(ab_k|x)} \in R$.
  
  \noindent
  Since $\{\rt{(a_ia|b_k)},\rt{(ab_k|x)}\}\vdash \rt{(a_ib_k|x)}$ and
  $\rt{(a_ia|b_k)},\rt{(ab_k|x)}\in R$ we can conclude that
  $\rt{(a_ib_k|x)} \in R$ for all $a_i\in \S_1$, $b_k\in
  \S_2$. Furthermore, $\{\rt{(a_ia_j|b)},\rt{(a_ib|x)}\}\vdash
  \rt{(a_ia_j|x)}$ for all $a_i,a_j\in \S_1$ and again, $\rt{(a_ia_j|x)}\in
  R$ for all $a_i,a_j\in \S_1$. Analogously, one shows that
  $\rt{(b_kb_l|x)}\in R$ for all $b_k,b_l\in \S_2$.
  
  \noindent
  Thus, we have shown, that for all $c,d\in \S$ holds that $\rt{(cd|x)}\in
  R$.  Since $R$ is strictly dense, there is no triple $\rt{(cx|d)}$
  contained in $R$ for any $c,d\in \S$. Hence, $[R,\S\cup \{x\}]$ is
  disconnected.

  \emph{Case 3.b)} Let $\rt{(ax|c)}\in R$ with $a\in \S_1$, $c\in
  \S$. Assume first that $c\in \S_1$. Then there is triple $\rt{(ac|b)}\in
  R$. Moreover, $ \{\rt{(ax|c)},\rt{(ac|b)}\}\vdash \rt{(ax|b)}$ and thus,
  $\rt{(ax|b)}\in R$. This implies that there is always some $c'=b\in \S_2$
  with $\rt{(ax|c')}\in R$. In other words, w.l.o.g. we can assume that for
  $\rt{(ax|c)}\in R$, $a\in \S_1$ holds $c\in \S_2$.
  
  \noindent
  Since $\{\rt{(ax|b)},\rt{(aa_i|b)}\}\vdash \rt{(a_ix|b)}$ and
  $\rt{(ax|b)},\rt{(aa_i|b)}\in R$ we can conclude that $\rt{(a_ix|b)}\in
  R$ for all $a_i\in \S_1$.  Moreover,
  $\{\rt{(a_ix|b)},\rt{(bb_k|a_i)}\}\vdash \rt{(a_ix|b_k)}$ and by similar
  arguments, $\rt{(a_ix|b_k)}\in R$ for all $a_i\in \S_1, b_k\in \S_2$.
  Finally, $\{\rt{(a_ix|b_k)},\rt{(b_lb_k|a_i)}\}\vdash \rt{(b_kb_l|x)}$,
  and therefore, $\rt{(b_kb_l|x)}\in R$ for all $b_k,b_l\in \S_2$. To
  summarize, for all $a_i \in \S_1, b_k,b_l \in \S_2$ we have
  $\rt{(a_ix|b_k)} \in R$ and $\rt{(b_kb_l|x)} \in R$. Since $R$ is strict
  dense there cannot be triples $\rt{(b_kx|a_i)}$ and $\rt{(b_kx|b_l)}$ for
  any $a_i \in \S_1$, $b_k, b_l \in \S_2$, and hence, $[R,\S\cup \{x\}]$ is
  disconnected.
  
  \emph{Case 3.c)} By similar arguments as in Case $3.b)$ and interchanging
  the role of $\S_1$ and $\S_2$, one shows that $[R,\S\cup \{x\}]$ is
  disconnected.
  
  In summary, we have shown that $[R,\S\cup\{x\}]$ is disconnected in all 
  cases. Therefore  $R$ is consistent.
\end{proof}

\begin{cpro}{\ref{pro:BinaryClDense}}
  Let $R$ be a consistent triple set on $L$. If the tree obtained with
  \texttt{BUILD} is binary, then the closure $\cl(R)$ is strictly dense.
  Moreover, this tree $T$ is unique and therefore, a least resolved tree
  for $R$.
\end{cpro}
\begin{proof}
  Note, the algorithm \texttt{BUILD} relies on the Aho graph $[R,\S]$ for
  particular subsets $\S\subseteq L$.  This means, that if the tree
  obtained with \texttt{BUILD} is binary, then for each of the particular
  subsets $\S\subseteq L$ the Aho graph $[R,\S]$ must have exactly two
  components.  Moreover, $R$ is consistent, since \texttt{BUILD} constructs
  a tree.

  Now consider arbitrary three distinct leaves $x,y,z\in L$.  Since $T$ is
  binary, there is a subset $\S\subseteq L$ with $x,y,z\in \S$ in some
  stage of \texttt{BUILD} such that two of the three leaves, say $x$ and $y$
  are in a different connected component than the leaf $z$. This implies
  that $R\cup \rt{(xy|z)}$ is consistent, since even if $\{x,y\}\not \in
  E([R,\S])$, the vertices $x$ and $y$ remain in the same connected
  component different from the one containing $z$ when adding the edge
  $\{x,y\}$ to $[R,\S]$.  Moreover, by the latter argument, both $R\cup
  \rt{(xz|y)}$ and $R\cup \rt{(yz|x)}$ are not consistent.  Thus, for any
  three distinct leaves $x,y,z\in L$  exactly one of the sets
  $R\cup\{\rt{(xy|z)}\}$, $R\cup\{\rt{(xz|y)}\}$, $R\cup\{\rt{(zy|x)}\}$
  is consistent, and thus, contained in the closure $\cl(R)$. Hence, $\cl(R)$
  is strictly dense.

  Since a tree $T$ that displays $R$ also displays $\cl(R)$ and because
  $\cl(R)$ is strictly dense and consistent, we can conclude that $\cl(R) =
  \mathfrak{R}(T)$ whenever $T$ displays $R$. Hence, $T$ must be unique and
  therefore, the least resolved tree for $R$. 
\end{proof}

\begin{lem}\label{A:lem:binstrictdense}
  Let $R$ be a consistent set of triples on $L$. Then there is a strictly dense 
  consistent triple set $R'$ on $L$ that contains $R$. 
\end{lem}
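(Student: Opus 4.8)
The plan is to realize $R'$ as the complete triple set of a binary tree on $L$. Since $R$ is consistent, by definition there is a phylogenetic tree $T$ on $L_R$ with $R\subseteq\mathfrak{R}(T)$ (equivalently, such a tree is produced by \texttt{BUILD}). If $L_R\subsetneq L$, I would first extend $T$ to a phylogenetic tree $T_1$ on all of $L$ by attaching each leaf in $L\setminus L_R$ as an additional child of the root. Adding new leaves as sibling subtrees does not alter the ancestor relation among the leaves already present, so every triple displayed by $T$ remains displayed by $T_1$; hence $R\subseteq\mathfrak{R}(T_1)$.

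Next I would resolve $T_1$ into a \emph{binary} phylogenetic tree $T'$ on $L$ by arbitrarily refining every vertex of outdegree greater than two, for instance replacing each multifurcation by a caterpillar of new inner edges. The key monotonicity fact is that refinement only adds displayed triples and never destroys an existing one: $T_1$ is obtained from $T'$ by contracting inner edges, and contracting an edge can only make two previously distinct vertices coincide, so it may turn a displayed triple into an unresolved fan but can neither create a displayed triple on three leaves where none existed nor flip one. Thus $\mathfrak{R}(T_1)\subseteq\mathfrak{R}(T')$, and therefore $R\subseteq\mathfrak{R}(T_1)\subseteq\mathfrak{R}(T')$.

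It then remains to set $R':=\mathfrak{R}(T')$ and check the three required properties. Consistency is immediate, since $T'$ displays $R'$ by construction. The inclusion $R\subseteq R'$ follows from the chain above. For strict density I would use that $T'$ is binary on $L$ with $|L|\ge 3$: given any three pairwise distinct leaves $x,y,z\in L$, their pairwise most recent common ancestors lie along a path, so exactly two of them, say $x$ and $y$, satisfy $\lca_{T'}(x,y)\prec_{T'}\lca_{T'}(x,y,z)$, which yields exactly one displayed triple $\rt{(xy|z)}$ on $\{x,y,z\}$. Hence $R'$ contains precisely one triple for every three-element subset of $L$, i.e.\ $R'$ is strictly dense.

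The argument is essentially bookkeeping, so I do not expect a serious obstacle. The only two points that deserve care are the monotonicity claim that refinement preserves displayed triples, and the observation that padding in the leaves of $L\setminus L_R$ cannot introduce a triple conflicting with one forced by $R$ — which is guaranteed precisely because attaching these leaves at the root leaves all previously displayed triples intact while the final binary resolution only adds further triples.
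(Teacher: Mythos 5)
Your proof is correct and follows essentially the same route as the paper: resolve a tree displaying $R$ into a binary tree on $L$ and take $R'=\mathfrak{R}(T')$, using that refinement preserves displayed triples and that a binary tree displays exactly one triple per three-leaf subset. You are in fact slightly more careful than the paper, which works with $\mathrm{Aho}(R)$ on $L_R$ and does not explicitly handle the case $L_R\subsetneq L$ that your root-attachment step covers.
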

\begin{proof}
  Let $\mathrm{Aho}(R)$ be the tree constructed by \texttt{BUILD} from a
  consistent triple set $R$. It is in general not a binary tree. Let $T'$
  be a binary tree obtained from $\mathrm{Aho}(R)$ by substituting a binary
  tree with $k$ leaves for every internal vertex with $k>2$ children. Any
  triple $(ab|c)\in \mathfrak{R}(\mathrm{Aho}(R))$ is also displayed by
  $T'$ since unique disjoint paths $a-b$ and $c-\rho$ in $\mathrm{Aho}(R)$
  translate directly to unique paths in $T'$, which obviously are again
  disjoint.  Furthermore, a binary tree $T'$ with leaf set $L$ displays
  exactly one triple for each $\{a,b,c\}\in \binom{L}{3}$; hence $R'$ is
  strictly dense.
\end{proof}

\begin{rem}
  Let $T$ be a binary tree. Then $\mathfrak{R}(T)$ is strictly dense and
  hence, $\mathfrak{R}(T)\cup\{r\}$ is inconsistent for any triple $r\notin
  \mathfrak{R}(T)$.  Since $\mathfrak{R}(T)\subseteq
  \mathfrak{R}(\mathrm{Aho}(\mathfrak{R}(T))$ by definition of the action
  of $\texttt{BUILD}$ and there is no consistent triple set that strictly
  contains $\mathfrak{R}(T)$, we have $\mathfrak{R}(T)=
  \mathfrak{R}(\mathrm{Aho}(\mathfrak{R}(T))$. Thus
  $\mathrm{Aho}(\mathfrak{R}(T))=T$.
\end{rem}\bigskip

In order to discuss the relationship of alternative choices of least
  resolved trees we need a few additional definitions.  Let
  $\mathcal{C}(T)=\bigcup_{v\in V(T)}\{L(v)\}$ be the hierarchy defined by
  $T$. We say that a phylogenetic tree $S$ \emph{refines} a tree $T$, if
  $\mathcal{C}(T)\subseteq \mathcal{C}(S)$.  A collection of rooted triples
  $R$ \emph{identifies} a phylogenetic tree $T$ if $T$ displays $R$ and
  every other tree that displays $R$ is a refinement of $T$.  

\bigskip

  \begin{lem}
    Let $R$ be a consistent set of triples that identifies a phylogenetic
    tree $T$. Suppose the trees $T_1$ and $T_2$ display all triples of $R$
    so that $T_1$ has the minimum number of vertices among all trees in
    $\langle R \rangle$ and $T_2$ minimizes the cardinality
    $|\mathfrak{R}(T_2)|$. Then,
    $$T\simeq\mathrm{Aho}(R)\simeq T_1 \simeq T_2.$$
    \label{lem:allEqual}
  \end{lem}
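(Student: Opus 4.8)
The plan is to reduce everything to the bijection between phylogenetic trees and hierarchies (Theorem~\ref{A:thm:hierarchy}), together with one auxiliary monotonicity statement linking refinement to displayed triples. Throughout I write $\mathcal{C}(T)=\{L(v)\mid v\in V(T)\}$ and use $|\mathcal{C}(T)|=|V(T)|$ from Lemma~\ref{A:lem:nrC}. The key auxiliary claim I would establish first is: for phylogenetic trees $T,S$ on $L$, if $S$ refines $T$ then $\mathfrak{R}(T)\subseteq\mathfrak{R}(S)$, and if in addition $\mathcal{C}(T)\subsetneq\mathcal{C}(S)$ then $\mathfrak{R}(T)\subsetneq\mathfrak{R}(S)$. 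The inclusion is immediate from the remark that $\rt{(xy|z)}\in\mathfrak{R}(T)$ iff some cluster $A\in\mathcal{C}(T)$ satisfies $x,y\in A$ and $z\notin A$; hence $\mathcal{C}(T)\subseteq\mathcal{C}(S)$ carries every triple displayed by $T$ over to $S$.

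The strict part is the one genuinely technical step, and I expect it to be the main obstacle. I would pick a cluster $C\in\mathcal{C}(S)\setminus\mathcal{C}(T)$, let $D$ be the smallest cluster of $T$ properly containing $C$ (it exists and is proper since $L\in\mathcal{C}(T)$, singletons are clusters, and the $T$-clusters containing $C$ form a chain), and examine the partition of $D$ induced by the children of its node $w$. Minimality of $D$ forces $C$ to meet at least two child blocks, so I can choose $x,y\in C$ in different blocks and $z\in D\setminus C$. Then $\lca_T(x,y)=w=\lca_T(x,y,z)$, so $\rt{(xy|z)}\notin\mathfrak{R}(T)$, whereas in $S$ the cluster $C$ contains $x,y$ but not $z$, giving $\lca_S(x,y)\prec\lca_S(x,y,z)$ and hence $\rt{(xy|z)}\in\mathfrak{R}(S)$. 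This witnesses $\mathfrak{R}(T)\subsetneq\mathfrak{R}(S)$.

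With this in hand the three isomorphisms follow from the hypothesis that $R$ identifies $T$, i.e. $T\in\langle R\rangle$ and every $T'\in\langle R\rangle$ refines $T$. For $T_1$: refinement gives $|V(T')|=|\mathcal{C}(T')|\ge|\mathcal{C}(T)|=|V(T)|$ for all $T'\in\langle R\rangle$, so $T$ itself attains the minimum vertex number and $|V(T_1)|=|V(T)|$; since $T_1$ refines $T$ with $\mathcal{C}(T)\subseteq\mathcal{C}(T_1)$ of equal cardinality, $\mathcal{C}(T)=\mathcal{C}(T_1)$ and $T\simeq T_1$ by Theorem~\ref{A:thm:hierarchy}. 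For $T_2$: by the auxiliary claim every $T'\in\langle R\rangle$ satisfies $\mathfrak{R}(T)\subseteq\mathfrak{R}(T')$, so $T$ also minimises $|\mathfrak{R}(\cdot)|$; thus $\mathfrak{R}(T)=\mathfrak{R}(T_2)$, and as $T_2$ refines $T$ the strict part of the claim forbids a proper refinement, forcing $\mathcal{C}(T)=\mathcal{C}(T_2)$ and $T\simeq T_2$.

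Finally, for $\mathrm{Aho}(R)$: consistency of $R$ means \texttt{BUILD} succeeds, so $\mathrm{Aho}(R)$ displays $R$, lies in $\langle R\rangle$, and therefore refines $T$, giving $\mathcal{C}(T)\subseteq\mathcal{C}(\mathrm{Aho}(R))$. It remains to show $T$ refines $\mathrm{Aho}(R)$, for which I would invoke the defining property of \texttt{BUILD} that $\mathrm{Aho}(R)$ is the least resolved tree displaying $R$, i.e. every $T'\in\langle R\rangle$ refines $\mathrm{Aho}(R)$; this is because every edge $\{x,y\}$ of an Aho graph $[R,\S]$ stems from a triple $\rt{(xy|z)}\in R$, which places $\lca_{T'}(x,y)$ strictly below $\lca_{T'}(x,y,z)$ and so keeps each connected component of $[R,\S]$ inside a common subtree of $T'$, and recursing yields $\mathcal{C}(\mathrm{Aho}(R))\subseteq\mathcal{C}(T')$. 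Applying this with $T'=T$ gives $\mathcal{C}(\mathrm{Aho}(R))\subseteq\mathcal{C}(T)$, hence $T\simeq\mathrm{Aho}(R)$. Chaining the three isomorphisms yields $T\simeq\mathrm{Aho}(R)\simeq T_1\simeq T_2$, as claimed. Besides the strictness construction, the only point needing care is this coarsest-tree property of \texttt{BUILD}; if a self-contained argument is wanted rather than a citation, it is exactly the recursive connectivity statement sketched above.
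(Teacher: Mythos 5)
Your treatment of $T\simeq T_1$ and $T\simeq T_2$ is correct and in fact more self-contained than the paper's. The paper disposes of the whole lemma by citing Lemmas 2.1 and 2.2 of \cite{GSS:07} (namely that $R$ identifies $T$ iff $\mathfrak{R}(T)=\cl(R)$, and that in this case $T\simeq\mathrm{Aho}(R)$), then derives $\mathfrak{R}(T_2)=\cl(R)$ and feeds it back into those lemmas; your auxiliary claim -- refinement implies inclusion of displayed triple sets, proper refinement implies strict inclusion -- lets you conclude $T\simeq T_1$ and $T\simeq T_2$ directly from the hierarchy bijection of Theorem~\ref{A:thm:hierarchy} and Lemma~\ref{A:lem:nrC}, and your strictness construction via the minimal $T$-cluster $D\supsetneq C$ is sound.

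The gap is in the $T\simeq\mathrm{Aho}(R)$ step. The ``coarsest-tree property'' you invoke -- that every tree displaying $R$ refines $\mathrm{Aho}(R)$ -- is false in general, and the recursive argument you sketch for it does not close. Take $R=\{\rt{(ab|c)}\}$ on $L=\{a,b,c,d\}$: the top-level Aho graph has the single edge $\{a,b\}$, so $\mathrm{Aho}(R)$ has $\{a,b\}$ as a cluster, yet the tree whose only nontrivial cluster is $\{a,b,d\}$ displays $R$ and does not refine $\mathrm{Aho}(R)$. The flaw in your sketch is that showing each connected component of an Aho graph lies inside a common subtree of $T'$ only proves the component partition \emph{refines} the child-cluster partition of $T'$; it does not make the components clusters of $T'$, so the recursion does not yield $\mathcal{C}(\mathrm{Aho}(R))\subseteq\mathcal{C}(T')$. (That $\mathrm{Aho}(R)$ is in general not least resolved is also implicit in the paper's discussion of the NP-hardness of finding least resolved trees.) What you actually need is precisely Lemma 2.2 of \cite{GSS:07}, whose proof uses the identification hypothesis in an essential way: arguing down the \texttt{BUILD} recursion, if some child cluster $L(v_i)$ of $T$ were split into two or more Aho-graph components, one could reattach those components as separate children and obtain a tree that still displays $R$ but fails to refine $T$, contradicting that $R$ identifies $T$; hence at every level the components coincide with the clusters of $T$ and $\mathcal{C}(\mathrm{Aho}(R))=\mathcal{C}(T)$. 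Without this argument (or the citation), the first link of your chain of isomorphisms is not established.
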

  \begin{proof}
    Lemma 2.1 and 2.2 in \cite{GSS:07} state 
    that $R$ identifies $T$ iff $\mathfrak{R}(T) = \cl(R)$ and 
    that $T\simeq\mathrm{Aho}(R)$ in this case.
    Since $R$ identifies $T$,  any other tree that displays $R$ refines
    $T$ and thus, must have more vertices. Hence,  $T\simeq T_1$. 
    \newline
    Since the closure $\cl(R)$ must be displayed by all trees 
    that display $R$ it follows that $T$ is one of the trees that
    have a minimum cardinality set $\mathfrak{R}(T)$ and thus, 
    $|\mathfrak{R}(T_2)| = |\mathfrak{R}(T)|$ and hence, 
    $\mathfrak{R}(T_2) = \mathfrak{R}(T)=\cl(R)$. 
    Lemma 2.1 in \cite{GSS:07} implies that 
    $R$ identifies $T_2$. Lemma 2.2 in \cite{GSS:07} 
    implies that, therefore, $T_2\simeq \mathrm{Aho}(R)$. 
  \end{proof}

\subsection{Orthology Relations, Symbolic Representations, and Cographs}
\label{ss:cograph}

For a gene tree $T=(V,E)$ on $\Gen$ we define $t:V^0\to M$ as a map that
assigns to each inner vertex an arbitrary symbol $m\in M$.  Such a map $t$
is called a \emph{symbolic dating map} or \emph{event-labeling} for $T$; it
is \emph{discriminating} if $t(u) \neq t(v)$, for all inner edges
$\{u,v\}$, see \citep{Boeckner:98}.

In the rest of this paper we are interested only in event-labelings $t$
that map inner vertices into the set $M=\{\bullet, \square\}$, where the
symbol ``$\bullet$'' denotes a speciation event and ``$\square$'' a
duplication event. We denote with $(T,t)$ a gene tree $T$ with
corresponding event labeling $t$. If in addition the map $\sigma$ is given,
we write this as $(T,t;\sigma)$.

An orthology relation $\Theta \subset \Gen\times\Gen$ is a symmetric
relation that contains all pairs $(x,y)$ of orthologous genes. Note, this
implies that $(x,x)\notin \Theta$ for all $x\in \Gen$. Hence, its
complement $\overline \Theta $ contains all leaf pairs $(x,x)$ and pairs
$(x,y)$ of non-orthologous genes and thus, in this context all paralogous
genes.

For a given orthology relation $\Theta$ we want to find an event-labeled
phylogenetic tree $T$ on $\Gen$, with $t:V^0\to\{\bullet, \square\}$ such
that
\begin{enumerate}
\item $t(\lca_{T}(x,y))=\bullet$ for all $(x,y)\in \Theta$
\item $t(\lca_{T}(x,y))=\square$ for all $(x,y)\in \overline \Theta
  \setminus\{(x,x)\mid x\in \Gen\}$.
\end{enumerate}

In other words, we want to find an event-labeled tree $T$ on $\Gen$ such
that the event on the most recent common ancestor of the orthologous genes
is a speciation event and of paralogous genes a duplication event.  If such
a tree $T$ with (discriminating) event-labeling $t$ exists for $\Theta$, we
call the pair $(T,t)$ a \emph{(discriminating) symbolic representation} of
$\Theta$.

\subsubsection{Symbolic Representations and Cographs} 
\label{sect:cograph}

Empirical orthology estimations will in general contain false-positives.
In addition orthologous pairs of genes may have been missed due to the
scoring function and the selected threshold. Hence, not for all estimated
orthology relations there is such a tree.  In order to characterize
orthology relations we define for an arbitrary symmetric relation $R
\subseteq \Gen\times \Gen$ the underlying graph $G_R=(\Gen, E_R)$ with edge
set $E_R=\left\{\{x,y\}\in \binom{\Gen}{2}\mid (x,y)\in R\right\}$.

As we shall see, orthology relations $\Theta$ and cographs are closely
related. A cograph is a $P_4$-free graph (i.e.\ a graph such that no four
vertices induce a subgraph that is a path on $4$ vertices), although there
are a number of equivalent characterizations of such graphs (see e.g.\
\citep{Brandstaedt:99} for a survey).

It is well-known in the literature concerning cographs that, to any cograph
$G=(V,E)$, one can associate a canonical \emph{cotree} $\CoT(G)=(W\cup
V,F)$ with leaf set $V$ together with a labeling map $\lambda_G:W\to
\{0,1\}$ defined on the inner vertices of $\CoT(G)$. The key observation is
that, given a cograph $G=(V,E)$, a pair $\{x,y\} \in \binom{V}{2}$ is an
edge in $G$ if and only if $\lambda_G(\lca_{\CoT(G)}(x,y))=1$
(cf. \citep[p. 166]{Corneil:81}). The next theorem summarizes the results,
that rely on the theory of so-called symbolic ultrametrics developed in
\citep{Boeckner:98} and have been established in a more general context in
\citep{Hellmuth:13d}.\bigskip 

\begin{thm}[\citealp{Hellmuth:13d}] \label{A:thm:ortho-cograph}
Suppose that $\Theta$ is an (estimated) orthology relation and
denote by $\overline{\Theta}^{\neq}:=\overline{\Theta} \setminus\{(x,x)\mid x\in \Gen\}$
the complement of $\Theta$ without pairs $(x,x)$. 
Then the following statements are equivalent:
\begin{itemize}
\item[(i)] $\Theta$ has a symbolic representation.
\item[(ii)] $\Theta$ has a discriminating symbolic representation.
\item[(iii)] $G_\Theta= \overline{G}_{\overline{\Theta}^{\neq}}$ 
is a cograph.
\end{itemize}
\end{thm}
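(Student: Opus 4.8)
The plan is to prove the three equivalences through the cycle $(ii)\Rightarrow(i)\Rightarrow(iii)\Rightarrow(ii)$, concentrating the real work in the single implication $(i)\Rightarrow(iii)$. The step $(ii)\Rightarrow(i)$ needs nothing, since a discriminating symbolic representation is by definition a symbolic representation. For $(iii)\Rightarrow(ii)$ I would construct the representation explicitly from the cotree: assuming $G_\Theta$ is a cograph, take the canonical cotree $\CoT(G_\Theta)=(W\cup\Gen,F)$ with its labeling $\lambda_{G_\Theta}\colon W\to\{0,1\}$, and define an event labeling $t\colon W\to\{\bullet,\square\}$ by $t(v)=\bullet$ if $\lambda_{G_\Theta}(v)=1$ and $t(v)=\square$ if $\lambda_{G_\Theta}(v)=0$. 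The canonical cotree has the property that adjacent inner vertices carry distinct $\lambda$-labels, so $t$ is discriminating. By the cotree characterization quoted above, $\{x,y\}\in E(G_\Theta)$ iff $\lambda_{G_\Theta}(\lca(x,y))=1$, and since $(x,y)\in\Theta$ iff $\{x,y\}\in E(G_\Theta)$ this translates verbatim into $t(\lca(x,y))=\bullet\iff (x,y)\in\Theta$ and $t(\lca(x,y))=\square\iff (x,y)\in\overline{\Theta}^{\neq}$. Hence $(\CoT(G_\Theta),t)$ is a discriminating symbolic representation, giving $(ii)$.

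The core is $(i)\Rightarrow(iii)$. Suppose $(T,t)$ represents $\Theta$ but, for contradiction, $G_\Theta$ is not $P_4$-free, so it contains an induced $P_4$ on $a,b,c,d$ with edges $ab,bc,cd$ and non-edges $ac,ad,bd$; thus $t(\lca(a,b))=t(\lca(b,c))=t(\lca(c,d))=\bullet$ while $t(\lca(a,c))=t(\lca(a,d))=t(\lca(b,d))=\square$. Set $r=\lca_T(a,b,c,d)$. The key elementary fact I would invoke is that, among these four leaves, a pair has $\lca$ equal to $r$ exactly when its two members lie in different child-subtrees of $r$, and that every such pair therefore receives the single label $t(r)$. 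The argument then reduces to a case analysis over the partition of $\{a,b,c,d\}$ induced by the child-subtrees of $r$. For block types $1{+}1{+}1{+}1$, $2{+}1{+}1$, and $2{+}2$ respectively $6$, $5$, and $4$ of the six pairs share the label $t(r)$, which is impossible because the $P_4$-pattern distributes the six pairs evenly as three $\bullet$ and three $\square$ (at most three pairs can carry a fixed label). The only surviving type is $3{+}1$: here the three pairs joining the singleton $v$ to the remaining triple all have $\lca=r$, forcing the ``star'' of pairs through $v$ to be monochromatic; but a direct check of the four choices $v\in\{a,b,c,d\}$ shows each such triple of pairs uses both labels, so this case too is excluded. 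Thus no representing tree exists, $G_\Theta$ is a cograph, and the identity $G_\Theta=\overline{G}_{\overline{\Theta}^{\neq}}$ is just the definitional statement that the distinct-vertex non-edges of $G_\Theta$ are exactly the pairs of $\overline{\Theta}^{\neq}$ (using that cographs are closed under complementation).

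The hard part is precisely the quartet case analysis: one must be careful that the partition into child-subtrees of $r$ is genuinely exhaustive (between two and four blocks) and that the counting bound together with the ``no monochromatic star'' observation really eliminates every configuration. An alternative route would be to appeal to the theory of symbolic ultrametrics of \cite{Boeckner:98}, observing that $\delta(x,y):=t(\lca_T(x,y))$ is a symbolic ultrametric iff it avoids the forbidden four-point pattern, which over the two-symbol alphabet $\{\bullet,\square\}$ is exactly the induced-$P_4$ configuration; however, I would favor the self-contained quartet argument, since it uses only the cotree characterization already stated and the basic behaviour of $\lca$ in a rooted tree.
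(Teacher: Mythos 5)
Your proposal is correct, but it is worth noting that the paper itself offers no proof of this theorem at all: it is stated as an imported result, credited to the symbolic-ultrametric machinery of B\"ocker--Dress and its specialization in the cited Hellmuth et al.\ paper. What you have done differently is to replace that citation with a self-contained argument. Your $(ii)\Rightarrow(i)$ and $(iii)\Rightarrow(ii)$ steps are exactly what one would expect (the canonical cotree with $\lambda=1\mapsto\bullet$, $\lambda=0\mapsto\square$ is discriminating because labels alternate along root-to-leaf paths, and the cotree characterization of edges gives the representation property verbatim). The substantive content is your $(i)\Rightarrow(iii)$ quartet analysis, and it checks out: the partition of $\{a,b,c,d\}$ into child-subtrees of $r=\lca_T(a,b,c,d)$ has between two and four blocks (all four leaves cannot sit in one child-subtree, else $r$ would not be the lca), the block types $1{+}1{+}1{+}1$, $2{+}1{+}1$, $2{+}2$ force $6$, $5$, $4$ cross-pairs to share the label $t(r)$ against the $3{+}3$ split of labels imposed by an induced $P_4$, and the remaining type $3{+}1$ forces a monochromatic star at one vertex, which no vertex of a $P_4$ admits. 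This is essentially the two-symbol specialization of the forbidden-quartet condition for symbolic ultrametrics, unwound by hand. The trade-off is clear: the citation route buys generality (arbitrary label alphabets, the full equivalence with symbolic ultrametrics used elsewhere in the paper for the HGT discussion), while your argument buys transparency and independence from external machinery at the cost of a case analysis that only works because $|M|=2$ makes the forbidden pattern coincide exactly with an induced $P_4$. One cosmetic remark: the identity $G_\Theta=\overline{G}_{\overline{\Theta}^{\neq}}$ is purely definitional (the complement relation minus the diagonal has as its graph the complement graph), so your appeal to closure of cographs under complementation there is unnecessary, though harmless.
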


This result enables us to find the corresponding discriminating
symbolic representation $(T,t)$ for $\Theta$ (if one exists) by
identifying $T$ with the respective cotree $\CoT(G_\Theta)$ of the
cograph $G_\Theta$ and setting $t(v)=\bullet$ if $\{x,y\}\in
E(G_\Theta)$ and thus, $\lambda_{G_\Theta}(v)=1$ and $t(v)=\square$ if
$\{x,y\}\not\in E(G_\Theta)$ and thus $\lambda_{G_\Theta}(v)=0$

We identify the discriminating symbolic representation $(T,t)$ for 
$\Theta$ (if one exists) with the  cotree $\CoT(G_\Theta)$ as explained 
above.

\subsubsection{Cograph Editing} 

It is well-known that cographs can be recognized in linear time
\citep{Corneil:85, habib2005simple}. However, the cograph editing problem,
that is given a graph $G=(V,E)$ one aims to convert $G$ into a cograph
$G^*=(V,E^*)$ such that the number $|E\symdiff E^*|$ of inserted or deleted
edges is minimized is an NP-complete problem \citep{Liu:11, Liu:12}.
In view of the above results, this implies the following:\bigskip 

\begin{thm}
Let $\Theta\subset \Gen \times \Gen$ be an (estimated) orthology relation. 
It can be recognized in linear time whether 
$\Theta$ has a (discriminating) symbolic representation. 

For a given	 positive integer $K$
the problem of deciding if there is an orthology relation
$\Theta^*$  that
has a (discriminating) symbolic representation
s.t. $|\Theta\symdiff \Theta^*|\leq K$ is NP-complete.
\end{thm}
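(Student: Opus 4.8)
The plan is to deduce both assertions from the cograph characterization of symbolic representations in Theorem~\ref{A:thm:ortho-cograph}, which tells us that $\Theta$ admits a symbolic representation---equivalently a \emph{discriminating} symbolic representation---exactly when $G_\Theta$ is a cograph. For the first assertion I would simply construct $G_\Theta$ from $\Theta$ in linear time and then invoke one of the known linear-time cograph recognition algorithms \cite{Corneil:85,habib2005simple}; composing these two linear-time steps settles the recognition claim, and since all three phrasings (symbolic, discriminating, cograph) coincide by Theorem~\ref{A:thm:ortho-cograph}, the parenthetical ``discriminating'' is automatically covered.

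For the NP-completeness assertion, membership in NP is straightforward: a candidate relation $\Theta^*$ is a polynomial-size certificate, and I would verify in polynomial time both that $|\Theta\symdiff\Theta^*|\le K$ and, using Theorem~\ref{A:thm:ortho-cograph} together with linear-time cograph recognition, that $\Theta^*$ has a symbolic representation. The substance is NP-hardness, which I would obtain by a reduction from the cograph editing problem, known to be NP-complete \cite{Liu:11,Liu:12}. Given a cograph-editing instance $(G=(V,E),k)$, the plan is to take $\Gen:=V$, to let $\sigma$ place every gene in its own species (so that no species-based constraint can obstruct the construction), and to define $\Theta$ as the symmetric, irreflexive relation with $G_\Theta=G$; the query threshold is set to $2k$.

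The correctness of the reduction rests on two observations that I would make precise. First, since every orthology relation is symmetric and irreflexive, its ordered-pair content is completely determined by the associated edge set, with each unordered edge accounting for two ordered pairs; consequently $|\Theta\symdiff\Theta^*|=2\,|E\symdiff E_{\Theta^*}|$, so the chosen threshold $2k$ corresponds exactly to editing at most $k$ edges of $G$. Second, by Theorem~\ref{A:thm:ortho-cograph} the relation $\Theta^*$ has a symbolic representation if and only if $G_{\Theta^*}$ is a cograph. Combining the two, a valid $\Theta^*$ within symmetric difference $2k$ exists precisely when $G$ can be edited into a cograph using at most $k$ modifications, establishing the equivalence of the two instances and completing the reduction.

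Both directions are conceptually direct once the cograph equivalence is available, so the only real care needed is the bookkeeping in the first observation: keeping track of the factor of two between ordered pairs of the relation and edges of $G_\Theta$, and confirming that the definition of an orthology relation imposes no constraints beyond symmetry and irreflexivity that could spoil the correspondence. This is where I would be most careful, but I expect no genuine obstacle, since the hard combinatorial content is entirely imported from the NP-completeness of cograph editing.
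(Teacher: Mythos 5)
Your proposal is correct and follows essentially the same route as the paper, which presents this theorem as an immediate consequence of the cograph characterization (Theorem~\ref{A:thm:ortho-cograph}), linear-time cograph recognition, and the NP-completeness of cograph editing. The extra bookkeeping you supply (the factor of two between ordered pairs and unordered edges, and the choice of $\sigma$ in the reduction) is sound and only makes explicit what the paper leaves implicit.
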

\bigskip 

As the next result shows, it suffices to solve the cograph editing problem separately for the 
connected components of $G$.

\begin{lem}
  For any graph $G(V,E)$ let $F\in\binom{V}{2}$ be a minimal set of edges
  so that $G'=(V,E\symdiff F)$ is a cograph. Then $(x,y)\in F\setminus E$
  implies that $x$ and $y$ are located in the same connected component of
  $G$.
  \label{A:lem:disconnected}
\end{lem}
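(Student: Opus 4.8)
The plan is to argue by contradiction, exploiting two standard closure properties of cographs. Since the $P_4$-free condition is hereditary, every induced subgraph of a cograph is again a cograph; and since cographs are exactly the graphs generated from $K_1$ by complementation and disjoint union \citep{Corneil:81}, the disjoint union of cographs is again a cograph. Both facts are already available from the characterizations recalled earlier in the paper.

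Let $V_1,\dots,V_k$ denote the vertex sets of the connected components of $G$, and suppose for contradiction that the minimal edit set $F$ contains some pair $\{x,y\}\in F\setminus E$ with $x\in V_i$, $y\in V_j$ and $i\ne j$. First I would observe that, because $G$ has no edge joining distinct components, every pair in $F$ whose endpoints lie in different components is a non-edge of $G$, hence an insertion; in particular $\{x,y\}$ is such a cross-component insertion. I then define $F'\subseteq F$ to be the set of all pairs of $F$ whose two endpoints lie in a common component $V_\ell$. Because $\{x,y\}$ is a cross-component pair, $\{x,y\}\in F\setminus F'$, so $F'\subsetneq F$.

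The core step is to check that $G''=(V,E\symdiff F')$ is still a cograph. Here I would note that both $E$ and $F'$ contain only within-component pairs, so $G''$ has no edges between distinct components and is therefore the disjoint union of its induced subgraphs $G''[V_\ell]$. Moreover, on each $V_\ell$ the symmetric difference $E\symdiff F'$ agrees with $E\symdiff F$, since $F$ and $F'$ coincide on within-component pairs; hence $G''[V_\ell]=G'[V_\ell]$, which is an induced subgraph of the cograph $G'=(V,E\symdiff F)$ and thus itself a cograph. By closure under disjoint union, $G''$ is a cograph. But $F'\subsetneq F$, contradicting the minimality of $F$ (the contradiction holds whether ``minimal'' is read as minimum cardinality or as inclusion-minimal).

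I expect no serious obstacle. The only points that require care are the bookkeeping identity $G''[V_\ell]=G'[V_\ell]$ and the observation that all cross-component edits are insertions; both follow at once from the absence of cross-component edges in $G$. The argument relies solely on the hereditary nature of $P_4$-freeness and on closure of cographs under disjoint union, so the whole proof reduces to the structural decomposition of $G''$ along the components of $G$.
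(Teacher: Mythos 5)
Your proof is correct and follows essentially the same route as the paper's: both delete the cross-component insertions from $F$, observe that the resulting graph decomposes along the components of $G$ into induced subgraphs of the cograph $G'$ (so any induced $P_4$ would already live in $G'$), and contradict the minimality of $F$. Your version merely makes the two closure properties (heredity and disjoint union) explicit and handles arbitrarily many components without the paper's w.l.o.g.\ reduction to two.
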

\begin{proof}
  Suppose, for contradiction, that there is a minimal set $F$ connecting
  two distinct connected components of $G$, resulting in a cograph $G'$.
  W.l.o.g., we may assume that $G$ has only two connected components
  $C_1,C_2$.  Denote by $G''$ the graph obtained from $G'$ by removing all
  edges $\{x,y\}$ with $x\in V(C_1)$ and $y\in V(C_2)$. If $G''$ is
  not a cograph, then there is an induced $P_4$, which must be contained 
  in one of the connected components of $G''$. By construction this
  induced $P_4$ is also contained in $G'$. Since $G'$ is a cograph no such 
  $P_4$ exists and hence $G''$ is also a cograph, contradicting the
  minimality of $F$. 
\end{proof}

\subsection{From Gene Triples to Species Triples and Reconciliation Maps}

A gene tree $T$ on $\Gen$ arises in evolution by means of a series of
events along a species tree $S$ on $\Spe$. In our setting these may be
duplications of genes within a single species and speciation events, in
which the parent's gene content is transmitted to both offsprings. The
connection between gene and species tree is encoded in the reconciliation
map, which associates speciation vertices in the gene tree with the
interior vertex in the species tree representing the same speciation
event. We consider the problem of finding a species tree for a given gene
tree. In this subsection We follow the presentation of
\citet{hernandez2012event}.

\subsubsection{Reconciliation Maps} 

We start with a formal definition of reconciliation maps. \smallskip

\begin{defi}[\citealp{hernandez2012event}] \label{A:def:mu} Let $S=(W,F)$
  be a species tree on $\Spe$, let $T=(V,E)$ be a gene tree on $\Gen$ with
  corresponding event labeling $t:V^0\to \{\bullet,\square\}$ and suppose
  there is a surjective map $\sigma$ that assigns to each gene the
  respective species it is contained in.  Then we say that $S$ is a
  \emph{species tree for $(T,t;\sigma)$} if there is a map $\mu:V\to W\cup
  F$ such that, for all $x\in V$:
\begin{itemize}
\item[(i)]   If $x\in \Gen$ then $\mu(x)=\sigma(x)$.
\item[(ii)]  If $t(x)=\bullet$ then $\mu(x)\in W\setminus \Spe$.
\item[(iii)] If $t(x)=\square$ then $\mu(x)\in F$. 
\item[(iv)]  Let $x,y\in V$ with $x\prec_T y$. We distinguish two cases:
  \begin{enumerate}
  \item  If $t(x)=t(y)=\square$ then $\mu(x)\preceq_S \mu(y)$ in $S$.
  \item  If $t(x)=t(y)=\bullet$ or $t(x)\neq t(y)$ then 
			    $\mu(x)\prec_S\mu(y)$ in $S$.
  \end{enumerate}
\item[(v)]   If $t(x)=\bullet$ then 
     			   $\mu(x)=\lca_S( \sigma(L(x)) )$ 
\end{itemize}
We call $\mu$ the reconciliation map from $(T,t,\sigma)$ to $S$. 
\end{defi}                \smallskip 

A reconciliation map $\mu$ maps leaves $x\in \Gen$ to leaves
$\mu(x):=\sigma(x)$ in S and inner vertices $x\in V^0$ to inner vertices
$w\in W\setminus \Spe$ in $S$ if $t(x)=\bullet$ and to edges $f\in F$ in
$S$ if $t(x)=\square$, such that the ancestor relation $\preceq_S$ is
implied by the ancestor relation $\preceq_T$.  Definition \ref{A:def:mu} is
consistent with the definition of reconciliation maps for the case when the
event labeling $t$ on $T$ is not known, see \citep{Doyon:09}.

\subsubsection{Existence of a Reconciliation Map} 

The reconciliation of gene and species trees is usually studied in the
  situation that only $S$, $T$, and $\sigma$ are known and both $\mu$ and
  $t$ and must be determined
  \cite{Guigo1996,Page1997,Arvestad2003,Bonizzoni2005,Gorecki2006,%
    Hahn2007,Bansal2008,Chauve2008,Burleigh2009,Larget2010}. In this form,
  there is always a solution $(\mu,t)$, which however is not unique in
  general. A variety of different optimality criteria have been used in the
  literature to obtain biologically plausible reconciliations.  The
situation changes when not just the gene tree $T$ but a symbolic
representation $(T,t)$ is given. Then a species tree need not exists.
\citet{hernandez2012event} derived necessary and sufficient conditions for
the existence of a species tree $S$ so that there exists a reconciliation
map from $(T,t)$ to $S$. We briefly summarize the key results.

For $(T,t;\sigma)$ we define the triple set
\begin{align*}
  \begin{split}
    \mathbb{G}=   \left\{ r \in \mathfrak{R}(T)\big\vert 
      t(\lca_T(L_r))=\bullet \, \textrm{ and } \,
      \sigma(x)\not=\sigma(y),\,
    \right. \\
    \left.  \textrm{for all }\,
      x,y\in L_r\,\textrm{pairwise distinct}\right\}
  \end{split}
\end{align*}
In other words, the set $\mathbb G$ contains all triples $r=\rt{(ab|c)}$ of
$\mathfrak{R}(T)$ where all three genes in $a,b,c\in L_r$ are contained in
different species and the event at the most recent common ancestor of $L_r$
is a speciation event, i.e., $t(\lca_T(a,b,c))=\bullet$. It is easy to see
that in this case $S$ must display $\rt{(\sigma(a)\sigma(b)|\sigma(c))}$,
i.e., it is a necessary condition that the triple set
\begin{align*}
  \mathbb{S}= 
  \left\{ \rt{(\al\be|\ga)}   |\, \exists \rt{(ab|c)}\in\mathbb{G} 
    \textrm{\ with\ } 
    \sigma(a)=\al,\sigma(b)=\be,\sigma(c)=\ga \right\}
\end{align*}
is consistent. This condition is also sufficient:\bigskip 

\begin{thm}[\citealp{hernandez2012event}] 
There is a species tree on $\sigma(\Gen)$ for $(T,t,\sigma)$ if
and only if the triple set $\mathbb{S}$ is consistent.
A reconciliation map can then be found in polynomial time. 
\label{A:thm:gen-spec-tree}
\end{thm}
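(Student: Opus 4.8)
The plan is to treat the two implications separately and then read off the running time. Throughout I work with the reconciliation conditions (i)--(v) of Definition~\ref{A:def:mu}.

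\textbf{Necessity.} Suppose $S=(W,F)$ is a species tree for $(T,t;\sigma)$ via a map $\mu$. First I would record two auxiliary facts derived from (iv)--(v): (a) a monotonicity property, $\lca_S(\sigma(L(x)))\preceq_S\mu(x)$ for every $x\in V$, proved by induction on the height of $x$; and (b) a separation property at speciations, namely that for $v$ with $t(v)=\bullet$ and two distinct children $v_1,v_2$, the species sets $\sigma(L(v_1))$ and $\sigma(L(v_2))$ lie in different subtrees hanging off $\mu(v)=\lca_S(\sigma(L(v)))$. Granting these, fix $\rt{(ab|c)}\in\mathbb{G}$ and let $v=\lca_T(a,b,c)$; since $t(v)=\bullet$, the genes $a,b$ sit below one child $v_1$ of $v$ and $c$ below a different child $v_2$. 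Fact (b) then puts $\sigma(c)$ outside the subtree of $S$ rooted at $\lca_S(\sigma(a),\sigma(b))$, so $\lca_S(\sigma(a),\sigma(b))\prec_S\lca_S(\sigma(a),\sigma(b),\sigma(c))$, i.e.\ $S$ displays $\rt{(\sigma(a)\sigma(b)|\sigma(c))}$. Hence $\mathbb{S}\subseteq\mathfrak{R}(S)$, and $\mathbb{S}$ is consistent.

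\textbf{Sufficiency.} Conversely, assume $\mathbb{S}$ is consistent. I would apply \texttt{BUILD} to obtain a species tree $S$ on $\sigma(\Gen)=\Spe$ with $\mathbb{S}\subseteq\mathfrak{R}(S)$ (Theorem~\ref{A:thm:ahograph} guarantees this), and then exhibit an explicit $\mu$: set $\mu(a)=\sigma(a)$ for each gene $a$; for a speciation vertex $v$ put $\mu(v)=\lca_S(\sigma(L(v)))$; and for a duplication vertex $v$ place $\mu(v)$ on the edge of $S$ entering $\lca_S(\sigma(L(v)))$. Conditions (i), (iii), and (v) then hold by construction. For (ii) I would use that whenever $\lca_T(a,b)$ is a speciation the genes $a,b$ are orthologous and hence in distinct species; applied to leaves under two different children of a speciation vertex $v$, this shows $\sigma(L(v))$ meets at least two species, so $\mu(v)=\lca_S(\sigma(L(v)))$ is a genuine inner vertex of $S$, as required.

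\textbf{The main obstacle} is condition (iv), the order preservation, and especially its strict form. For two speciation vertices $x\prec_T y$ I would reproduce the necessity construction inside $S$: choose $a,b\in L(x)$ with $\lca_T(a,b)=x$ and a leaf $c$ below a child of $y$ that does not lie above $x$; orthology forces $\sigma(a),\sigma(b),\sigma(c)$ pairwise distinct, so $\rt{(ab|c)}\in\mathbb{G}$ and $\rt{(\sigma(a)\sigma(b)|\sigma(c))}\in\mathbb{S}\subseteq\mathfrak{R}(S)$. Since $S$ displays this triple, $\lca_S(\sigma(a),\sigma(b))\prec_S\lca_S(\sigma(L(y)))$, which upgrades the automatic $\mu(x)\preceq_S\mu(y)$ to the required strict $\mu(x)\prec_S\mu(y)$. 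The mixed cases and the duplication--duplication case of (iv) are then handled by combining this separation argument with the edge placement of duplication images, the delicate points being that nested duplications map to comparable edges and that a duplication lying below a speciation maps strictly below it. Once (i)--(v) are verified, $S$ is a species tree for $(T,t;\sigma)$. Finally, the polynomial-time claim is immediate: \texttt{BUILD} runs in polynomial time, and computing all values $\lca_S(\sigma(L(v)))$ together with the edge placements is a single bottom-up traversal of $T$.
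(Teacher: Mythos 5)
First, a caveat on the comparison itself: the paper does not actually prove Theorem~\ref{A:thm:gen-spec-tree}. The result is imported verbatim from \cite{hernandez2012event}; the Supplement only restates Definition~\ref{A:def:mu}, remarks that necessity ``is easy to see'', and cites the reference for everything else. So your sketch has to be judged on its own terms. Its architecture --- necessity by showing that every speciation-rooted triple of $\mathbb{G}$ is displayed by $S$, sufficiency via \texttt{BUILD} plus the explicit map sending speciation vertices to $\lca_S(\sigma(L(v)))$ and duplication vertices to the edge above that vertex, and polynomial time from \texttt{BUILD} --- is the route taken in the cited source, and your use of triples of $\mathbb{S}$ to upgrade $\preceq_S$ to $\prec_S$ in condition (iv) is the right idea.

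The genuine gap is your auxiliary fact (b), which you claim is ``derived from (iv)--(v)''. It is derivable only when the speciation vertex is binary; for a speciation vertex with three or more children --- the generic case here, since $(T,t)$ is a cotree and cotrees are usually non-binary --- clauses (i)--(v) of Definition~\ref{A:def:mu} do not force the species sets of distinct children into distinct subtrees below $\mu(v)$. Concretely, take $\Gen=\{a,b,c,d\}$ in four species $\alpha,\beta,\gamma,\delta$, let $T$ have a speciation root $v$ with the three children $v_1$ (a duplication vertex with leaves $a,b$), $c$ and $d$, and let $S=(((\alpha\gamma)\beta)\delta)$. Putting $\mu(x)=\sigma(x)$ on leaves, $\mu(v)=\rho_S$ and $\mu(v_1)$ equal to the edge above $\lca_S(\alpha,\beta)$ satisfies every clause of Definition~\ref{A:def:mu}, yet $\sigma(L(v_1))=\{\alpha,\beta\}$ and $\sigma(L(c))=\{\gamma\}$ lie in the same subtree hanging off $\rho_S$, and $S$ does not display $\rt{(\alpha\beta|\gamma)}\in\mathbb{S}$. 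So fact (b), and with it your necessity argument, does not follow from the stated axioms. (This is as much a deficiency of the definition as reproduced in the paper as of your proof: later refinements of this reconciliation concept add the axiom that the images of distinct children of a speciation vertex are $\preceq_S$-incomparable, which is exactly your fact (b) and makes the rest of your argument go through; for binary $v$ your lca argument does establish it.) You need either to prove (b) under a binarity assumption and explain why the general case reduces to it, or to state the incomparability requirement as part of the definition you work with. A smaller point on sufficiency: placing a duplication vertex on ``the edge entering $\lca_S(\sigma(L(v)))$'' has no target when that lca is $\rho_S$ (e.g.\ a duplication at the root of $T$ whose descendant genes cover all species), so one must either plant a root edge on $S$ or treat that case separately.
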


\subsubsection{Maximal Consistent Triple Sets}

In general, however, $\mathbb{S}$ may not be consistent.  In this case it
is impossible to find a valid reconciliation map.  However, for each
consistent subset $\mathbb S^*\subset \mathbb S$, its corresponding species
tree $S^*$, and a suitably chosen homeomorphic image of $T$ one can find the
reconciliation.  For a phylogenetic tree $T$ on $L$, the \emph{restriction}
$T|_{L'}$ of $T$ to $L'\subseteq L$ is the phylogenetic tree with leaf set
$L'$ obtained from $T$ by first forming the minimal spanning tree in $T$
with leaf set $L'$ and then by suppressing all vertices of degree two with
the exception of $\rho_T$ if $\rho_T$ is a vertex of that tree, see
\citep{sem-ste-03a}. For a consistent subset $\mathbb{S}^*\subset
\mathbb{S}$ let $L'=\{x\in \Gen\mid \exists r\in \mathbb{S}^*$ with
$\sigma(x)\in L_r\}$ be the set of genes (leaves of $T|_{L'}$) for which a
species $\sigma(x)$ exits that is also contained in some triple $r\in
\mathbb{S}^*$. Clearly, the reconciliation map of $T|_{L'}$ and the species
tree $S^*$ that displays $\mathbb{S}^*$ can then be found in polynomial
time by means of Theorem \ref{A:thm:gen-spec-tree}.

\section{ILP Formulation}

The workflow outline in the main text consists of three stages, each of
which requires the solution of hard combinatorial optimization problem.
Our input data consist of an $\Theta$ or of a weighted version thereof.  In
the weighted case we assume the edge weights $w(x,y)$ have values in the
unit interval that measures the confidence in the statement
``$(x,y)\in\Theta$''.  Because of measurement errors, our first task is to
correct $\Theta$ to an irreflexive, symmetric relation $\Theta^*$ that is a
valid orthology relation. As outlined in section~\ref{sect:cograph},
$G_{\Theta^*}$ must be cograph so that $(x,y)\in\Theta^*$ implies
$\sigma(x)\ne\sigma(y)$. By Lemma~\ref{A:lem:disconnected} this problem has
to be solved independently for every connected component of $G_{\Theta}$.
The resulting relation $\Theta^*$ has the symbolic representation $(T,t)$.

In the second step we identify the best approximation of the species tree
induced by $(T,t)$. To this end, we determine the maximum consistent subset
$\mathbb{S}^*$ in the set $\mathbb{S}$ of species triples induced by those
triples of $(T,t)$ that have a speciation vertex as their root. The hard
part in the ILP formulation for this problem is to enforce consistency of a
set of triples \cite{chang2011ilp}. This step can be simplified
considerably using the fact that for every consistent triple set
$\mathbb{S}^*$ there is a strictly dense consistent triple set $\mathbb{S}'$
that contains $\mathbb{S}^*$ (Lemma~\ref{A:lem:binstrictdense}). This allows
us to write $\mathbb{S}^*=\mathbb{S}'\cap \mathbb{S}$.  The gain in
efficiency in the corresponding ILP formulation comes from the fact that a
strictly dense set of triples is consistent if and only if all its
two-element subsets are consistent
(Theorem~\ref{thm:consistIFFpairwise}), allowing for a much faster check
of consistency.

In the third step we determine the least resolved species tree $S$ from the
triple set $\mathbb{S}^*$ since this tree makes least assumptions of the
topology and thus, of the evolutionary history. In particular, it displays
only those triples that are either directly derived from the data or that
are logically implied by them. Thus $S$ is the tree with the minimal number
of (inner) vertices that displays $\mathbb{S}^*$. Our ILP formulation uses
ideas from the work of \cite{chang2011ilp} to construct $S$ in the form of
an equivalent partial hierarchy.

\subsection{Cograph Editing}

Given the edge set of an input graph, in our case the pairs
$(x,y)\in\Theta$, our task is to determine a modified edge set so that the
resulting graph is a cograph. The input is conveniently represented by
binary constants $\Theta_{ab}=1$ iff $(a,b)\in \Theta$.  The edges of the
adjusted cograph $G_{\Theta^*}$ are represented by binary variables
$E_{xy}=E_{yx}=1$ if and only if $\{x,y\}\in E(G_{\Theta^*})$.  Since
$E_{xy}\equiv E_{yx}$ we use these variables interchangeably, without
distinguishing the indices.  Since genes residing in the same organism
cannot be orthologs, we exclude edges $\{x,y\}$ whenever
$\sigma(x)=\sigma(y)$ (which also forbids loops $x=y$.  This is expressed
by setting
\begin{align}\tag{\ref{ilp:forbid_E}} 
  E_{xy}=0 \text{ for all } \{x,y\}\in \binom{\Gen}{2} \text{ with }
  \sigma(x)=\sigma(y).
\end{align}
To constrain the edge set of $G_{\Theta^*}$ to cographs, we use the fact
that cographs are characterized by $P_4$ as forbidden subgraph. This can be 
expressed as follows. For every ordered four-tuple $(w,x,y,z)\in\Gen^4$ 
with pairwise distinct $w,x,y,z$ we require  
\begin{align}
  E_{wx} + E_{xy}+ E_{yz} - E_{xz} - E_{wy} - E_{wz} \leq 2 \tag{\ref{ilp:cog}}
\end{align}
Constraint \eqref{ilp:cog} ensures that for each ordered tuple $(w,x,y,z)$
it is not the case that there are edges $\{w,x\}$, $\{x,y\}$, $\{y,z\}$ and
at the same time no edges $\{x,z\}$, $\{w,y\}$, $\{w,z\}$ that is, $w,x,y$
and $z$ induce the path $w-x-y-z$ on four vertices. Enforcing this
constraint for all orderings of $w,x,y,z$ ensures that the subgraph induced
by $\{w,x,y,z\}$ is $P_4$-free. 

In order to find the closest orthology cograph $G_{\Theta^*}$ we 
minimize the symmetric difference of the estimated and adjusted 
orthology relation. Thus the objective function is 
\begin{align} \tag{\ref{ilp:minDiff}}
\min & \sum_{(x,y)\in\Gen \times \Gen} (1-\Theta_{xy}) E_{xy} + 
       \sum_{(x,y)\in \Gen \times \Gen } \Theta_{xy} (1-E_{xy}) 
\end{align}

\begin{rem}\label{rem:real}
  We have defined $\Theta$ above as a binary relation. The problem can be
  generalized to a weighted version in which the input $\Theta$ 
  is a real valued function $\Theta:  \Gen \times \Gen \to [0,1]$
  measuring the confidence with which a pair $(x,y)$ is orthologous. 
  The ILP formulation remains unchanged. 
\end{rem}
\bigskip 

The latter ILP formulation makes use of $O(|\Gen|^2)$ variables and 
Equations \eqref{ilp:forbid_E} and \eqref{ilp:cog} impose 
$O(|\Gen|^4)$ constraints. 

\subsection{Extraction of All Species Triples}

Let $\Theta$ be an orthology relation with symbolic representation
$(T,t;\sigma)$ so that $\sigma(x)=\sigma(y)$ implies $(x,y)\notin\Theta$.
By Theorem~\ref{A:thm:gen-spec-tree}, the species tree $S$ displays all
triples $\rt{(\al\be|\ga)}$ with a corresponding gene triple
$\rt{(xy|z)}\in \mathbb G\subseteq \mathfrak R(T)$, i.e., a triple
$\rt{(xy|z)}$ with speciation event at the root of
$t(\lca_T(x,y,z)=\bullet$ and $\sigma(x)=\al$, $\sigma(y)=\be$,
$\sigma(z)=\ga$ are pairwise distinct species. We denote the set of these
triples by $\mathbb{S}$. Although all species triples can be extracted 
in polynomial 
time, e.g. by using the $\texttt{BUILD}$ algorithm, we give here an ILP
formulation to complete the entire ILP pipeline. It will also be useful as
a starting point for the final step, which consists in finding a minimally
resolved trees that displays $\mathbb{S}$. Instead of using the symbolic
representation $(T,t;\sigma)$ we will directly make use of the information
stored in $\Theta$ using the following simple observation.\bigskip 

\begin{lem}
  Let $\Theta$ be an orthology relation with discriminating symbolic
  representation $(T,t;\sigma)$ that is identified with the cotree of the
  corresponding cograph $G_\Theta=(\Gen, E_\Theta)$. Assume that
  $\rt{(xy|z)}\in \mathfrak{R}(T)$ is a triple where all genes $x,y,z$ are
  contained in pairwise different species.  Then it holds:
  $t(\lca(x,y))=\square$ if and only if $\{x,y\} \notin E_\Theta$ and
  $t(\lca(x,y,z))=\bullet$ if and only if $\{x,z\},\{y,z\}\in E_\Theta$
  \label{A:lem:bulletifftheta}
\end{lem}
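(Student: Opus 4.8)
The plan is to reduce both biconditionals to the single correspondence between edges of $G_\Theta$ and the event labeling $t$ that is furnished by the cotree construction. Recall from Theorem~\ref{A:thm:ortho-cograph} and the identification of $(T,t)$ with $\CoT(G_\Theta)$ that a pair $\{u,w\}\in\binom{\Gen}{2}$ satisfies $\{u,w\}\in E_\Theta$ precisely when $\lambda_{G_\Theta}(\lca_{\CoT(G_\Theta)}(u,w))=1$, and that $t(v)=\bullet$ exactly when $\lambda_{G_\Theta}(v)=1$. Combining these two facts and using that $t$ takes only the values $\bullet,\square$, I would first record the basic equivalence that for any two distinct genes $u,w$,
\[\{u,w\}\in E_\Theta \iff t(\lca(u,w))=\bullet, \qquad \{u,w\}\notin E_\Theta \iff t(\lca(u,w))=\square.\]
Applying this to the pair $\{x,y\}$ gives the first claim immediately: $t(\lca(x,y))=\square$ if and only if $\{x,y\}\notin E_\Theta$.

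For the second biconditional I would exploit the shape of the displayed triple. Since $\rt{(xy|z)}\in\mathfrak{R}(T)$, the path from $x$ to $y$ does not intersect the path from $z$ to $\rho_T$, so $\lca(x,y)\prec_T\lca(x,y,z)$ and, crucially, the two ``mixed'' pairs meet at the apex, i.e.\ $\lca(x,z)=\lca(y,z)=\lca(x,y,z)$. Applying the basic equivalence above to the pairs $\{x,z\}$ and $\{y,z\}$ and substituting this common ancestor yields
\[\{x,z\}\in E_\Theta \iff t(\lca(x,y,z))=\bullet \iff \{y,z\}\in E_\Theta.\]
In particular the edges $\{x,z\}$ and $\{y,z\}$ are simultaneously present or absent, and either one is present exactly when $t(\lca(x,y,z))=\bullet$. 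Hence $t(\lca(x,y,z))=\bullet$ if and only if both $\{x,z\},\{y,z\}\in E_\Theta$, as claimed.

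There is no genuinely hard step here; the argument is bookkeeping on the cotree, anchored entirely in the edge--label correspondence established earlier. The one point that requires care is the geometric identification $\lca(x,z)=\lca(y,z)=\lca(x,y,z)$ coming directly from the definition of a displayed triple, since the whole second biconditional hinges on both mixed pairs meeting at the same vertex as $\{x,y,z\}$. I would also remark that the hypothesis that $x,y,z$ lie in pairwise distinct species is not actually used in any of these equivalences -- it is carried along only because the lemma is later applied to the gene triples of $\mathbb{G}$ -- so it may be left idle in the proof.
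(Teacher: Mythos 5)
Your proof is correct and follows essentially the same route as the paper's: both reduce everything to the single equivalence $\{u,w\}\in E_\Theta \iff t(\lca(u,w))=\bullet$ coming from the cotree identification, and both use the fact that a displayed triple $\rt{(xy|z)}$ forces $\lca(x,z)=\lca(y,z)=\lca(x,y,z)$. Your side remark that the pairwise-distinct-species hypothesis is not actually needed for the equivalences is also accurate; the paper carries it along for the same reason you identify.
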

\begin{proof}
  Assume there is a triple $\rt{(xy|z)}\in \mathfrak R(T)$ where all genes
  $x,y,z$ are contained in pairwise different species.  Clearly,
  $t(\lca(x,y))=\square$ iff $(x,y)\notin \Theta$ iff $\{x,y\} \notin
  E_\Theta$.  Since, $\lca(x,y)\neq \lca(x,z)=\lca(y,z)=\lca(x,y,z)$ we
  have $t(\lca(x,z))=t(\lca(y,z)) = \bullet$, which is iff $(x,z), (y,z)
  \in \Theta$ and thus, iff $\{x,z\},\{y,z\}\in E_\Theta$.
\end{proof}

The set $\mathbb{S}$ of species triples is encoded by the binary variables
$T_{\rt{(\al\be|\ga)}}=1$ iff $\rt{(\al\be|\ga)}\in\mathbb{S}$. Note that
$\rt{(\be\al|\ga)}\equiv\rt{(\al\be|\ga)}$. In order to avoid superfluous
variables and symmetry conditions connecting them we assume that the first
two indices in triple variables are ordered.  Thus there are three triple
variables $T_{\rt{(\al\be|\ga)}}$, $T_{\rt{(\al\ga|\be)}}$, and
$T_{\rt{(\be\ga|\al)}}$ for any three distinct $\al,\be,\ga\in\Spe$.

Assume that $\rt{(xy|z)}\in \mathfrak{R}(T)$ is an arbitrary triple
displayed by $T$. In the remainder of this section, we assume 
that these genes $x,y$ and $z$ are from pairwise different species
$\sigma(x)=\al$, $\sigma(y)=\be$ and $\sigma(z)=\ga$. Given that in addition
$t(\lca(x,y,z))=\bullet$, we need to ensure that $T_{\rt{(\al\be|\ga)}}=1$. If
$t(\lca(x,y,z))=\bullet$ then there are two cases: \emph{(1)}
$t(\lca(x,y))=\square$ or \emph{(2)} $t(\lca(x,y))=\bullet$. These two
cases needs to be considered separately for the ILP formulation.

\emph{Case (1) $t(\lca(x,y))=\square\neq t(\lca(x,y,z))$:}
Lemma~\ref{A:lem:bulletifftheta} implies that $E_{xy}=0$ and 
$E_{xz}=E_{yz}=1$. This yields, $(1-E_{xy})+E_{xz}+E_{yz}=3$.
To infer that in this case $T_{\rt{(\al\be|\ga)}}=1$ we add the next 
constraint.

\begin{ILP}
  (1-E_{xy})+E_{xz}+E_{yz}-T_{\rt{(\al\be|\ga)}}   & \leq 2 
  \label{ilp:inferOne}\\[-0.1cm]
\end{ILP}

These constraints need, by symmetry, also be added for the possible triples
$\rt{(xz|y)}$, resp., $\rt{(yz|x)}$ and the corresponding species triples
$\rt{(\al\ga|\be)}$, resp., $\rt{(\be\ga|\al)}$:
\begin{align} 
  E_{xy}+(1-E_{xz})+E_{yz}-T_{\rt{(\al\ga|\be)}} \leq 2 \tag{\ref{ilp:inferOne}}\\ 
  E_{xy}+E_{xz}+(1-E_{yz})-T_{\rt{(\be\ga|\al)}} \leq 2 \notag 
\end{align}

\emph{Case (2) $t(\lca(x,y))=\bullet=t(\lca(x,y,z))$:} Lemma
\ref{A:lem:bulletifftheta} implies that $E_{xy}=E_{xz}=E_{yz}=1$.  Since
$\lca(x,y) \neq \lca(x,y,z)$ and the gene tree we obtained the triple from
is a discriminating representation, that is consecutive event labels are
different, there must be an inner vertex $v\not\in \{\lca(x,y),
\lca(x,y,z)\}$ on the path from $\lca(x,y)$ to $\lca(x,y,z)$ with
$t(v)=\square$. Since $T$ is a phylogenetic tree, there must be a leaf
$w\in L(v)$ with $w\neq x,y$ and $\lca(x,y,w)=v$ which implies
$t(\lca(x,y,w))=t(v)=\square$. For this vertex $w$ we derive that $(xw|z),
(yw|z) \in \mathfrak R(T)$ and in particular, $\lca(y,w,z)=\lca(x,y,z) =
\lca(w,z)$. Therefore, $t(\lca(y,w,z)) =t(\lca(w,z))=\bullet$.

Now we have to distinguish two subcases; either \emph{Case (2a)}
$\sigma(x)=\al=\sigma(w)$ (analogously one treats the case
$\sigma(y)=\be=\sigma(w)$ by interchanging the role of $x$ and $y$) or
\emph{Case (2b)} $\sigma(x)=\al\neq\sigma(w)=\de\notin\{\al,\be,\ga\}$.
Note, the case $\sigma(w)=\sigma(z)=\ga$ cannot occur, since we obtained
$(T,t)$ from the cotree of $G_\Theta$ and in particular, we have
$t(\lca(w,z))=\bullet$.  Therefore, $E_{wz}=1$ and hence, by Constraint
\ref{ilp:forbid_E} it must hold $\sigma(w)\neq \sigma(z)$.

\begin{itemize}
\item[(2a)] Since $t(\lca(y,w,z))=\bullet$ and $v=\lca(y,w)$ with
  $t(v)=\square$ it follows that the triple $(yw|z)$ fulfills the
  conditions of \emph{Case 1}, and hence $T_{\rt{(\al\be|\ga)}}=1$ and we
  are done.
\item[(2b)] Analogously as in Case (2a), the triples $(xw|z)$ and $(yw|z)$
  fulfill the conditions of Case (1), and hence we get
  $T_{\rt{\rt{(\al\de|\ga)}}}=1$ and $T_{\rt{(\be\de|\ga)}}=1$. However, we
  must ensure that also the triple $\rt{(\al\be|\ga)}$ will be determined
  as observed species triple.  Thus we add the constraint:
  \begin{align}
    T_{\rt{\rt{(\al\de|\ga)}}}+T_{\rt{(\be\de|\ga)}}-T_{\rt{(\al\be|\ga)}}
    \leq 1 \tag{\ref{ilp:inferOne}} 
  \end{align}
  which ensures that $T_{\rt{(\al\be|\ga)}}=1$ whenever 
  $T_{\rt{\rt{(\al\de|\ga)}}}=T_{\rt{(\be\de|\ga)}}=1$.  
\end{itemize}
The first three constraints in Eq. \eqref{ilp:inferOne} are added for all
$\{x,y,z\}\in \binom{\Gen}{3}$ and where all three genes are contained in
pairwise different species $\sigma(x)=\al$, $\sigma(y)=\be$ and
$\sigma(z)=\ga$ and the fourth constraint in Eq. \eqref{ilp:inferOne}
is added for all $\{\al,\be,\ga,\de\}\in \binom{\Spe}{4}$.

In particular, these constraints ensure, that for each triple
$\rt{(xy|z)}\in \mathbb G$ with speciation event on top and corresponding
species triple $\rt{(\al\be|\ga)}$ the variable $T_{\rt{(\al\be|\ga)}}$ is
set to $1$.

However, the latter ILP constraints allow some degree of freedom for the
choice of the binary value $T_{\rt{(\al\be|\ga)}}$, where for all
respective triples $\rt{(xy|z)}\in \mathfrak R(T)$ holds
$t(\lca(x,y,z))=\square$.  To ensure, that only those variables
$T_{\rt{(\al\be|\ga)}}$ are set to $1$, where at least one triple
$\rt{(xy|z)}\in \mathfrak R(T)$ with $t(\lca(x,y,z))=\bullet$ and
$\sigma(x)=\al$, $\sigma(y)=\be$, $\sigma(z)=\ga$ exists, we add the
following objective function that minimizes the number of variables
$T_{\rt{(\al\be|\ga)}}$ that are set to $1$:

\begin{ILP}
  \min \sum_{\{\al,\be,\ga\}\in \binom{S}{3}} 
  T_{\rt{(\al\be|\ga)}}+T_{\rt{(\al\ga|\be)}}+T_{\rt{(\be\ga|\al)}}
  \label{ilp:min1triple}
\end{ILP}

For the latter ILP formulation $O(|\Spe|^3)$ variables 
and $O(|\Gen|^3+|\Spe|^4)$ constraints are required.

\subsection{Find Maximal Consistent Triple Set}
Given the set of species triple $\mathbb{S}$ the next step is to extract a
maximal subset $\mathbb{S}^* \subseteq \mathbb{S}$ that is consistent.
This combinatorial optimization problem is known to be NP-complete
\cite{Jansson2001,Wu2004}. In an earlier ILP approach, \citet{chang2011ilp}
explicitly constructed a tree that displays $\mathbb{S}^*$.  In order to
improve the running time of the ILP we focus here instead on constructing a
consistent, strictly dense triple set $\mathbb{S}$' containing the desired
solution $\mathbb{S^*}$ because the consistency check involves two-element
subsets in this case (Theorem \ref{thm:consistIFFpairwise}). From
$\mathbb{S}'$ obtain the desired solution as
$\mathbb{S^*}=\mathbb{S}'\cap\mathbb{S}$. We therefore introduce binary
variables $T'_{\rt{(\al\be|\ga)}}=1$ iff $\rt{(\al\be|\ga)}\in\mathbb{S}'$.

To ensure, that $\mathbb{S}'$ is strictly dense we add for all
$\{\al,\be,\ga\}\in \binom{\mathcal{S}}{3}$ the constraints:
\begin{align} \tag{\ref{ilp:sd}}
  &T'_{\rt{(\al\be|\ga)}}+T'_{\rt{(\al\ga|\be)}}+T'_{\rt{(\be\ga|\al)}} = 1.
\end{align}
We can now apply the inference rules in
Eq. \eqref{eq:infRule2} and the results of Theorem
\ref{thm:consistIFFpairwise} and Lemma \ref{lem:suffRule}.  
Therefore, we add the following constraint for all ordered tuples $(\al,\be,\ga,\de)$
for all $\{\al,\be,\ga,\de\}\in \binom{\Spe}{4}$:

\begin{align} 
  2T'_{\rt{(\al\be|\ga)}} + 2&T'_{\rt{(\al\de|\be)}}- 
  T'_{\rt{(\be\de|\ga)}} - T'_{\rt{\rt{(\al\de|\ga)}}} \leq 2  
  \tag{\ref{ilp:eq:infRule2}}
\end{align}

The constraint in Eq. \eqref{ilp:eq:infRule2} is a 
direct translation of the inference rule in Eqn. \eqref{eq:infRule2}.
Moreover, by Theorem \ref{thm:consistIFFpairwise} and Lemma \ref{lem:suffRule},
we know that testing pairs of triples with Eq. \eqref{eq:infRule2} is
sufficient for verifying consistency.

To ensure maximal cardinality of $\mathbb S^* = \mathbb S' \cap \mathbb S$
we use the  objective function
\begin{align} 
 \max \sum_{\rt{(\al\be|\ga)}\in \mathbb S} T'_{\rt{(\al\be|\ga)}} 
 \tag{\ref{ilp:maxdense}}
\end{align}

This ILP formulation can easily be adapted to solve a ``weighted''
maximum consistent subset problem: With $w\rt{(\al\be|\ga)}$ we denote for
every species triple $\rt{(\al\be|\ga)}\in \mathbb S$ the number of
connected components in $G_{\Theta^*}$ that contains three
vertices $a,b,c\in \Gen$ with $\rt{(ab|c)}\in \mathbb G$ and
$\sigma(a)=\al,\sigma(b)=\be, \sigma(c)=\ga$. In this way, we increase
the significance of species triples in $\mathbb S$ that have been
observed more times, when applying the following objective function.
\begin{align}
  \max \sum_{\rt{(\al\be|\ga)}\in \mathbb S}
   T'_{\rt{(\al\be|\ga)}}*w\rt{(\al\be|\ga)}. \tag{\ref{ilp:wmax}}
\end{align}

Finally, we define binary variables $T^*_{\rt{(\al\be|\ga)}}$ that indicate
whether a triple $\rt{(\al\be|\ga)}\in \mathbb{S}$ is contained in a
maximal consistent triples set $\mathbb{S}^*\subseteq \mathbb{S}$, i.e.,
$T^*_{\rt{(\al\be|\ga)}}=1$ iff $\rt{(\al\be|\ga)}\in\mathbb S^*$ and thus, 
iff $T_{\rt{(\al\be|\ga)}}=1$ and $T'_{\rt{(\al\be|\ga)}}=1$.
Therefore, we add for all $\{\al,\be,\ga\}\in \binom {\mathcal S}{3}$ 
the binary variables 
$T^*_{\rt{(\al\be|\ga)}}$ and add the constraints
\begin{equation}
  0 \leq T'_{\rt{(\al\be|\ga)}} + T_{\rt{(\al\be|\ga)}} - 2T^*_{\rt{(\al\be|\ga)}} \leq 1
  \tag{\ref{eq:tstar}}
\end{equation}

It is easy to verify, that in the latter ILP formulation 
$O(|\Spe|^3)$ variables and $O(|\Spe|^4)$ constraints are required.

\subsection{Least Resolved Species Tree}

The final step consists in finding a minimally resolved tree that displays
all triples of $\mathbb S^*$ and, in addition, has a minimum number
of inner vertices. The variables $T^*_{\rt{(\al\be|\ga)}}$
defined in the previous step take on the role of constants here. 

There is an ILP approach by \cite{chang2011ilp}, for determining a maximal
consistent triple sets. However, this approach relies on determining
consistency by checking and building up a binary tree, a very time
consuming task.  As we showed, this can be improved and simplified by the
latter ILP formulation. However, we will adapt now some of the ideas
established by \cite{chang2011ilp}, to solve the NP-hard problem
\cite{Jansson:12} of finding a least resolved tree.

To build an arbitrary tree for the consistent triple set $\mathbb{S}^*$,
one can use the fast algorithm \texttt{BUILD}
\citep{sem-ste-03a}. Moreover, if the tree obtained by \texttt{BUILD} for
$\mathbb{S}^*$ is a binary tree, then Proposition \ref{pro:BinaryClDense}
implies that the closure $\cl(\mathbb{S}^*)$ is strictly dense and that this
tree is a unique and least resolved tree for $\mathbb{S}^*$.  Hence, as a
preprocessing step one could use \texttt{BUILD} first, to test whether the
tree for $\mathbb{S}^*$ is already binary and if not, proceed with the
following ILP approach.

A phylogenetic tree $S$ is uniquely determined by hierarchy $\mathcal{C} =
\{L(v)\mid v\in V(S)\}$ according to Theorem \ref{A:thm:hierarchy}. Thus it
is possible to construct $S$ by building the clusters induced by the
triples of $\mathbb{S}^*$. Thus we need to translate the condition for
$\mathcal{C}$ to be a hierarchy into the language of ILPs. 

Following \cite{chang2011ilp} we use a binary $|\Spe|\times N$ matrix $M$,
with entries $M_{\al p}=1$ iff species $\al$ is contained in cluster
$p$. By Lemma \ref{A:lem:nrC}, it is clear that we need at most $2|\Spe|-1$
columns. As we shall see later, we exclude (implicitly) the trivial
singleton clusters $\{x\}\in \Spe$ and the cluster $\Spe$. Hence, it
suffices to use $N=2|\Spe|-1-|\Spe|-1=|\Spe|-2$ clusters. Each cluster $p$,
which is represented by the $p$-th column of $M$, corresponds to an inner
vertex $v_p$ in the species tree $S$ so that $p=(L(v_p))$.

Since we are interested in finding a least resolved tree rather than a
fully resolved one, we allow that number of clusters is smaller than $N-2$,
i.e., we allow that some columns of $M$ have no non-zero entries. Here, we
deviate from the approach of \cite{chang2011ilp}. Columns $p$ with
$\sum_{\al\in \Spe} M_{\al p}=0$ containing only $0$ entries and thus,
clusters $L(v_p)=\emptyset$, are called \emph{trivial}, all other columns
and clusters are called \emph{non-trivial}. Clearly, the non-trivial
clusters correspond to the internal vertices of $S$, hence we have to
maximize the number of trivial columns of $M$. This condition also
suffices to remove redundancy, i.e., non-trivial columns with the same
entries.

We first give the ILP formulation that captures that all triples
$\rt{(\al\be|\ga)}$ contained in $\mathbb{S}^* \subseteq \mathbb{S}$ are
displayed by a tree. A triple $\rt{(\al\be|\ga)}$ is displayed by a tree if
and only if there is an inner vertex $v_p$ such that $\al,\be\in L(v_p)$
and $\ga\notin L(v_p)$ and hence, iff $M_{\al p}=M_{\be p}=1\neq M_{\ga
p}=0$ for this cluster $p$. 

To this end, we define binary variables $N_{\al\be, p}$ so that
  $N_{\al\be, p}=1$ iff $\al,\be\in L(v_p)$ for all $\{\al,\be\}\in
  \binom{\Spe}{2}$ and $p=1,\dots, |\Spe|-2$. This condition is captured by
  the constraint:
\begin{align}\label{ref{ilp:Nclus}}\tag{\ref{ilp:Nclus}}
  0\leq & M_{\al p} + M_{\be p} - 2 N_{\al\be, p}  \leq 1.
\end{align}
We still need to ensure that for each triple $\rt{(\al\be|\ga)}\in \mathbb
S^*$ there is at least one cluster $p$ that contains $\alpha$ and $\beta$
but not $\gamma$, i.e., $N_{\al\be, p}=1$ and $N_{\al\ga, p}=0$ and
$N_{\be\ga, p}=0$. For each possible triple $\rt{(\al\be|\ga)}$ we
therefore add the constraint
\begin{align}\label{ref{ilp:rep}}
  1 - |\Spe|(1- T^*_{\rt{(\al\be|\ga)}}) \leq 
  \sum_p N_{\al\be,p} - \frac{1}{2}  N_{\al\ga,p} -\frac{1}{2} N_{\be\ga,p}. 
  \tag{\ref{ilp:rep}}
\end{align}
To see that \eqref{ilp:rep} ensures $\al,\be\in L(v_p)$ and $\ga\notin
L(v_p)$ for each $\rt{(\al\be|\ga)}\in \mathbb S^*$ and some $p$, assume
first that $\rt{(\al\be|\ga)}\not\in \mathbb S^*$ and hence,
$T^*_{\rt{(\al\be|\ga)}}=0$. Then, $1 - |\Spe|(1- T^*_{\rt{(\al\be|\ga)}})
= 1 - |\Spe|$ and we are free in the choice of the variables
$N_{\al\be,p}$, $N_{\al\ga,p}$, and $N_{\be\ga,p}$.  Now assume that
$\rt{(\al\be|\ga)}\in \mathbb S^*$ and hence, $T^*_{\rt{(\al\be|\ga)}}=1$.
Then, $1 - |\Spe|(1- T^*_{\rt{(\al\be|\ga)}}) = 1$. This implies that at
least one variable $N_{\al\be,p}$ must be set to $1$ for some $p$. If
$N_{\al\be,p}=1$ and $N_{\al\ga,p}=1$, then constraint
\eqref{ref{ilp:Nclus}} implies that $M_{\al p} = M_{\be p} = M_{\ga p} =1$
and thus $N_{\be\ga,p}=1$. Analogously, if $N_{\al\be,p}=1$ and
$N_{\be\ga,p}=1$, then $N_{\al\ga,p}=1$. It remains to show that there is
some cluster $p$ with $N_{\al\be,p}=1$ and
$N_{\al\ga,p}=N_{\be\ga,p}=0$. Assume, for contradiction, that for none of
the clusters $p$ with $N_{\al\be,p}=1$ holds that
$N_{\al\ga,p}=N_{\be\ga,p}=0$. Then, by the latter arguments all of these
clusters $p$ satisfy: $N_{\al\ga,p}=N_{\be\ga,p}=1$. However, this implies
that $ N_{\al\be,p} - \frac{1}{2} N_{\al\ga,p} -\frac{1}{2} N_{\be\ga,p} =
0$ for all $p$, which contradicts the constraint
\eqref{ilp:rep}. Therefore, if $T^*_{\rt{(\al\be|\ga)}}=1$, there must be
at least one cluster $p$ with $N_{\al\be,p}=1$ and
$N_{\al\ga,p}=N_{\be\ga,p}=0$ and hence, $M_{\al p} = M_{\be p}=1$ and
$M_{\ga p} =0$.

In summary the constraints above ensure that for the
maximal consistent triple set $\mathbb S^*$ of $\mathbb S$ and for each
triple $\rt{(\al\be|\ga)}\in \mathbb S^*$ exists at least one column $p$ in
the matrix $M$ that contains $\al$ and $\be$, but not $\ga$.  Note that for
a triple $\rt{(\al\be|\ga)}$ we do not insist on having a cluster $q$ that
contains $\ga$ but not $\al$ and $\be$ and therefore, we do not insist on
constructing singleton clusters. Moreover, there is no constraint that
claims that the set $\Spe$ is decoded by $M$. In particular, since we later
maximize the number of trivial columns in $M$ and since we do not gave ILP
constraints that insist on finding clusters $\Spe$ and $\{x\},\ x\in \Spe$,
these clusters will not be defined by $M$. However, these latter clusters
are clearly known, and thus, to decode the desired tree, we only require
that $M$ is a ``partial'' hierarchy, that is for every pair of clusters $p$
and $q$ holds $p\cap q\in \{p,q, \emptyset\}$.  In such case the clusters
$p$ and $q$ are said to be compatible.  Two clusters $p$ and $q$ are
incompatible if there are (not necessarily distinct) species
$\al,\be,\ga\in \Spe$ with $\al\in p\setminus q $ and $\be\in q\setminus
p$, and $\ga\in p\cap q$. In the latter case we would have $(M_{\al
  p},M_{\al q})=(1,0)$, $(M_{\be p},M_{\be q})=(0,1)$, $(M_{\ga p},M_{\ga
  q})=(1,1)$.  Here we follow the idea of \citet{chang2011ilp}, and use the
so-called three-gamete condition.  For each gamete $(\Gamma,\Lambda)\in
\{(0,1),(1,0)(1,1)\}$ and each column $p$ and $q$ we define a set of binary
variables $C_{p,q,\Gamma\Lambda}$. For all $\al\in \Spe$ and $p,q=1,\dots,
|\Spe|-2$ with $p\neq q$ we add
\begin{align}\tag{\ref{ilp:CM}}
C_{p,q,01}\geq &-M_{\al p}+M_{\al q}\\
C_{p,q,10}\geq &\ \ \ \ M_{\al p}-M_{\al q} \notag\\
C_{p,q,11}\geq &\ \ \ \ M_{\al p}+M_{\al q}-1 \notag
\end{align}
These constraints capture that $C_{p,q,\Gamma\Lambda}=1$ as long as
if $M_{\al p}=\Gamma$ and $M_{\al q}=\Lambda$ for some $\al\in \Spe$. 
To ensure that only compatible clusters are contained, we add
for each of the latter defined variable
\begin{align}\label{ref{ilp:comp}}
C_{p,q,01} + C_{p,q,10} + C_{p,q,11} \leq 2. \tag{\ref{ilp:comp}}
\end{align}
Hence the latter Equations \eqref{ilp:Nclus}-\eqref{ilp:comp} ensure,
that we get a ``partial'' hierarchy $M$, where only the singleton clusters
and the set $\Spe$ is missing, 

Finally we want to have for the maximal consistent triple sets
$\mathbb{S}^*$ of $\mathbb{S}$ the one that determines the least resolved
tree, i.e, a tree that displays all triples of $\mathbb{S}^*$ and has a
minimal number of inner vertices and makes therefore, the fewest
assumptions on the tree topology.  Since the number of leaves $|\Spe|$ in
the species tree $S$ is fixed and therefore the number of clusters is
determined by the number of inner vertices, as shown in the proof of Lemma
\ref{A:lem:nrC}, we can conclude that a minimal number of clusters results
in tree with a minimal number of inner vertices.  In other words, to find a
least resolved tree determined by the hierarchy matrix $M$, we need to
maximize the number of trivial columns in $M$, i.e., the number of columns
$p$ with $\sum_{\al\in \Spe} M_{\al p}=0$.

For this, we require in addition to the constraints
\eqref{ilp:Nclus}-\eqref{ilp:comp} for each $p=1,\dots,|\Spe|-2$ a binary
variable $Y_p$ that indicates whether there are entries in column $p$ equal
to $1$ or not. To infer that $Y_p=1$ whenever column $p$ is non-trivial we
add for each $p=1,\dots,|\Spe|-2$ the constraint
\begin{align}
  0&\leq Y_p|\Spe|  - \sum_{\al\in\Spe} M_{\al p}\leq |\Spe|-1 
  \tag{\ref{ilp:yp}}
\end{align}
If there is a ``1'' entry in column $p$ and $Y_p=0$ then, $Y_p|\Spe| -
\sum_{\al\in\Spe} M_{\al p}<0$, a contradiction.  If column $p$ is trivial
and $Y_p=1$ then, $Y_p|\Spe| - \sum_{\al\in\Spe} M_{\al p}=|\Spe|$, again a
contradiction.  Finally, in order to minimize the number of non-trivial
columns in $M$ and thus, to obtain a least resolved tree for $\mathbb S^*$
with a minimum number of inner vertices, 
we add the objective function
\begin{align}	\tag{\ref{ilp:minY}}
 \min\ &\sum_p Y_p
\end{align}
Therefore, we obtain for the maximal consistent subset
$\mathbb{S}^*\subseteq \mathbb{S}$ of species triples a ``partial''
hierarchy defined by $M$, that is, for all clusters $L(v_p)$ and $L(v_q)$
defined by columns $p$ and $q$ in $M$ holds $L(v_p) \cap L(v_q) \in
\{L(v_p), L(v_q), \emptyset\}$. The clusters $\Spe$ and $\{x\},\ x\in \Spe$
will not be defined by $M$. However, from these clusters and the clusters
determined by the columns of $M$ it is easily build the corresponding tree,
which by construction displays all triples in $\mathbb{S}^*$, see
\cite{sem-ste-03a,Dress:book}.
 
The latter ILP formulation requires $O(|\Spe|^3)$ variables and
constraints.

A least resolved tree with a minimum number of inner vertices \cite{Jansson:12} is not the
  only possible way to construct a species tree without spurious resolution.
  As an alternative approach one might consider trees $T$ that display
  all triples of $\mathbb{S}^*$ and at the same time minimize the number of
  additional triples $r\in \mathfrak{R}(T)\setminus\mathbb{S}^*$. Since the
  closure $\cl(\mathbb{S}^*)$ is displayed by all trees that display also
  $\mathbb{S}^*$, this task is equivalent to finding a tree with
  $\widehat{\mathbb{S}}:=\mathfrak{R}(T)$ that displays $\cl(\mathbb{S}^*)$
  and minimizes number of triples in $\widehat{\mathbb{S}}\setminus
  \cl(\mathbb{S}^*)$. Thus, $\widehat{\mathbb{S}}$ must be of minimum
  cardinality. To this end, we modify the ILP formulation
  \eqref{ilp:minY}-\eqref{ilp:comp}, and remove the objective function
  \eqref{ilp:minY}, constraint \eqref{ilp:yp} and omit the variables
  $Y_p$. Instead, we introduce the binary variables
  $\widehat{T}_{\rt{(\al\be|\ga)}}=1$ iff
  $\rt{(\al\be|\ga)}\in\widehat{\mathbb{S}}$. For each $p=1,\dots,|\Spe|-2$
  and all $\{\al,\be,\ga\}\in \binom {\mathcal S}{3}$ we add the
  constraints
  \begin{ILP} \label{ilp:tHat} 
    M_{\al p} + M_{\be p} + (1-M_{\ga p}) -
    \widehat{T}_{\rt{(\al\be|\ga)}} \leq 2.
  \end{ILP}
  Constraint \eqref{ilp:tHat} enforces that
  $\widehat{T}_{\rt{(\al\be|\ga)}}=1$ whenever there exists a cluster $p$
  with $\al,\be \in L(v_p)$ and $\ga \notin L(v_p)$, and hence
  $\rt{(\al\be|\ga)}\in\widehat{\mathbb{S}}$.  Finally, in order to
  minimize the number of triples in $\widehat{\mathbb{S}}$ we change the
  objective function \eqref{ilp:minY} to
  \begin{ILP} \label{ilp:minTHat} 
    \min\ &\sum_{\rt{(\al\be|\ga)}} \widehat{T}_{\rt{(\al\be|\ga)}}.
  \end{ILP}
  Constraints \eqref{ilp:Nclus} to \eqref{ilp:comp} remain unchanged.  This
  alternative ILP formulation requires $O(|\Spe|^3)$ variables and
  $O(|\Spe|^4)$ constraints.

If $\mathbb{S}^*$ is strictly dense, then both ILP formulations will
  result in the same binary tree as constructed using the \texttt{BUILD}
  algorithm.

\section{Implementation and Data Sets} 

\subsection{ILP Solver}

The ILP approach has been implemented using \textsc{IBM ILOG
  CPLEX{\texttrademark}} Optimizer 12.6 in the weighted version of the
maximum consistent triple set problem.
For each component of $G_\Theta$ we
check in advance if it is already a cograph. If this is not the case then
an ILP instance is executed, finding the closest cograph. In a similar
manner, we check for each resulting cograph whether it contains any paralogous
genes at all. If not, then the cograph is a complete graph and the
resulting gene tree would be a star, not containing any species triple information.
Hence, extracting the species triples is skipped. Triple extraction is done using an
polynomial time algorithm instead of the ILP formulation.
Although the connected components
of $G_\Theta$ are treated separately, some instances of the cograph editing
problem have exceptionally long computation times. We therefore exclude
components of $G_{\Theta}$ with more than 50 genes. In addition, we limit
the running time for finding the closest cograph for one disconnected
component to 30 minutes.  If an optimal solution for this component is not
found within this time limit, we use the best solution found so far.  The
other ILP computations are not restricted by a time limit.\smallskip

\subsection{Simulated Data}
To evaluate the ILP approach we use simulated and real-life data sets.
  Artificial data is created with the the method described in
  \citep{HHW+14} as well as the \texttt{Artificial Life Framework} (\ALF)
  \cite{Dalquen:12}. The first method generates explicit species/gene tree
  histories, from which the orthology relation is directly accessible.  All
  simulations are performed with parameters $1.0$ for gene duplication,
  $0.5$ for gene loss and $0.1$ for the loss rate, respectively increasing
  loss rate, after gene duplication. We do not consider cluster or genome
  duplications. \ALF\ simulates the evolution of sequences along a branch
  length-annotated species tree, explicitly taking into account gene
  duplication, gene loss, and horizontal transfer events.  To obtain
  bacteria-like data sets we adopted the procedure from \cite{DAAGD2013}: a
  tree of \emph{$\gamma$-proteobacteria} from the OMA project
  \cite{Altenhoff:11} was randomly pruned to obtain trees of moderate size,
  while conserving the original branch lengths.  All simulations are
  performed with parameters 0.005 for gene duplication/loss rate.  We do
  not consider cluster duplications/loss.

 The presented method heavily depends on the amount of duplicated
  genes, which, in turn, is depending on the number of analyzed genes per
  species.  Naturally, the question arose, how many genes, respectively
  gene families, are needed, to provide enough information to reconstruct
  accurate species trees, assuming a certain gene duplication rate.
  Therefore, we evaluate the precision of reconstructed trees with respect
  to the number of species and gene families. 100 species trees of size
  5, 10, 15, and 20 (\ALF\ only) leaves are generated.  For each tree,
  the evolution of ten to 100 (first simulation method) and 100 to 500
  (\ALF) gene families is simulated.  This corresponds for the first
  simulation method to $32.6\%$ (five species), $19.0\%$ (ten species), and
  $13.5\%$ (15 species) and for \ALF\ simulations $11.2\%$ (five species),
  $8.1\%$ (ten species), and $7.5\%$ (15 and 20 species) of all homologous
  pairs being paralogs (values determined from the simulations).
  Horizontal gene transfer and cluster duplication/loss were not
  considered.

The reconstructed trees are compared with the generated (binary)
  species trees.  Therefore, we use the software \texttt{TreeCmp}
  \cite{BGW-treeCmp:12} to compute distances for rooted trees based on
  Matching Cluster (MC), Robinson-Foulds (RC), Nodal Splitted (NS) and
  Triple metric (TT). The distances are normalized by the average distance
  between random Yule trees \cite{Yule:25}.  

In order to estimate the effects of noise in the empirical orthology
  relation we consider several forms of perturbations (i) insertion and
  deletion of edges in the orthology graph (homologous noise), (ii)
  insertion of edges (orthologous noise), (iii) deletion of edges
  (paralogous noise), and (iv) modification of gene/species assignments
  (xenologous noise). In the first three models each possible edge is
  modified with probability $p$.  Model (ii) simulates overprediction of
  orthology, while model (iii) simulates underprediction. Model (iv)
  retains the original orthology information but changes the associations
  between genes and their respective species with probability $p$. This
  simulates noise as expected in case of horizontal gene transfer. For each
  model we reconstruct the species trees of 100 simulated data sets with
  ten species and 100 gene families (first simulation method), respectively
  1000 gene families (\ALF).  As before, no horizontal gene transfer or
  cluster duplications/losses were simulated.  Noise is added with a
  probability $p \in \{0.05, 0.10, 0.15, 0.20, 0.25\}$.  

 Horizontal transfer is an abundant process in particular in
  procaryotes that may lead to particular types of errors in each step of
  our approach, see the theoretical discussion below. We therefore
  investigated the robustness of our approach against HGT as a specific
  type of perturbation in some detail. To this end, we simulate data sets
  of 1000 gene families, using \ALF, with a duplication/loss rate of
  $0.005$ and evolutionary rates $r \in \{0.0, 0.0025, 0.005, 0.0075\}$ for
  horizontal transfer.  Cluster duplications/losses, or horizontal
  transfers of groups of genes are not considered.  The simulation is
  repeated $100$ times for each combination of parameters.  From the
  simulated sequences, orthologous pairs of genes are predicted with
  \texttt{Proteinortho} \cite{Lechner:11a}, using an $E$-value threshold of
  $1e-10$ and similarity parameter of $0.9$. From this estimate of the
  orthology relation species trees are reconstructed.  

 The authors of \cite{DAAGD2013} observed that increasing HGT rates
  have only a minor impact on the recall of orthology prediction, while the
  precision drops significantly, i.e., orthology prediction tools tend to
  mis-predict xenology as orthology.  To evaluate the impact of noise
  solely coming from mis-predicting xenology as orthology, a second
  orthology relation is constructed from the same simulations.  This
  orthology relation only differs from the simulated orthology relation by
  all simulated xenologs being predicted as orthologs, i.e., all paralogs
  are correctly detected (\emph{perfect paralogy knowledge}), see \ref{fig:simXenology} (B).  
  Analogously, we evaluated the impact of noise
  solely coming from mis-predicting xenology as paralogy, i.e., all orthologs
  are correctly detected (\emph{perfect orthology knowledge}), see \ref{fig:simXenology} (C).  
  From these orthology relations, species trees
  are reconstructed with the ILP approach, and compared with the generated
  species trees, used for the simulation.
  
All simulations so far have been performed by computing a least
    resolved tree that minimized the cardinality of the vertex set, i.e.,
    the definition of \cite{Jansson:12}. Given a consistent set of triples
    $\mathbb{S}^*$ it is thus of interest to evaluate the influence of
    different choices of how a tree is inferred from the triple set. To
    this end we compare (i) the \texttt{BUILD} algorithm, (ii) least
    resolved trees with minimum number of vertices, and (iii) least resolved 
		trees that
    minimize the number of additional triples $r \notin cl(\mathbb{S}^*)$.
    As a consequence of Proposition \ref{pro:BinaryClDense} the three
    methods will produce the same tree whenever the tree constructed with
    \texttt{BUILD} is binary. This is nearly always the case when the
    target tree is binary. Therefore, we use \ALF\ to generate a
    duplication/loss history along a non-binary species tree.  As before,
    parameter values of $0.005$ are used for gene duplication and loss.
    Horizontal gene transfer and cluster duplication/loss were not
    considered here. The resulting orthology relation is perturbed with
    ``orthologous noise'' (insertion of edges) with probability $0.05$.
    Each data set was analyzed with the ILP pipeline with the three
    different tree building methods. The resulting trees are compared with
    each other as well as the input tree used for the simulation. The
    procedure is repeated 100 times using the same ternary species tree
	  that is given here in Newick tree format: 
    %\texttt{(((SE001,SE014,SE012),SE008,SE006),(SE003,(SE009,SE015,SE013),SE007),(SE002,SE005,(SE004,SE011,SE010)))}.
			\begin{small}
     \texttt{(((SE001,SE002,SE003),SE004,SE005)}\texttt{,(SE006,}\texttt{(SE007,}\\\texttt{SE008,}
      \texttt{SE009),SE010),(SE011,SE012,(SE013,SE014,SE015)))}.  
			\end{small}

\subsection{Real-Life Data Sets} 

As real-life applications we consider two sets of eubacterial genomes.  The
set of eleven \emph{Aquificales} species studied in \cite{Lechner:14b}
covers the three families \emph{Aquificaceae}, \emph{Hydrogenothermaceae},
and \emph{Desulfurobacteriaceae}.  The species considered are the
\emph{Aquificaceae}: \emph{Aquifex aeolicus} VF5
(NC\textunderscore000918.1, NC\textunderscore001880.1),
\emph{Hydrogeni\-virga sp.} 128-5-R1-1 (ABHJ00000000.1),
\emph{Hydrogenobacter thermophilus} TK-6 (NC\textunderscore013799.1),
\emph{Hydrogenobaculum sp.} \linebreak Y04AAS1 (NC\textunderscore
011126.1), \emph{Thermocrinis albus} DSM 14484 (NC\textunderscore013894.1),
\emph{Thermocrinis ruber} DSM 12173 (CP007028.1), the
\emph{Hydrogenothermaceae}: \emph{Persephonella marina} EX-H1 \linebreak
(NC\textunderscore012439.1, NC\textunderscore012440.1),
\emph{Sulfuri\-hydrogenibium sp.} \linebreak YO3AOP1 (NC\textunderscore010730.1)
\emph{Sulfurihydrogenibium azorense} Az-Fu1 
(NC\textunderscore012438.1), and the \emph{Desulfurobacteriaceae}:
\emph{Desulfobacterium thermolithotrophum} DSM 11699
(NC\textunderscore015185.1), and \emph{Thermovibrio ammonificans} HB-1
(NC\textunderscore014917.1, \linebreak NC\textunderscore014926.1).

A larger set of 19 \emph{Enterobacteriales} was taken from RefSeq:
\emph{Enterobacteriaceae} family: \emph{Cronobacter sakazakii} ATCC BAA-894
(NC\textunderscore009778.1, NC\textunderscore009779.1,
NC\textunderscore009780.1), \emph{Enterobacter aerogenes} KCTC 2190
(NC\textunderscore015663.1), \emph{Enterobacter cloacae} ATCC 13047
(NC\textunderscore014107.1, NC\textunderscore014108.1,
NC\textunderscore014121.1), \emph{Erwinia amylovora} ATCC 49946
(NC\textunderscore013971.1, NC\textunderscore013972.1,
NC\textunderscore013973.1), \emph{Escherichia coli} K-12 substr DH10B
(NC\textunderscore010473.1), \emph{Escherichia fergusonii} ATCC 35469
(NC\textunderscore011740.1, NC\textunderscore011743.1), \emph{Klebsiella
  oxytoca} KCTC 1686 (NC\textunderscore016612.1), \emph{Klebsiella
  pneu\-moniae} 1084 (NC\textunderscore018522.1),
\emph{Proteus mirabilis} BB2000 (NC\textunderscore022000.1),
\emph{Salmonella bongori} Sbon 167 (NC\textunderscore021870.1,
NC\textunderscore021871.1), \emph{Salmonella enterica} serovar Agona SL483
(NC\textunderscore011148.1, NC\textunderscore011149.1),
\emph{Salmonella typhimurium} DT104 (NC\textunderscore022569.1,
NC\textunderscore022570.1), \emph{Serratia marcescens} FGI94
(NC\textunderscore020064.1), \emph{Shigella boydii} Sb227
(NC\textunderscore007608.1, NC\textunderscore007613. 1), \emph{Shigella
  dysenteriae} Sd197 (NC\textunderscore007606.1, NC\textunderscore007607.1,
NC\textunderscore009344.1), \emph{Shigella flexneri} 5 str 8401
(NC\textunderscore008258.1), \emph{Shigella sonnei} Ss046
(NC\textunderscore007384.1, NC\textunderscore007385.1,
NC\textunderscore009345.1, NC\textunderscore009346.1,
NC\textunderscore009347.1), \emph{Yersinia pestis} Angola
(NC\textunderscore010157. 1, NC\textunderscore010158.1,
NC\textunderscore010159.1), and \emph{Yersinia pseudotuberculosis} IP 32953
(NC\textunderscore006153.2, NC\textunderscore006154.1,
NC\textunderscore006155.1).

\subsection{Estimation of the Input Orthology Relation}

An initial estimate of the orthology relation is
computed with \texttt{Proteinortho} \cite{Lechner:11a} from all the
annotated proteins using an $E$-value threshold of $1e-10$ and similarity
parameter of $0.9$.
Additionally, the genomes of all species were re-\texttt{blast}ed to detect
homologous genes not annotated in the \texttt{RefSeq}.
In brief, \texttt{Proteinortho} implements a modified pair-wise best hit
strategy starting from \texttt{blast} comparisons. It first creates a graph
consisting of all genes as nodes and an edge for every blast hit with an
$E$-value above a certain threshold. In a second step edges between two genes
$a$ and $b$ from different species are removed if a much better blast hit is
found between $a$ and a duplicated gene $b'$ from the same species as $b$.
Finally, the graph is filtered with spectral partitioning to result in
disconnected components with a certain minimum algebraic connectivity.

The resulting orthology graph usually consists of several pairwise
disconnected components, which can be interpreted as individual gene
families. Within these components there may exist pairs of genes having
\texttt{blast} $E$-values worse than the threshold so that these nodes are
not connected in the initial estimate of $\Theta$. Thus, the input data have
a tendency towards underprediction of orthology in particular for distant
species. Our simulation results suggest that the ILP approach handles
overprediction of orthology much better. We therefore re-add edges that
were excluded because of the $E$-value cut-off only within connected
components of the raw $\Theta$ relation.

\subsection{Evaluation of Phylogenies} 

For the analysis of simulated data we compare the reconstructed trees with
the trees generated by the simulation.  To this end we computed the
four commonly used distances measures for rooted trees, Matching Cluster
(MC), Robinson-Foulds (RC), Nodal Splitted (NS) and Triple metric (TT), as
described by \citet{BGW-treeCmp:12}.

The MC metric asks for a minimum-weight one-to-one matching between the
internal nodes of both trees, i.e., the clusters $C_1$ from tree $T_1$ with
the clusters $C_2$ from tree $T_2$.  For a given one-to-one matching the MC
tree distance $d_{MC}$ is defined as the sum of all weights
$h_C(p_1,p_2)=|L(p_1) \setminus L(p_2) \cup L(p_2) \setminus L(p_1)|$ with
$p_1 \in C_1$ and $p_2 \in C_2$.  For all unmatched clusters $p$ a weight
$|L(p)|$ is added.
The RC tree distance $d_{RC}$ is equal to the number of different clusters
in both trees divided by 2.
The NS metric computes for each tree $T_i$ a matrix $l(T_i)=(l_{xy})$ with
$x,y \in L(T_i)$ and $l_{xy}$ the length of the path from $\lca(x,y)$ to
$x$.  The NS tree distance $d_{NS}$ is defined as the $L^2$ norm of these
matrices, i.e., $d_{NS} = \Vert l(T_1)-l(T_2)\Vert_2$.
The TT metric is based on the set of triples $\mathfrak{R}(T_i)$ displayed
by tree $T_i$.  For two trees $T_1$ and $T_2$ the TT tree distance is equal
to the number of different triples in respective sets $\mathfrak{R}(T_1)$
and $\mathfrak{R}(T_2)$.

The four types of tree distances are implemented in the software
\texttt{TreeCmp} \cite{BGW-treeCmp:12}, together with an option to compute
normalized distances. Therefore, average distances between
random Yule trees \cite{Yule:25} are provided for each metric
and each tree size from 4 to 1000 leaves.
These average distances are used for normalization, resulting
in a value of 0 for identical trees and a value of approximately
1 for two random trees. Note, however, distances greater 1 are also possible.

For the trees reconstructed from the real-life data sets we compute a
support value $s \in [0,1]$, utilizing the triple weights
$w\rt{(\al\be|\ga)}$ from Eq.~\eqref{ilp:wmax}. Precisely,
\setcounter{equation}{1}
\begin{equation}
s = \frac{\sum_{\rt{(\al\be|\ga)} \in \mathbb S^*} 
  w\rt{(\al\be|\ga)}}{\sum_{\rt{(\al\be|\ga)} \in \mathbb S^*} 
  w\rt{(\al\be|\ga)}+w\rt{(\al\ga|\be)}+w\rt{(\be\ga|\al)}}
\end{equation}
The support value of a reconstructed tree indicates how often the triples
from the computed maximal consistent subset $\mathbb S^*$ were obtained
from the data in relation to the frequency of all obtained triples.
It is equal to 1 if there was no ambiguity in the data.
Values around 0.33 indicate randomness.

In a similar way, we define support values for each subtree $T(v)$ of the
resulting species tree $T$.  Therefore, let $S^v = \{\rt{(\al\be|\ga)} \in
\mathfrak{R}(T) | \al, \be \in L(v), \ga \notin L(v)\}$ be the subset of
the triples displayed by $T$ with the two closer related species being
leaves in the subtree $T(v)$ and the third species not from this
subtree. Then, the subtree support is defined as:
\begin{equation}
  s_v = \frac{\sum_{\rt{(\al\be|\ga)} \in \mathbb S^v} 
    w\rt{(\al\be|\ga)}}{\sum_{\rt{(\al\be|\ga)} \in \mathbb S^v} 
    w\rt{(\al\be|\ga)}+w\rt{(\al\ga|\be)}+w\rt{(\be\ga|\al)}}
\end{equation}
Note that $S^v$ only contains triples that support a subtree with leaf set
$L(v)$.  Therefore, the subtree support indicates how often triples are
obtained supporting this subtree in relation to the frequency of all
triples supporting the existence or non-existence of this subtree.

In addition, bootstrap trees are constructed for each data set, using two
different bootstrapping approaches.  (i) bootstrapping based on components,
and (ii) bootstrapping based on triples.  Let $m$ be the number of pairwise
disconnected components from the orthology graph $G_{\Theta^*}$, $n_i$ the
number of species triples extracted from component $i$, and
$n=\sum_{i=1}^{m} n_i$.  In the first approach we randomly select $m$
components with repetition from $G_{\Theta^*}$.  Then we extract the
respective species triples and compute the maximal consistent subset and
least resolved tree.  In the second approach we randomly select $n$ triples
with repetition from $\mathbb S$.  Each triple $\rt{(\al\be|\ga)}$ is
chosen with a probability according to its relative frequency
$w\rt{(\al\be|\ga)} / n$. From this set the maximal consistent subset and
least resolved tree is computed.  Bootstrapping is repeated $100$
times. Majority-rule consensus trees are computed with the software
\texttt{CONSENSE} from the \texttt{PHYLIP} package.

\section{Additional Results}

\subsection{Robustness against Noise from Horizontal Gene Transfer}
\label{sec:insens}

 Horizontal gene transfer (HGT) is by far the most common deviation
  from vertical inheritance, e.g.\ \cite{Grilli:14}. The key problem with
  HGT in the context of orthology prediction is that pairs of genes that
  derive from a speciation rather than a duplication event may end up in
  the same genome. Since pairs of genes in the same genome are classified
  as paralogs by the initial orthology detection heuristics and
  subsequently by ILP constraint \ref{ilp:forbid_E} during cograph
  editing. Such \emph{pseudo-orthologous} pairs can lead to a misplaced
  node with an incorrect event label in the cotree.  This may, under
  some circumstances, lead to the inference of false species triples, see
  Figure \ref{fig:HGT-wrongTriples}. Note, the latter problem still remains
  even if we would have detected all events on the gene tree correctly but
  use the triple sets $\mathbb{G}$ and $\mathbb{S}$ without any additional
  restrictions. Again Fig.~\ref{fig:HGT-wrongTriples} serves as an example.
  Therefore, it is of central interest  to understand in more detail the relation 
	between symbolic representations, reconciliation maps 
	and triple sets that take also HGT into account, which might solve
	this problem.

  When all true paralogs are known, we obtain surprisingly accurate species
  tree, see Figure \ref{fig:simXenology} (B). 
  The species trees reconstructed from a perfect orthology relation are somewhat 
  less accurate, see Figure \ref{fig:simXenology} (C). 
  The most pressing practical
  issue, therefore, is to identify true paralogs, i.e., pairs of genes that
  originate from a duplication event and subsequently are inherited only
  vertically. In addition, phylogeny-free
  methods to identify xenologs e.g.\ based on sequence composition
  \cite{Jaron:14,Tsirigos:05} are a promising avenue for future work to
  improve the initial estimates of orthology and paralogy.

\subsection{Simulated Data} 
The results for simulated data sets with a varying number of independent
gene families suggest, that a few hundred gene families are sufficient to
contain enough information for reconstructing proper phylogenetic species
trees. The reconstructions for data sets generated with \ALF\ need
  much more gene families to obtain a similar accuracy, as compared to
  simulations with the first simulation method.  This can be explained by
  the fact that the simulations of the first method resulted in a higher
  amount of paralogs, ranging from $13.5\%$ to $32.6\%$, compared to the
  \ALF\ simulations ($7.5\%$ to $11.2\%$).  Another reason is that due to
  the construction of the gene trees, used for \ALF\ simulations, the
  distribution of branch lengths, and hence, the distribution of
  duplications among the species tree, is very heterogeneous.  The average
  percentage of short branches (for which less than 1 duplication is
  expected, using a duplication rate of 0.005 and $n$ gene families) is
  ranging from $11.3\%$ ($5$ species, $500$ gene families) to $33.6\%$
  ($20$ species, $100$ gene families).  Note, that the lack of duplications
  leads to species trees that are not fully resolved, and hence have a
  larger distance to the generated trees used for the simulation.  Figures
\ref{fig:simFamilyFull} (first simulation method) and
\ref{fig:simFamilyFullALF} (\ALF\ simulations) show boxplots for the four
tree distances as a function of the number of independent gene families.

The complete results for the 2000 simulated data sets of 10 species and
100, resp. 1000 gene families with a varying amount of noise are depicted
in Figures \ref{fig:simNoiseFull} (first simulation method) and
\ref{fig:simNoiseFullALF} (\ALF).

The results for simulated data sets with horizontal gene transfer show
  that our method is very robust against noise introduced by horizontal
  gene transfer, which appears as mis-predicted orthology. Even xenologous
  noise of up to 39.5\% of the homologous pairs had only a minor impact on
  the obtained tree distances.  The triple support values $s$ for the
  reconstructed species trees, which ranges between 0.978 (HGT rate 0.0025)
  and 0.943 (HGT rate 0.0075).  This shows that only very few false species
  triples have been inferred.  However, these triples could be excluded
  during the computation of the maximal consistent subset, as they are
  usually dominated by the amount of correctly identified species triples.
  The small differences between generated and reconstructed species trees
  can be explained by the fact that the method forces homologous genes
  within the same species to be paralogous, although, due to horizontal
  gene transfer their lowest common ancestor can be a speciation event.
  This leads to the estimated orthology not being a cograph, introducing
  errors during the cograph editing step.  Figure \ref{fig:simXenology}
  shows boxplots for the tree distance as a function of the percentage of
  xenologous noise.

Computations with the different tree building methods (i.e.,
    (i) the \texttt{BUILD} algorithm, (ii) least
    resolved trees with minimum number of vertices, and (iii) least
    resolved trees that
    minimize the number of additional triples $r \notin cl(\mathbb{S}^*)$) 
		showed no influence of the used method on the
    resulting species trees. For none of the 100 data sets, the correct
    tree was reconstructed since -- expectedly -- the added noise
    introduces a few spurious triples that incorrectly resolve non-binary
    vertices. Indeed, all trees were (not fully resolved) refinements of the target tree.
    Furthermore, the triple distance between the reconstructed trees and the
    target tree was minute, with an average of $0.04$.
    Interestingly, for each of the 100 data sets, the respective three
    trees reconstructed with the three different tree building methods,
    were always identical. The consistent sets of triples obtained
    from the 100 data sets always identified a certain tree $T$. As
    demonstrated by Lemma \ref{lem:allEqual}, under this condition all
    three methods necessarily yield the same result. This finding suggests
    that our choice of the least resolved tree could be replaced by 
    \texttt{BUILD} as an efficient heuristic. For the analyzed data sets \texttt{BUILD}
    used less than 3 milliseconds of computation time and was approximately $10^5$ times faster
    than the other two methods.
    A head-to-head comparison of the two ILP methods shows that the method which minimizes
    the number of additional tripples (153 seconds on average) was approximately three times faster
    compared to the method which minimizes the number of vertices (496 seconds on average). 

\subsection{Real-life Data}
Figure \ref{fig:aquificales} depicts the phylogenetic tree of
\emph{Aquificales} species obtained from paralogy data in comparison to the
tree suggested by \citet{Lechner:14b}. The trees obtained from
bootstrapping experiments are given in Figure
\ref{fig:aquificales}.  %\ref{fig:aquificalesBoot}.
The majority-rule consensus trees for both bootstrapping approaches are
identical to the previously computed tree.  However, the bootstrap support
appears to be smaller next to the leaves.  This is in particular the case
for closely related species with only a few duplicated genes exclusively
found in one of the species.

Figure \ref{fig:enterobacteriales} depicts the phylogenetic tree of
\emph{Enterobacteriales} species obtained from paralogy data in comparison
to the tree from \texttt{PATRIC} database \citep{Wattam:13}. The trees
obtained from bootstrapping experiments are given in Figure
\ref{fig:enterobacteriales}.   %\ref{fig:enterobacterialesBoot}.  
When assuming the \texttt{PATRIC} to be
correct, then the subtree support values appear to be a much more reliable
indicator, compared to the bootstrap values.

\subsection{Additional Comments on Running Time}

The CPLEX Optimizer is capable of solving instances with approximately a
few thousand variables. As the ILP formulation for cograph editing requires
$O(|\Gen|^2)$ many variables, we can solve instances with up to 100 genes
per connected component in $G_\Theta$.  However, for our computations we
limit the size of each component to 50 genes.  Furthermore, the ILP
formulations for finding the maximal consistent triple set and least
resolved species tree requires $O(|\Spe|^3)$ many variables. Hence, problem
instances of up to about 20 species can be processed.

Table \ref{tab:runtimeExtended} shows the running times for simulated and
real-life data sets for each individual sub-task. Note that the time used
for cograph editing is quite high, compared to the other sub-tasks. This is
due to the fact, that cograph editing if performed for each connected component
in $G_\Theta$ individually, and initializing the ILP solver is a relevant
factor. In the implementation we first perform a check, if for a
given gene family cograph editing has to be performed.
Triple extraction is performed with a polynomial time algorithm.
Another oddity is the extraordinary short running time for the computation of the 
maximal consistent subset of species triples in the
\emph{Enterobacteriales} data set. During the bootstrapping experiments for
this set much longer times were observed.

\clearpage
\begin{figure*}[tp]
  \centering
  \includegraphics[width=0.4\textwidth]{./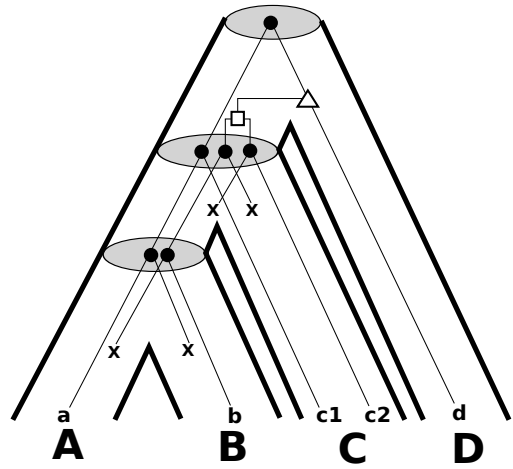}\\
\caption{Shown is a gene tree $T$ on $\Gen = \{a,b,c1,c2,d\}$ evolving along species tree $S$
on $\Spe=\{A,B,C,D\}$. 
In this scenario false gene triples in  $\mathbb{G}$ and thus, false species triples in  $\mathbb{S}$
are introduced, due to the HGT-event ($\triangle$) followed by a duplication event ($\Box$) and 
certain losses (\textbf{x}). 
Here, we obtain that $\rt{(bc2|a)}\in \mathbb{G}$ and thus $\rt{(BC|A)}\in \mathbb{S}$, contradicting
that $\rt{(AB|C)} \in \mathfrak{R}(S)$.}
\label{fig:HGT-wrongTriples}
\end{figure*}

\begin{figure*}[tp]
\begin{center}
\begin{minipage}{0.98\linewidth}
  \centering
  \includegraphics[width=0.95\textwidth]{./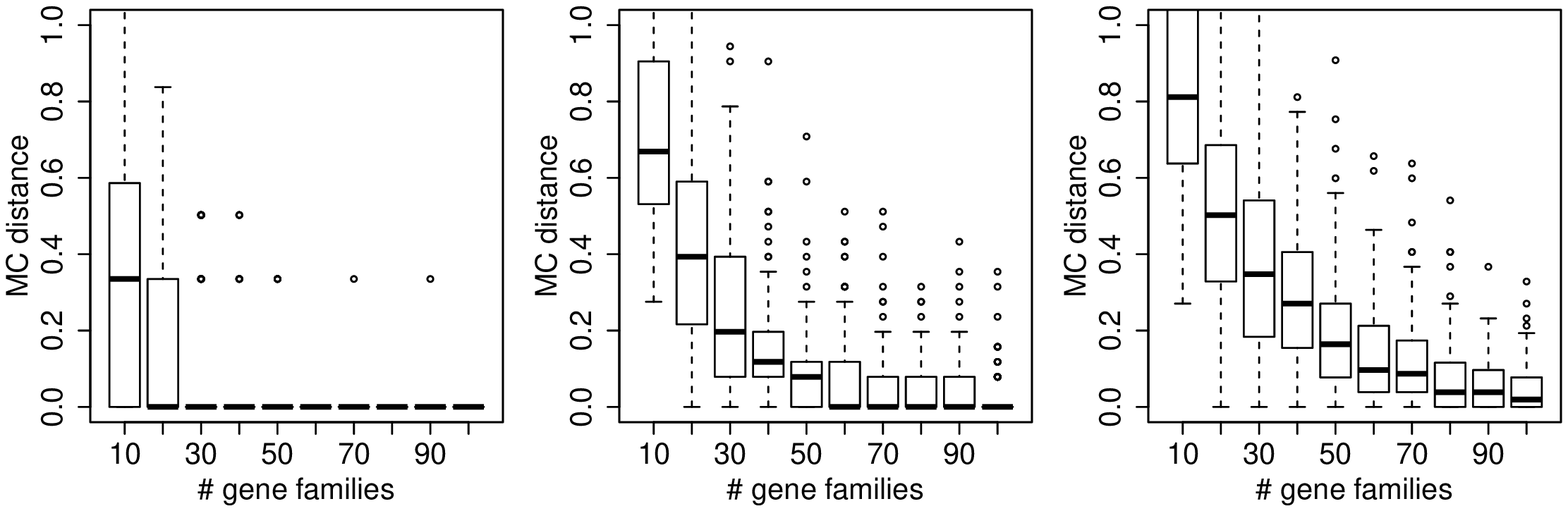}\\
\end{minipage}
\begin{minipage}{0.98\linewidth}
  \centering
  \includegraphics[width=0.95\textwidth]{./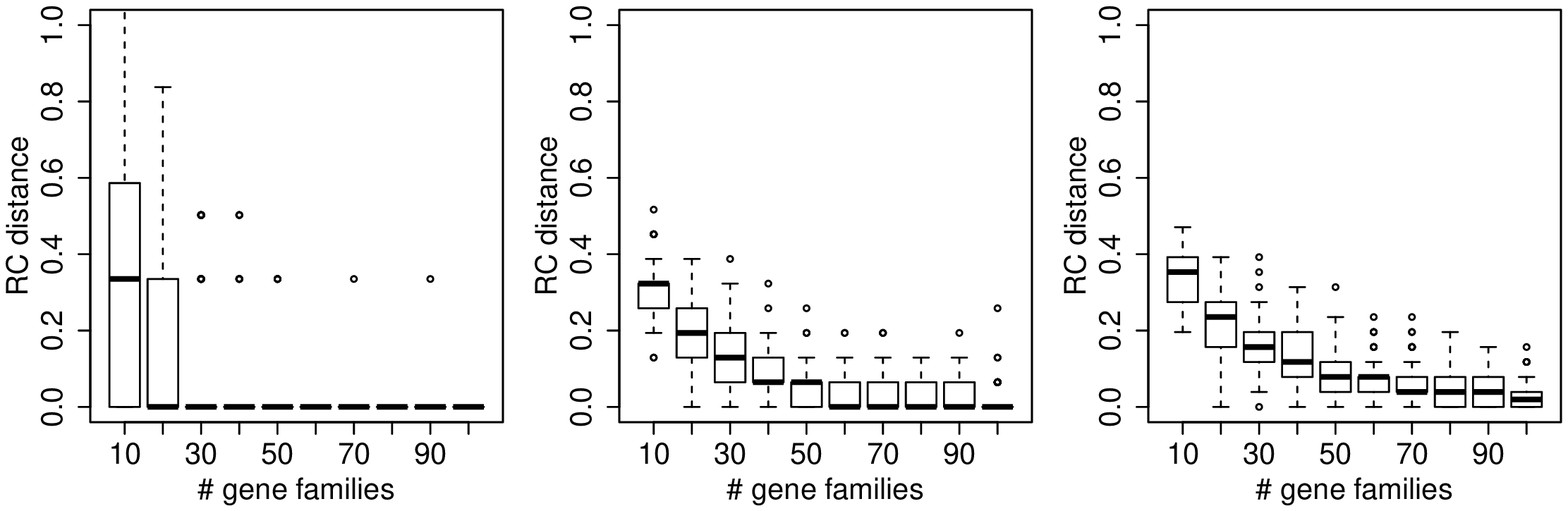}\\
\end{minipage}
\begin{minipage}{0.98\linewidth}
  \centering
  \includegraphics[width=0.95\textwidth]{./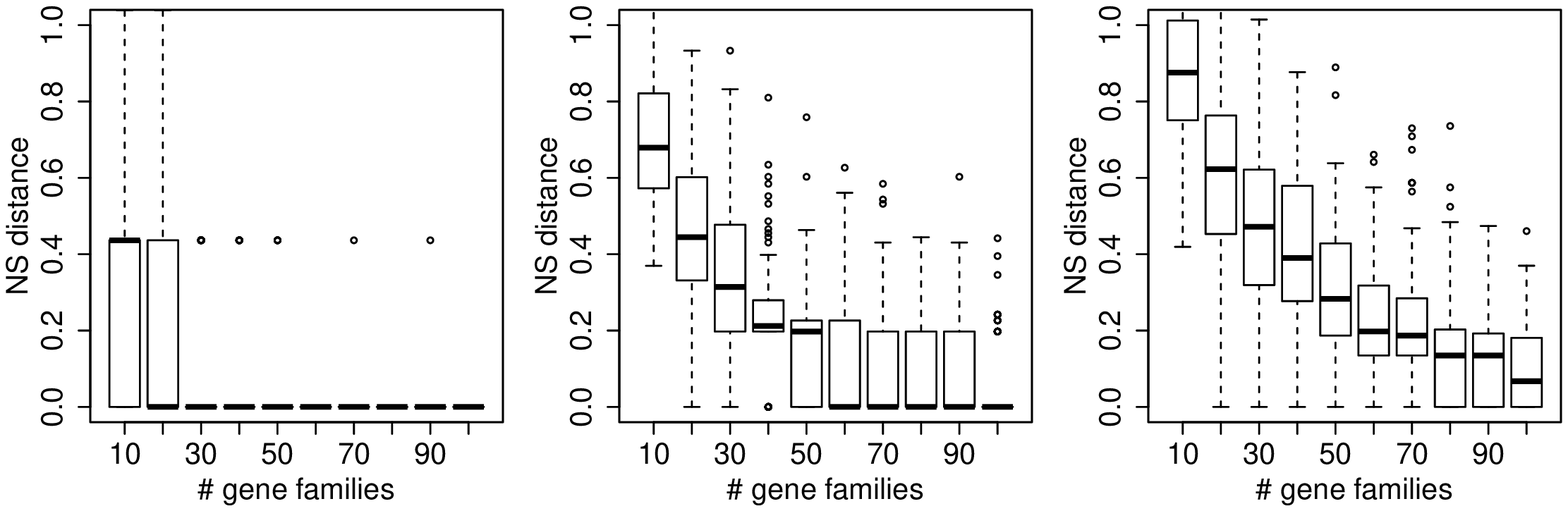}\\
\end{minipage}
\begin{minipage}{0.98\linewidth}
  \centering
  \includegraphics[width=0.95\textwidth]{./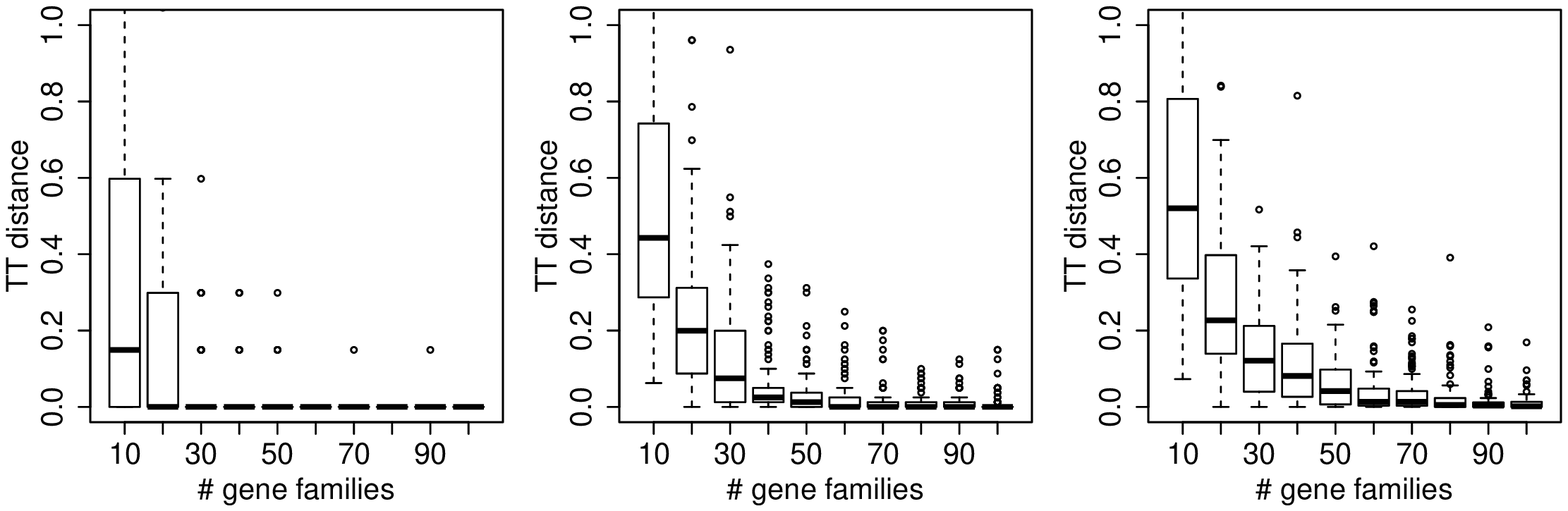}\\
\end{minipage}
\end{center}
\caption{Matching Cluster (MC), Robinson-Foulds (RC), Nodal Splitted (NS)
  and Triple metric (TT) tree distances of 100 reconstructed phylogenetic
  trees with (from left to right) five, ten, and 15 species and 10 to 100 gene
  families, each. Simulations are generated with first simulation method.}
\label{fig:simFamilyFull}
\end{figure*}

\begin{figure*}[tp]
\begin{center}
\begin{minipage}{0.98\linewidth}
  \centering
  \includegraphics[width=0.95\textwidth]{./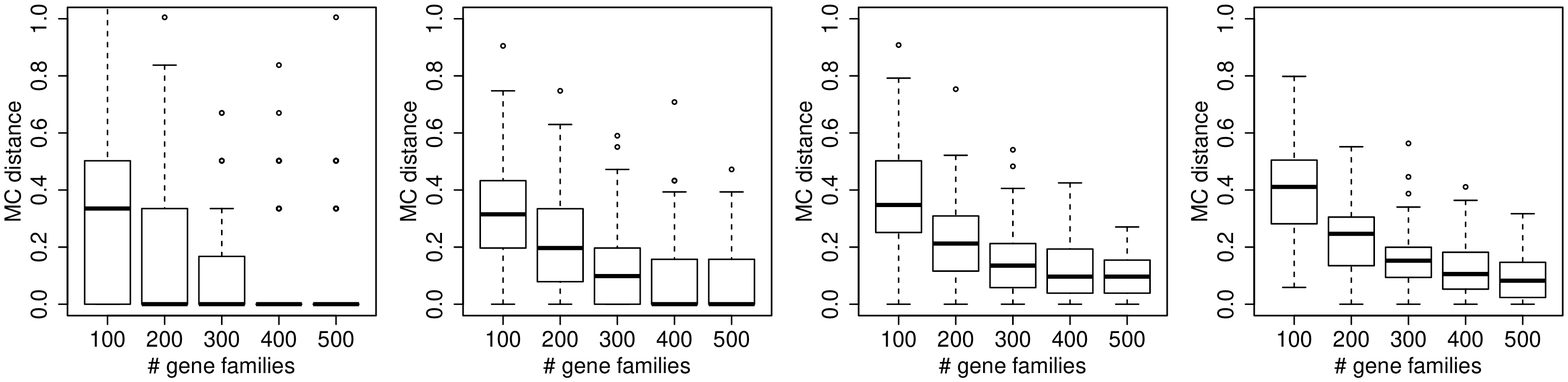}\\
\end{minipage}
\begin{minipage}{0.98\linewidth}
  \centering
  \includegraphics[width=0.95\textwidth]{./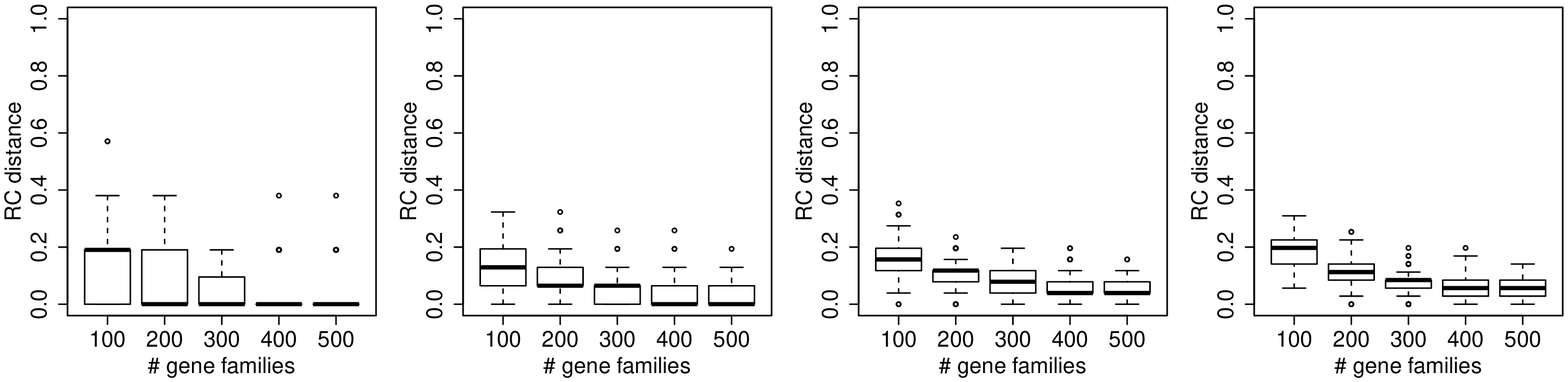}\\
\end{minipage}
\begin{minipage}{0.98\linewidth}
  \centering
  \includegraphics[width=0.95\textwidth]{./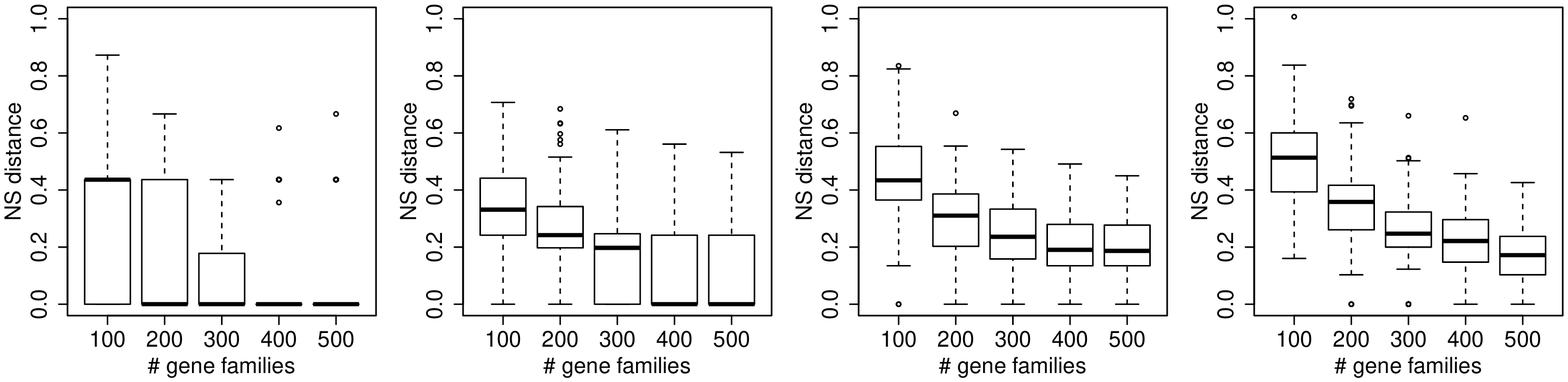}\\
\end{minipage}
\begin{minipage}{0.98\linewidth}
  \centering
  \includegraphics[width=0.95\textwidth]{./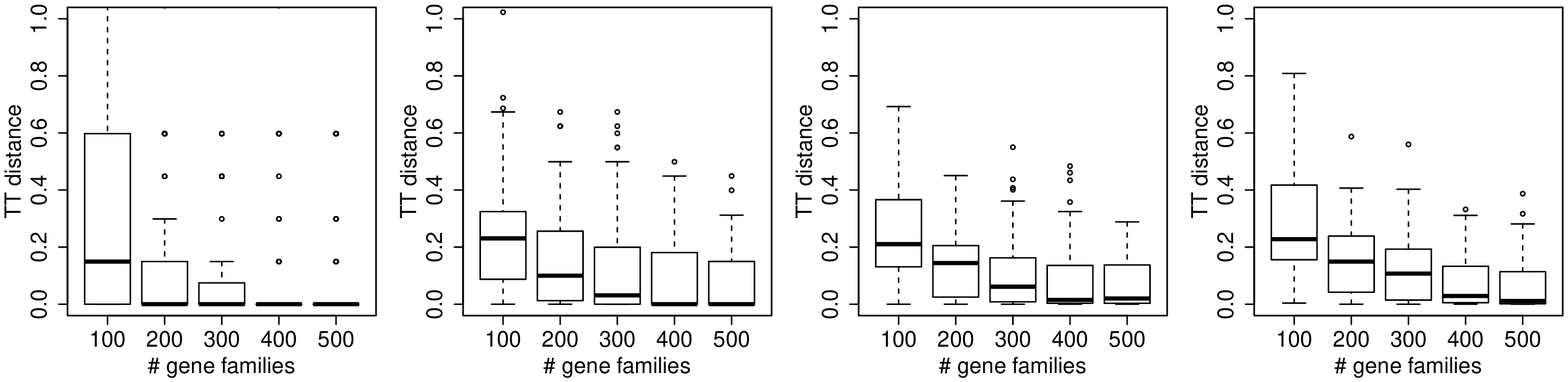}\\
\end{minipage}
\end{center}
\caption{Matching Cluster (MC), Robinson-Foulds (RC), Nodal Splitted (NS)
  and Triple metric (TT) tree distances of 100 reconstructed phylogenetic
  trees with (from left to right) five, ten, 15, and 20 species and 100 to 500 gene
  families, each. Simulations are generated with \ALF.}
\label{fig:simFamilyFullALF}
\end{figure*}

\begin{figure*}[htb]
\begin{center}
\begin{minipage}{0.98\linewidth}
  \centering
  \includegraphics[width=0.95\textwidth]{./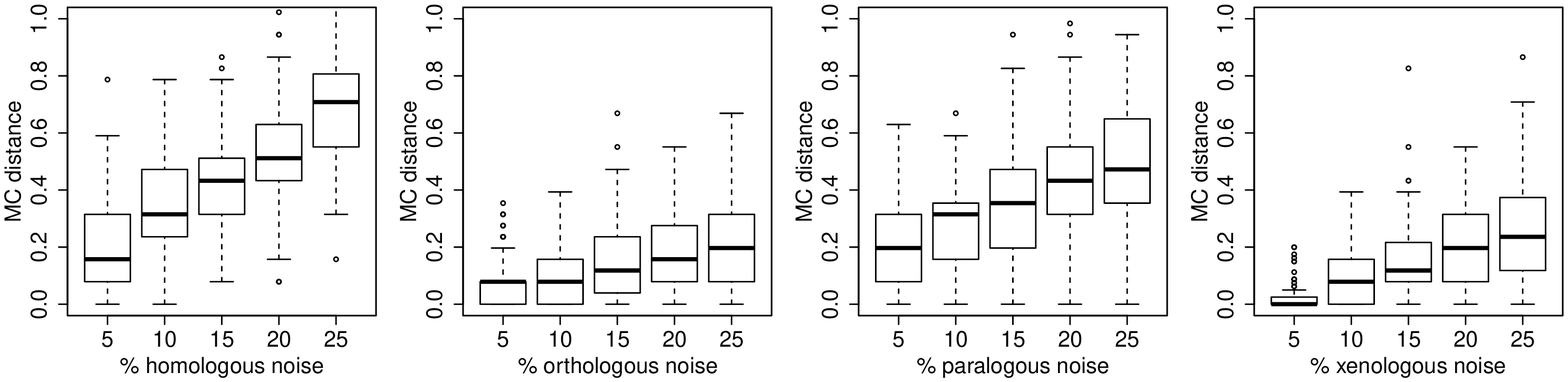}\\
\end{minipage}
\begin{minipage}{0.98\linewidth}
  \centering
  \includegraphics[width=0.95\textwidth]{./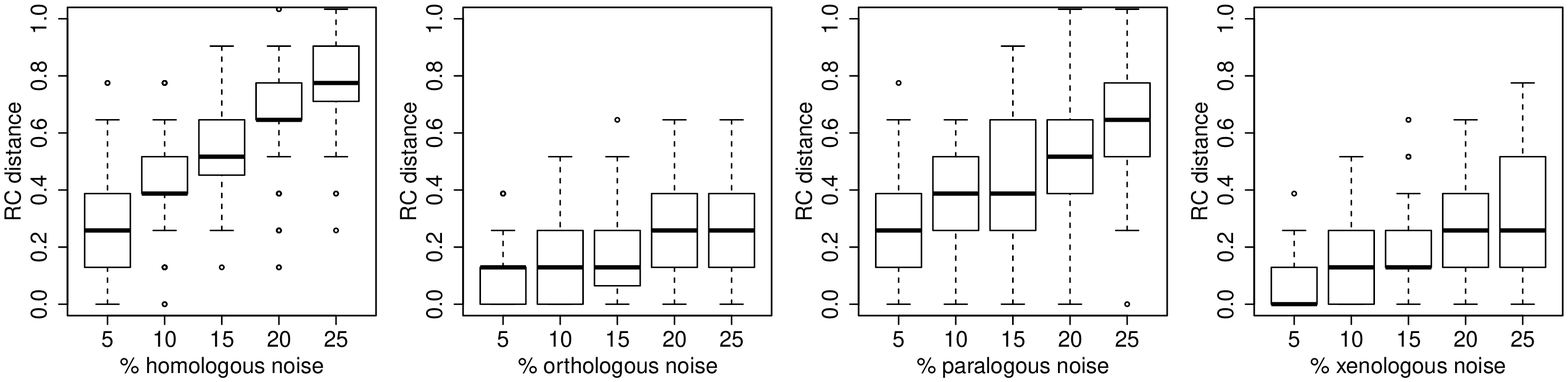}\\
\end{minipage}
\begin{minipage}{0.98\linewidth}
  \centering
  \includegraphics[width=0.95\textwidth]{./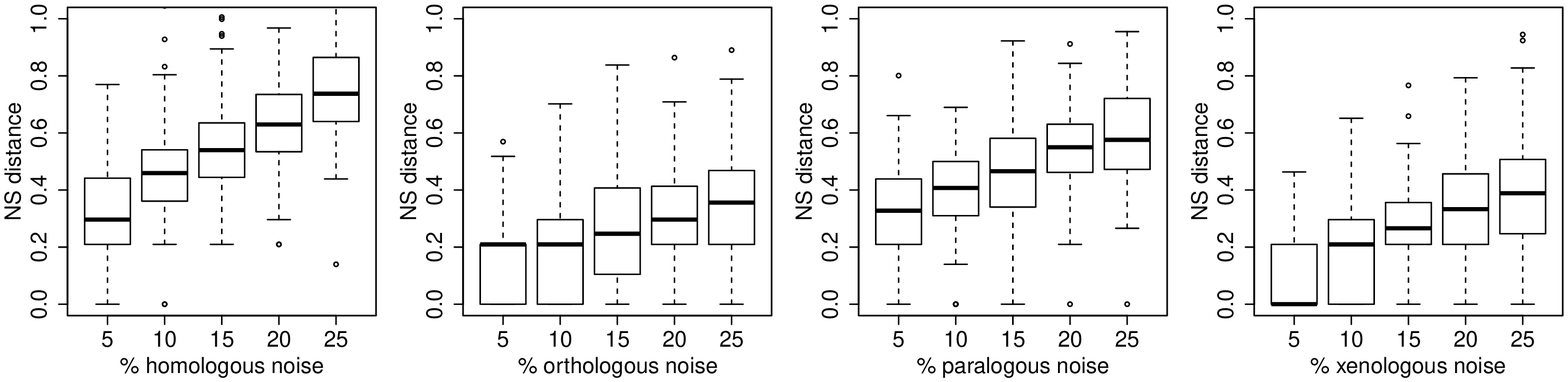}\\
\end{minipage}
\begin{minipage}{0.98\linewidth}
  \centering
  \includegraphics[width=0.95\textwidth]{./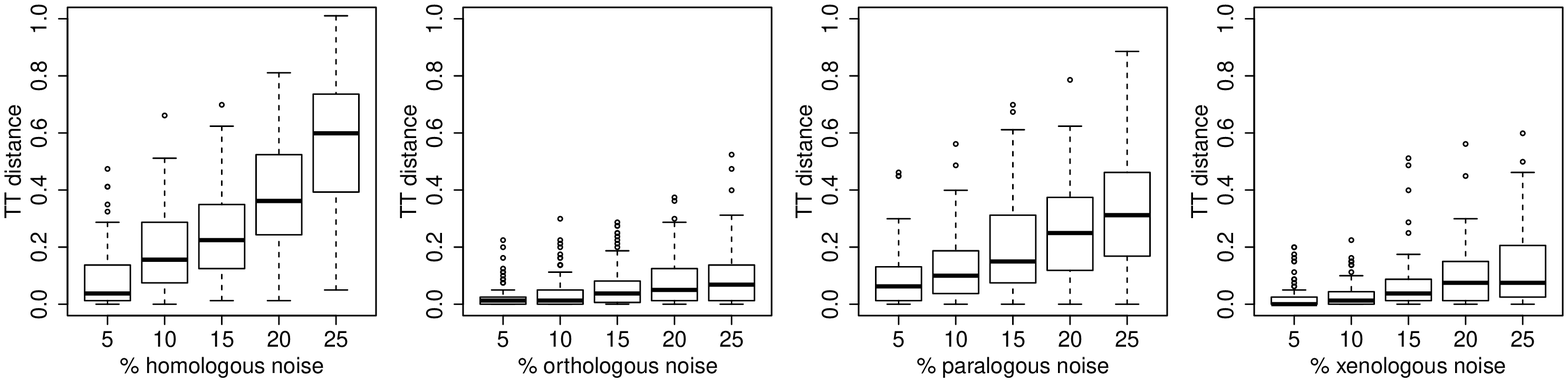}\\
\end{minipage}
\end{center}
\caption{Matching Cluster (MC), Robinson-Foulds (RC), Nodal Splitted (NS)
  and Triple metric (TT) tree distances of 100 reconstructed phylogenetic
    trees with ten species and 100 gene families generated with first simulation method.
    For each model noise was added with a probability of 0.05 to 0.25.}
\label{fig:simNoiseFull}
\end{figure*}

\begin{figure*}[htb]
\begin{center}
\begin{minipage}{0.98\linewidth}
  \centering
  \includegraphics[width=0.95\textwidth]{./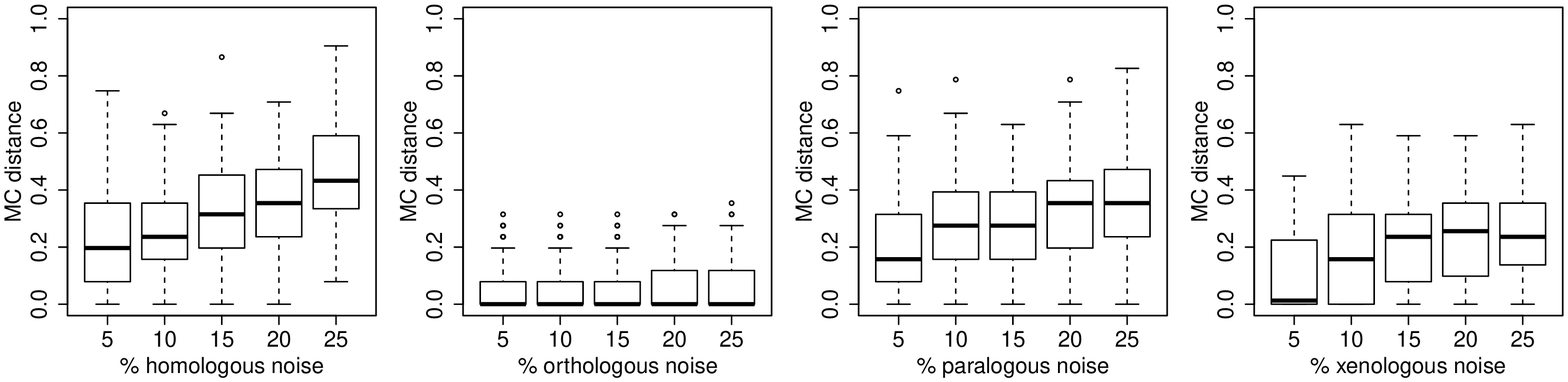}\\
\end{minipage}
\begin{minipage}{0.98\linewidth}
  \centering
  \includegraphics[width=0.95\textwidth]{./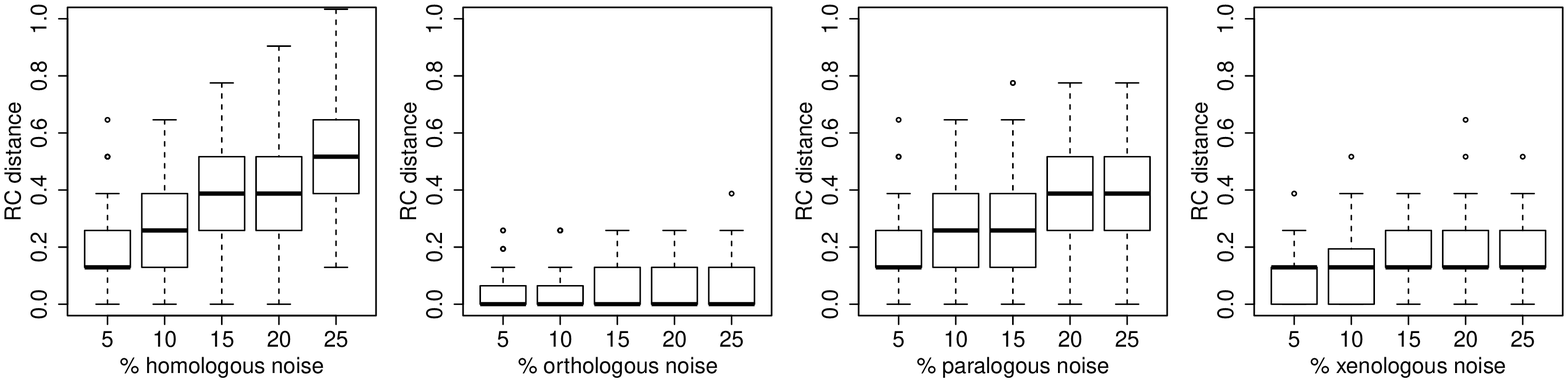}\\
\end{minipage}
\begin{minipage}{0.98\linewidth}
  \centering
  \includegraphics[width=0.95\textwidth]{./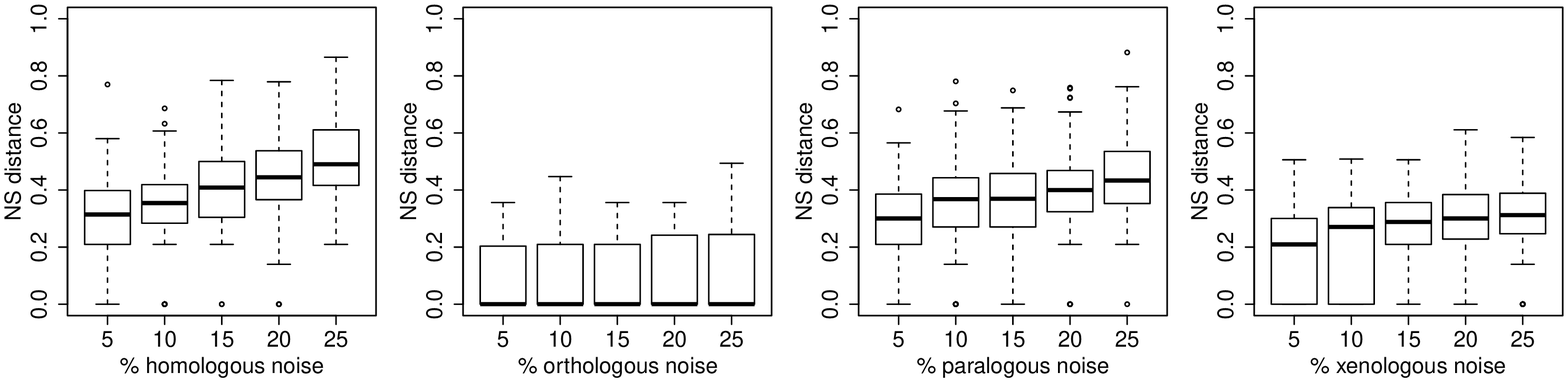}\\
\end{minipage}
\begin{minipage}{0.98\linewidth}
  \centering
  \includegraphics[width=0.95\textwidth]{./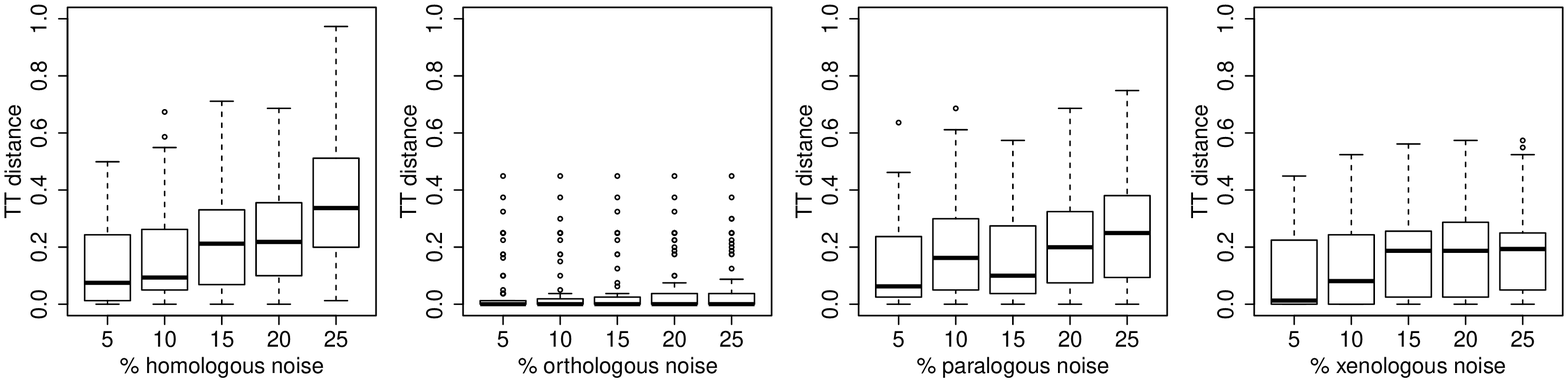}\\
\end{minipage}
\end{center}
\caption{Matching Cluster (MC), Robinson-Foulds (RC), Nodal Splitted (NS)
  and Triple metric (TT) tree distances of 100 reconstructed phylogenetic
    trees with ten species and 1000 gene families generated with \ALF.
    For each model noise was added with a probability of 0.05 to 0.25.}
\label{fig:simNoiseFullALF}
\end{figure*}

\begin{figure*}[tp]
\begin{center}
\begin{minipage}{0.98\linewidth}
  \centering
  \includegraphics[width=0.95\textwidth]{./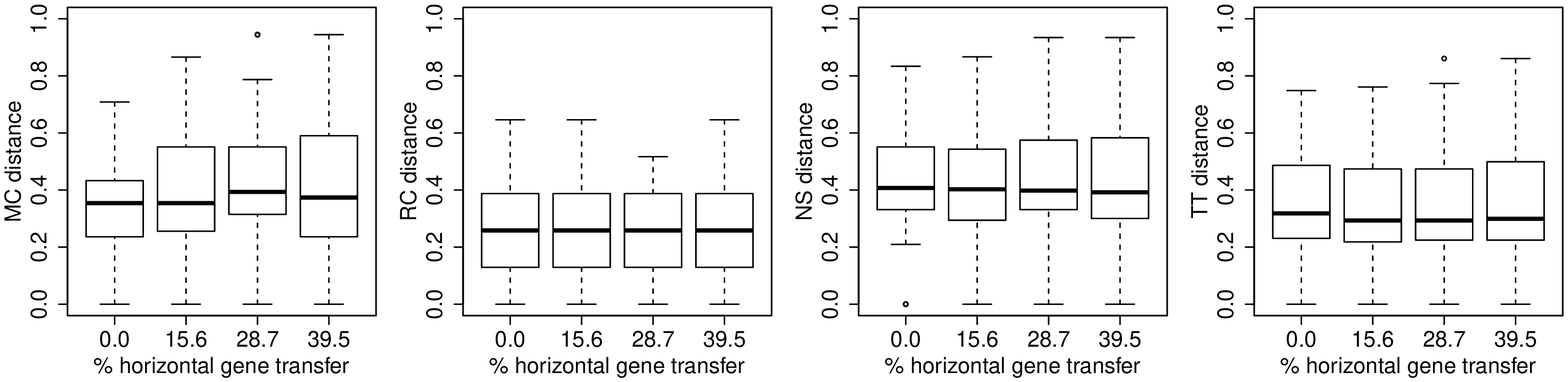}\\
  (A)
\end{minipage}
\vspace{2em}\\
\begin{minipage}{0.98\linewidth}
  \centering
  \includegraphics[width=0.95\textwidth]{./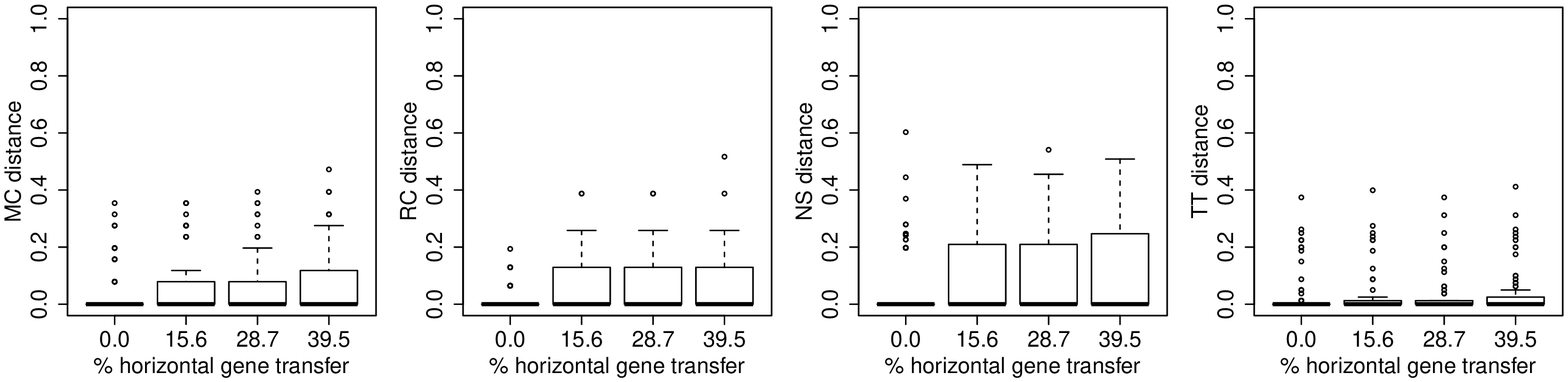}\\
  (B)
\end{minipage}
\vspace{2em}\\
\begin{minipage}{0.98\linewidth}
  \centering
  \includegraphics[width=0.95\textwidth]{./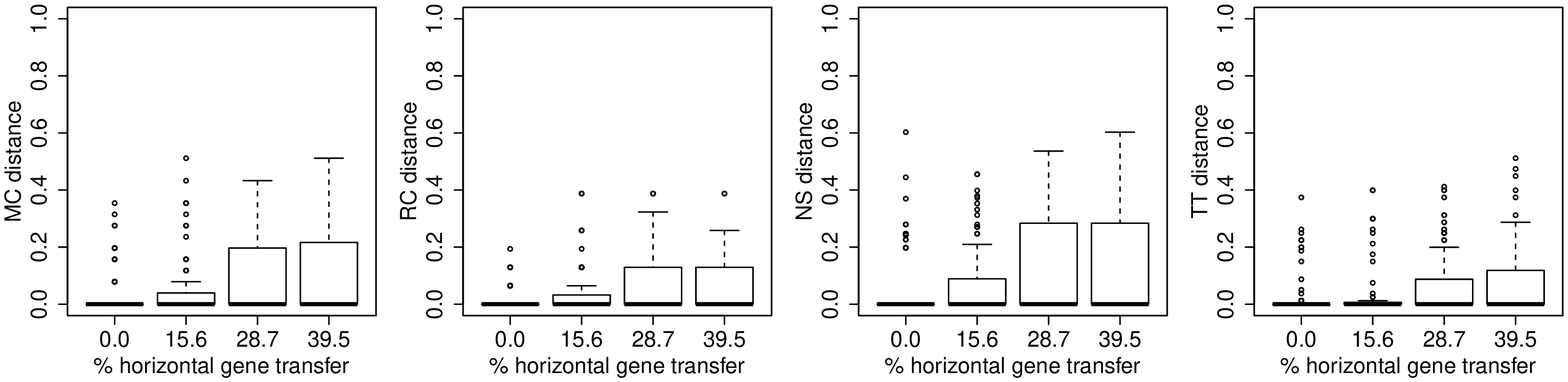}\\
  (C)
\end{minipage}
\end{center}
\caption{Matching Cluster (MC), Robinson-Foulds (RC), Nodal Splitted (NS)
  and Triple metric (TT) tree distances of 100 reconstructed phylogenetic
  trees with ten species. \ALF\ simulations are performed with duplication/loss rates of $0.005 \cong 6.1\%$
  and hgt rates of $0.0025$ to $0.0075$, resulting in xenologous noise between $0.0\%$ to $39.5\%$.
  Reconstructions are based on (A) \texttt{Proteinortho} orthology estimation, (B) perfect paralogy knowledge, and (C) perfect orthology knowledge.}
\label{fig:simXenology}
\end{figure*}

\begin{table}[htb]
  \caption{Running time in seconds on 2 Six-Core AMD Opteron\textsuperscript{\texttrademark} Processors with 2.6GHz for individual sub-tasks:
    \textbf{CE} cograph editing, 
    \textbf{MCS} maximal consistent subset of triples,
    \textbf{LRT} least resolved tree.}{
    \label{tab:runtimeExtended}
    \begin{tabular}{lrrlr}
      \hline \\
      \textbf{Data}   &
      \quad\textbf{CE}\quad     &  
      \quad\textbf{MCS}\quad    &
      \quad\textbf{LRT}\quad    & 
      \quad\textbf{Total\footnote{Total time includes triple extraction, parsing input, and writhing output files.}} \\ 
      Simulations\footnote{Average of 2000 simulations generated with \ALF, 10 species,
        1000 gene families.}
      & $125$\footnote{2,000,000 cographs, 41 not optimally solved within time limit of 30 min.}
      & $<1$ 
      & $<1$\footnote{In $95.95\%$ of the simulations the least resolved tree could be found using \texttt{BUILD}.} 
      & $126$\\
      \emph{Aquificales}  & $34$ 
      & $<1$ 
      & $<1\ (6)$\footnote{A unique tree was obtained using \texttt{BUILD}. Second value indicates running time with ILP solving enforced. \label{footnoteBUILDsupp}}
      & $34$\\
      \emph{Enterobacteriales} & $2673$ 
      & $2$\footnote{Note that the bootstrap computations had a much longer running time ($125$ sec. on average).}
      & $<1\ (1749)^\P$
      & $2676$ \\
\end{tabular}}
\end{table}

\begin{figure*}[htb]
\begin{center}
\includegraphics[width=0.4\textwidth]{./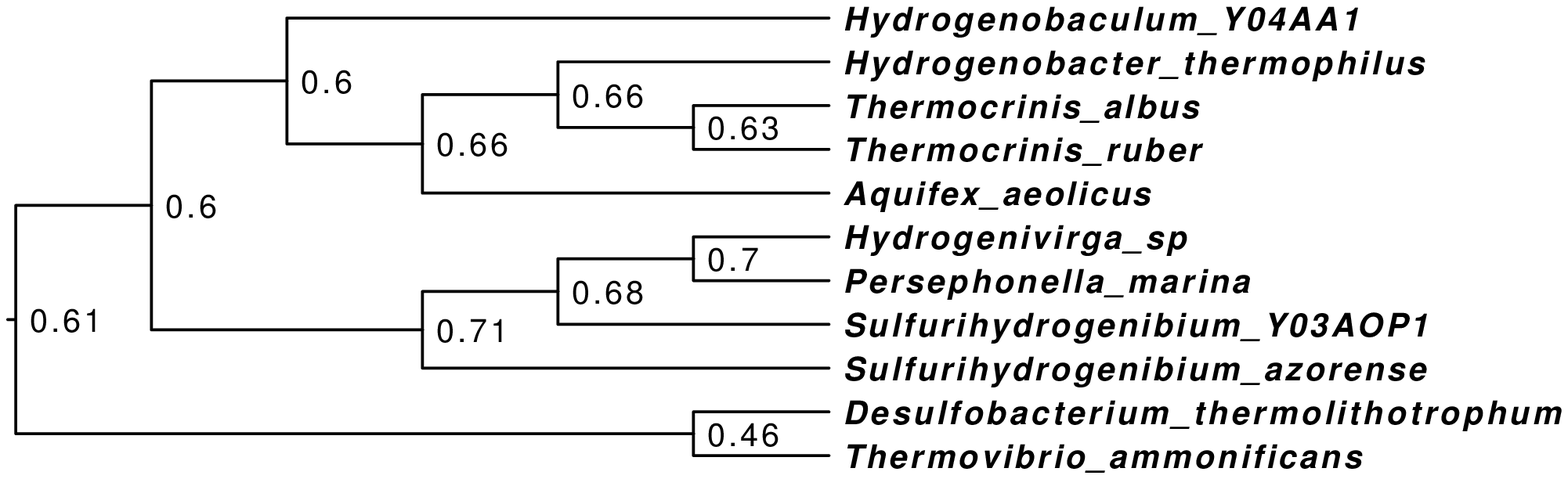}
\hspace{2em}
\includegraphics[width=0.4\textwidth]{./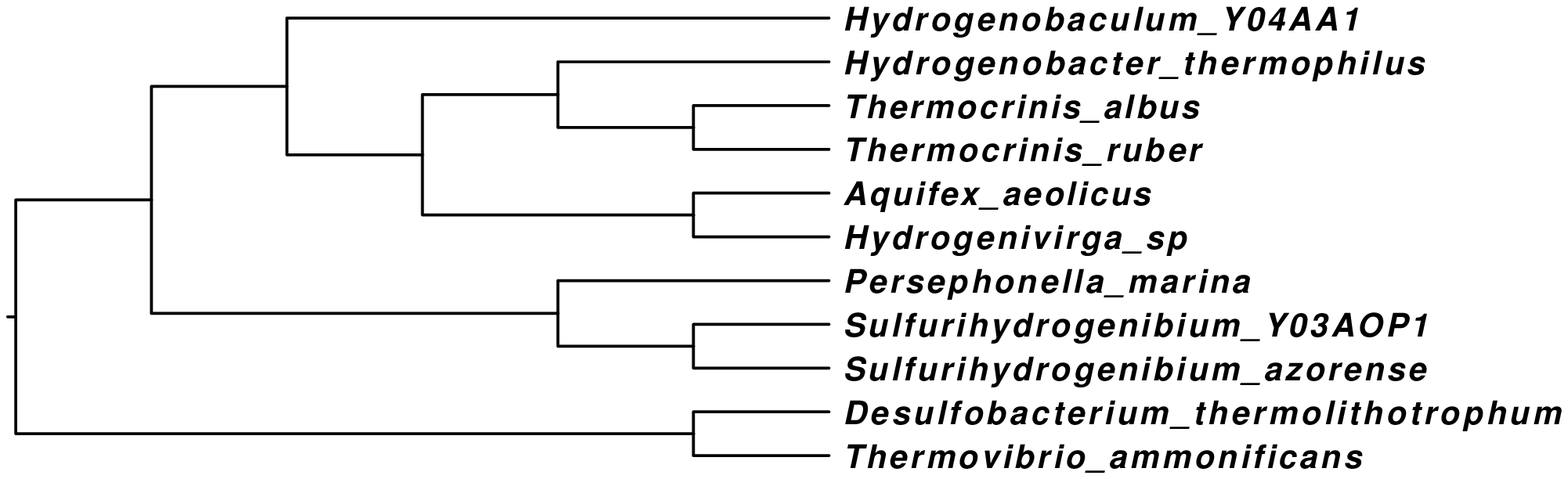}\\
\includegraphics[width=0.4\textwidth]{./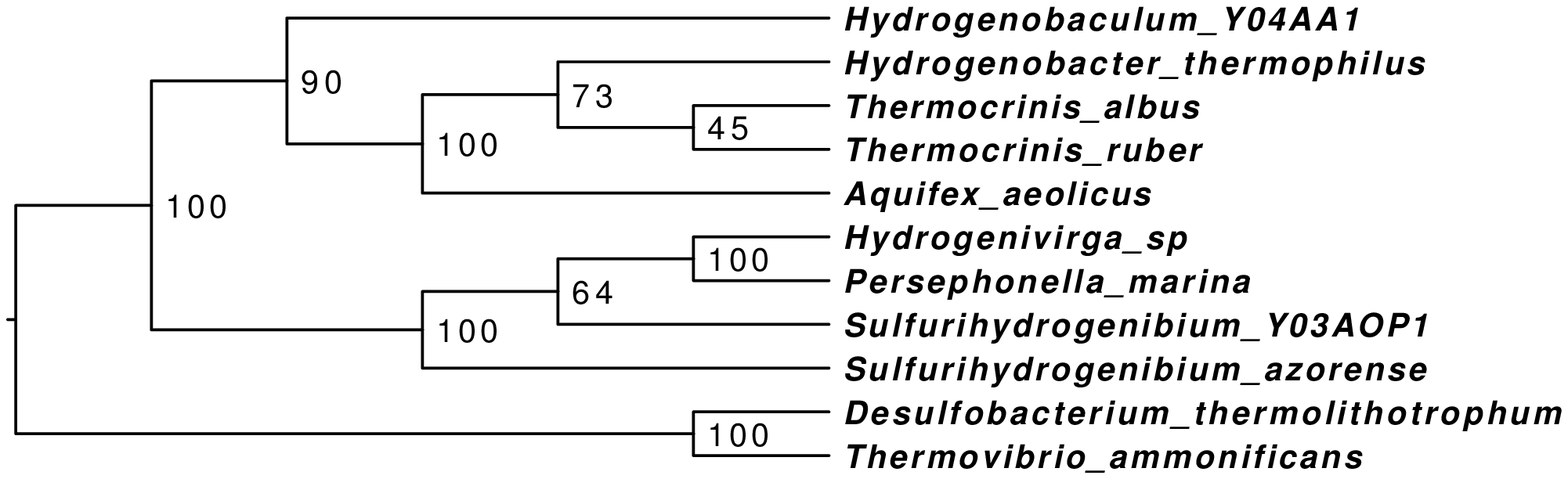}
\hspace{2em}
\includegraphics[width=0.4\textwidth]{./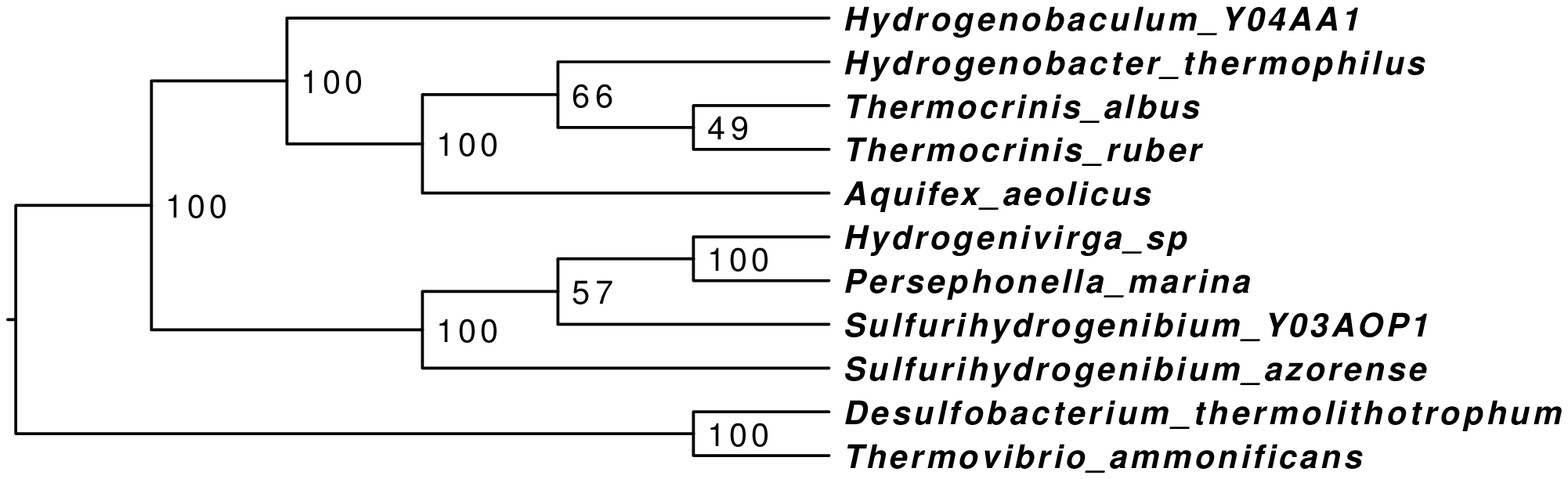}\\
\end{center}
\caption{Phylogenetic tree of eleven \emph{Aquificales} species. Top, L.h.s.:
  tree computed from paralogy data. Internal node labels indicate support
  of subtrees. Top, R.h.s.: reference tree from \citet{Lechner:14b}.
  Cograph-based (bottom, l.h.s.) and triple-based (bottom, r.h.s.) bootstrapping
  trees of the eleven \emph{Aquificales} species.
}
\label{fig:aquificales}
\end{figure*}

\begin{figure*}[htb]
\begin{center}
\includegraphics[width=0.4\textwidth]{./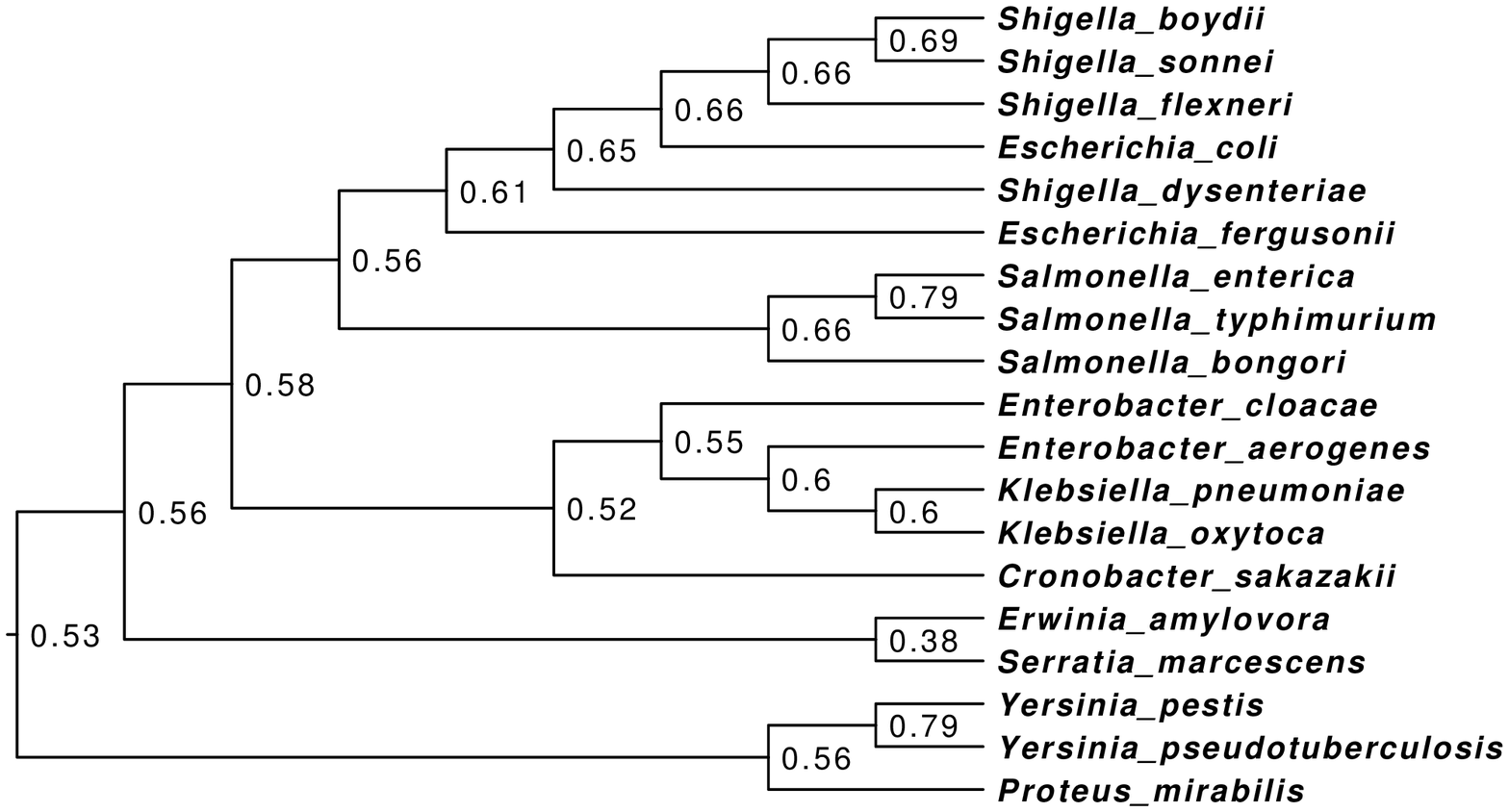}
\hspace{2em}
\includegraphics[width=0.4\textwidth]{./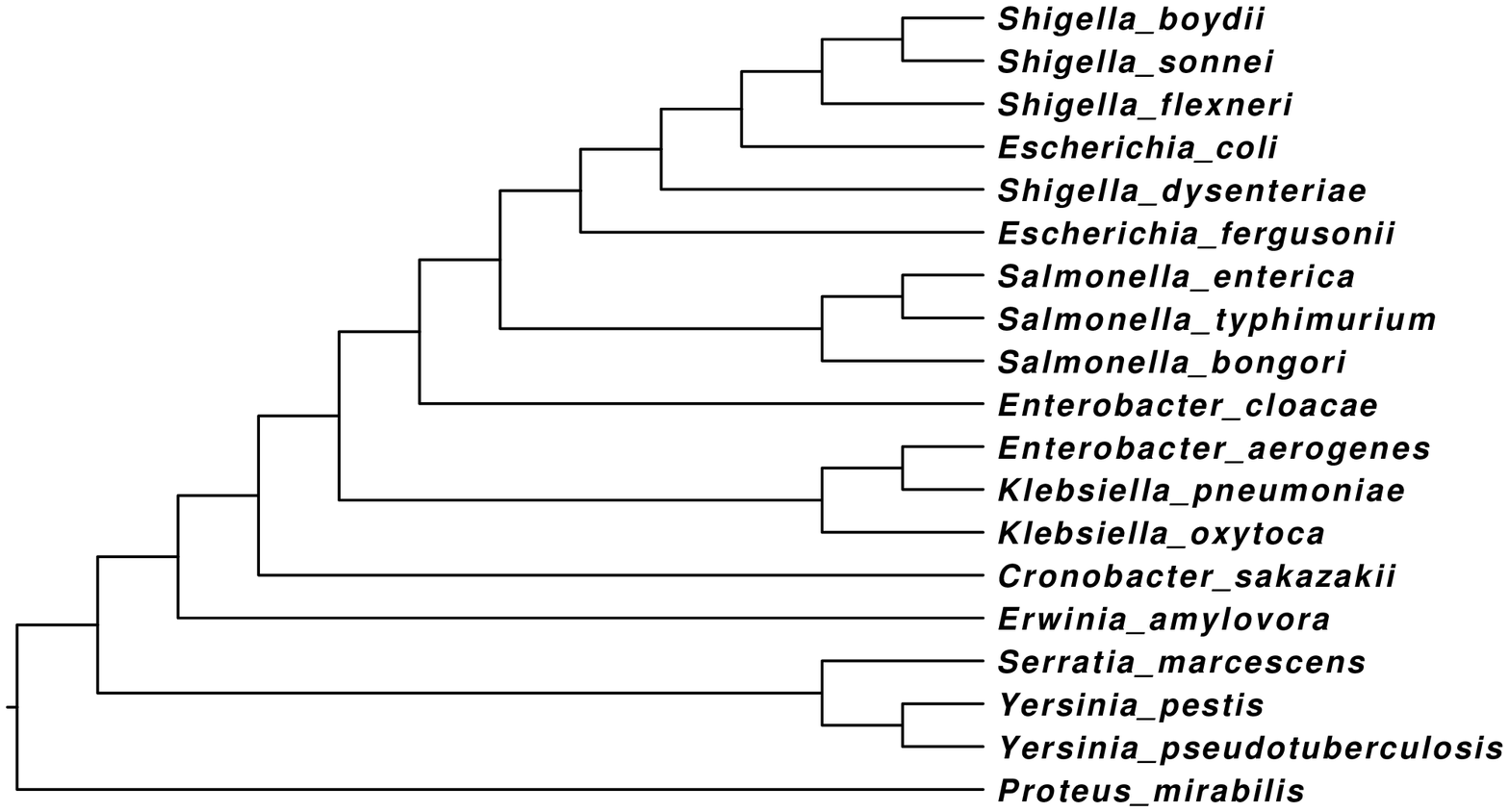}\\
\includegraphics[width=0.4\textwidth]{./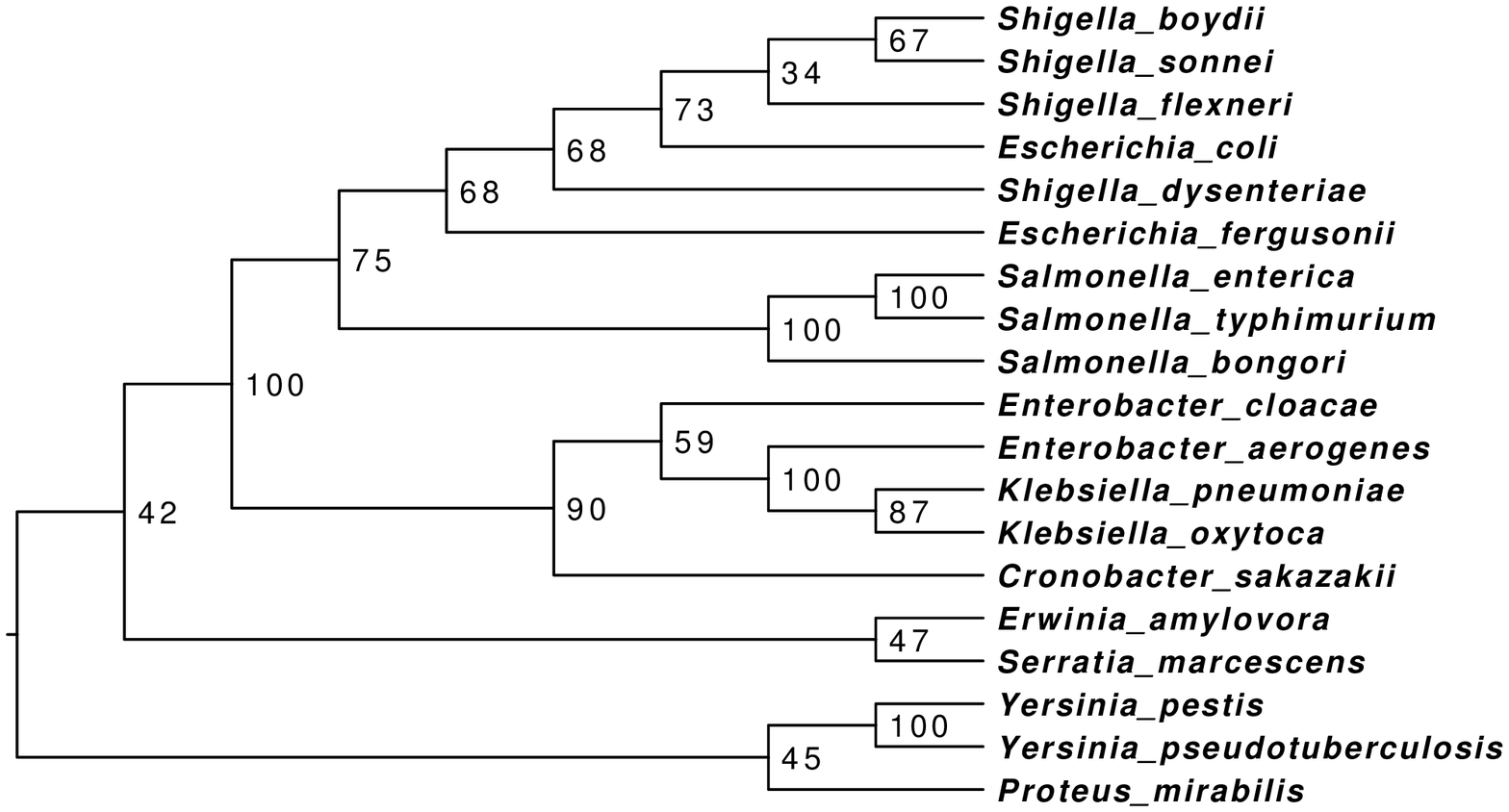}
\hspace{2em}
\includegraphics[width=0.4\textwidth]{./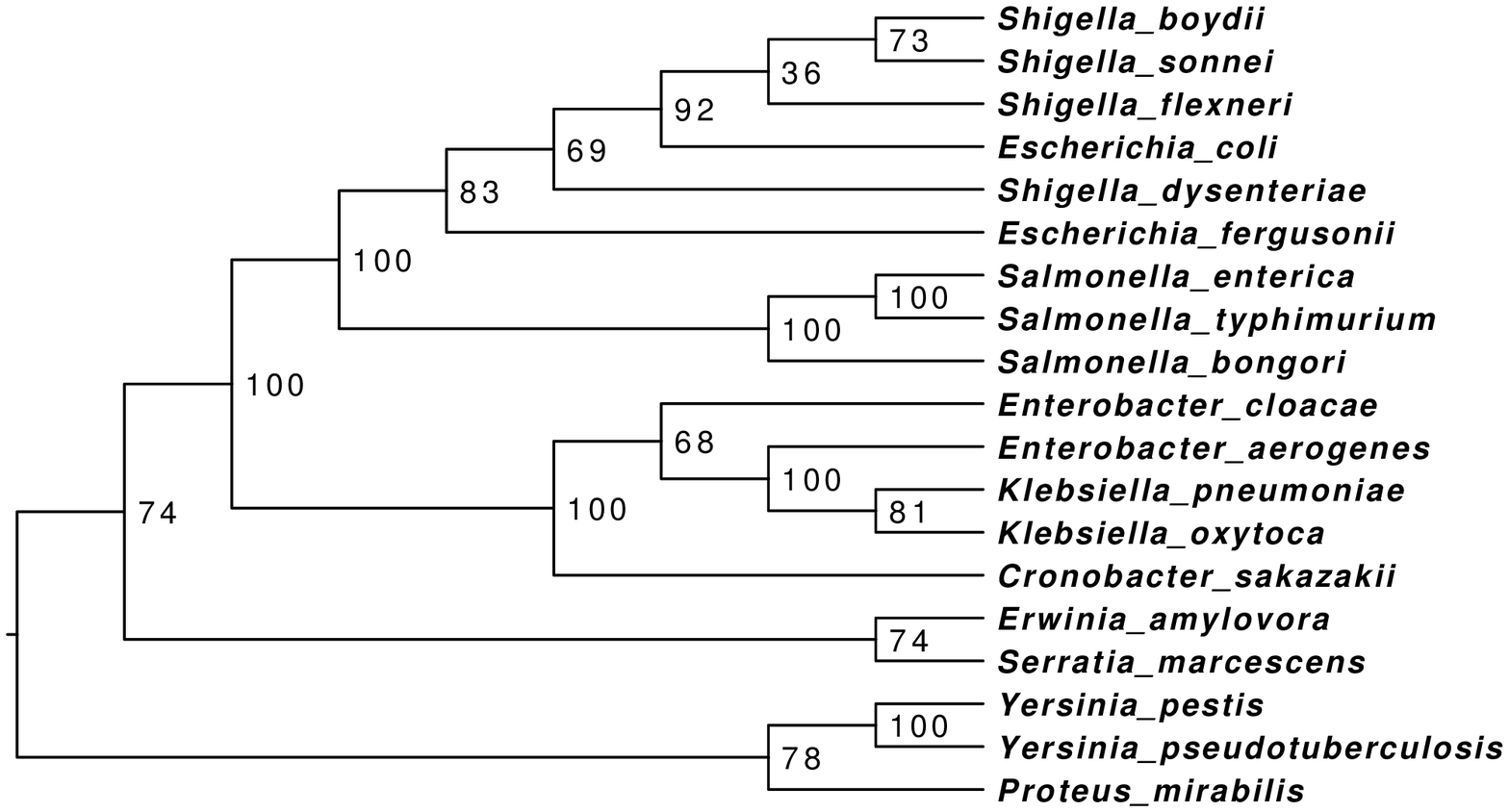}\\
\end{center}
\caption{Phylogenetic trees of 19 \emph{Enterobacteriales} species. Top, L.h.s.:
  tree computed from paralogy data. Internal node labels indicate support
  of subtrees. Top, R.h.s.: reference tree from \texttt{PATRIC} database,
  projected to the 19 considered species. \emph{Salmonella typhimurium} is
  missing in \texttt{PATRIC} tree.
  Cograph-based (bottom, l.h.s.) and triple-based (bottom, r.h.s.) bootstrapping
  trees of the 19 \emph{Enterobacteriales} species.
}
\label{fig:enterobacteriales}
%\begin{center}
%\includegraphics[width=0.4\textwidth]{./SI_trees_enterobacteriales.cboot.eps}
%\hspace{2em}
%\includegraphics[width=0.4\textwidth]{./SI_trees_enterobacteriales.tboot.eps}\\
%\end{center}
%\caption{Cograph-based (l.h.s.) and triple-based (r.h.s.) bootstrapping
%  trees of 19 \emph{Enterobacteriales} species.}
%\label{fig:enterobacterialesBoot}
\end{figure*}

%\renewcommand{\baselinestretch}{0.0} 
%\setlength{\itemsep}{0pt}
%\begin{spacing}{0.0}
%\begin{thebibliography}{100}

\clearpage
\bibliographystyle{unsrt}
\bibliography{biblio}

\end{document}